\def\setof#1{\left\{{\let\st\colon #1 }\right\}}
  \def\calG{\mathcal{G}} 
\def\calC{\mathcal{C}}  \def\calS{\mathcal{S}}
\def\calP{\mathcal{P}} \def\calT{\mathcal{T}} \def\kk{\mathbf{k}} \def\ll{\boldsymbol{\ell}}
\def\gen{\texttt{gen}} \def\genb{\texttt{gen-block}} \def\genp{\texttt{gen-pair}}
\def\bdelta{\boldsymbol{\delta}} 
 \def\BOO={=^{\hspace{0.06cm}\epsilon}_B}
\def\bmu{\boldsymbol{\mu}} \def\bnu{\boldsymbol{\nu}}
  \def\ww{\mathbf{w}}
\def\00{\mathbf{0}}  
\def\uu{\mathbf{u}}  \def\xx{\mathbf{x}} \def\yy{\mathbf{y}}
\def\AA{\mathbf{A}} \def\BB{\mathbf{B}}  
\def\pp{\mathbf{p}}   \def\ss{\mathbf{s}}
 \def\xx{\mathbf{x}} \def\yy{\mathbf{y}} 
   \def\CC{\mathbf{C}}
\def\FF{\mathbf{F}}  \def\calG{\mathcal{G}}
 \def\calS{\mathcal{S}} \def\calP{\mathcal{P}}
\def\calC{\mathcal{C}}  
  \def\calV{\mathcal{V}}
  \def\bgamma{\boldsymbol{\gamma}}
\newtheorem{defi}{Definition}  \def\bmu{\boldsymbol{\mu}}
\newtheorem{lemm}{Lemma} 
\newtheorem{theo}{Theorem} 
\newtheorem{obse}{Observation}
  \def\frX{{\mathfrak{X}}}
\def\frY{{\mathfrak{Y}}}
\def\frS{{\mathfrak{X}}} \def\frT{{\mathfrak{Y}}} \def\11{\mathbf{1}}
 \def\DD{\mathbf{D}} \def\calE{\mathcal{E}}
\def\frR{\mathfrak{R}} \def\frP{\mathfrak{P}} \def\frQ{\mathfrak{Q}}
\def\balpha{\boldsymbol{\alpha}} \def\bbeta{\boldsymbol{\beta}}
\begin{document}

\title{\LARGE A Decidable Dichotomy Theorem on Directed\\
Graph Homomorphisms with Non-negative Weights\vspace{0.7cm}\thanks{We thank the following colleagues for their
    interest and helpful comments: Martin Dyer, Alan Frieze, Leslie Goldberg,\newline
    Richard Lipton, Pinyan Lu, and Leslie Valiant.}}
\author{Jin-Yi Cai\thanks{Computer Science Department, University of Wisconsin-Madison and Beijing University.
Research supported by NSF\newline grants
  CCF-0830488 and CCF-0914969.}
  \and
  Xi Chen\thanks{Computer Science Department, University of Southern California. Part of the research done while the author was\newline a postdoc at
  Princeton University. Supported by NSF grants CCF-0832797, DMS-0635607
  and a USC startup fund.}}
\maketitle \date \thispagestyle{empty}

\begin{spacing}{1.1}
\begin{abstract}
The complexity of graph homomorphism problems has been the subject of
intense study. It is a long standing open problem
to give a (decidable) complexity dichotomy theorem
for the partition function of directed graph homomorphisms.
In this paper, we prove a \emph{decidable} complexity dichotomy theorem
for this problem and our theorem applies to
all non-negative weighted form of the problem: given any fixed
matrix $\AA$ with
%computable
non-negative algebraic entries,
the  partition function $Z_\AA(G)$ of directed graph homomorphisms
from any directed graph $G$ is \emph{either} tractable in
polynomial time \emph{or} \#P-hard, depending on the matrix $\AA$.
The proof of the dichotomy theorem is combinatorial,
but involves the definition of an infi\-nite family
of graph homomorphism problems. The  proof of its decidability
is algebraic using properties of polynomials.
\end{abstract}
\end{spacing}
\newpage\setcounter{page}{1}

\section{Introduction}\vspace{-0.05cm}

The complexity of counting graph homomorphisms has received much attention recently
  \cite{DyerGreenhill,BulatovGrohe,BulatovD03,Bulatov08,acyclic,GoldbergGJT,CCL09}.
The problem can be defined for both
 \emph{directed} and
 \emph{undirected}   graphs.
Most results have been obtained for  \emph{undirected} graphs,
while the study of complexity of
the problem is significantly more challenging
for \emph{directed} graphs.
In particular, Feder and Vardi showed that the decision
  problems defined by directed graph homomorphisms are as general as
  the Constraint Satisfaction Problems (CSPs), and a complexity dichotomy
  for the former would resolve their long standing dichotomy conjecture for all
  CSPs \cite{FederVardi}.\vspace{0.01cm}

Let $G$ and $H$ be two graphs.
We follow the standard definition of graph homomorphisms, where
$G$ is allowed to have multiple edges but no self loops; and $H$ can
have both multiple edges and self loops.\hspace{0.05cm}\footnote{However, our results are actually
stronger in that our tractability result allows for loops in $G$, while
our hardness result holds for $G$ without loops.}
%%%% $G$ is allowed to
%%%% have multiple edges but no self loops; $H$ can have multiple edges and self loops.
%, and more generally
% with possible self loops and multiple edges.
%%%JYC pl check.
%%% I think it is OK to allow G to have self loops and multiple edges too.
%%% if not, pl change back.
%%% to say undirected is a special case of directed we need
%%% at least H allowed to have multiple edges.
We say $\xi:V(G)\rightarrow V(H)$ is
  a graph homomorphism from $G$ to $H$ if $\xi(u)\xi(v)$ is an edge in $E(H)$
  for all $uv\in E(G)$.
Here if $H$ is an \emph{undirected} graph, then
 $G$ is also an undirected graph;
if  $H$ is \emph{directed}, then
 $G$ is also directed.
The undirected problem is a special case of the directed one.
%We can define homomorphisms between \emph{directed} graphs similarly.
%Let $G$ and $H$ be two \emph{directed} graphs, then $\xi:V(G)\rightarrow
  %V(H)$ is a homomorphism from $G$ to $H$ if for all direct edges $uv\in E(G)$,
  %$\xi(u)\xi(v)$ is a direct edge from $\xi(u)$ to $\xi(v)$ in $E(H)$.\vspace{0.015cm}

For a fixed $H$, we are interested in the complexity of the following integer
  function \hspace{-0.05cm}$Z_H(G)$:
The input is a graph $G$, and the output is the number of graph homomorphisms
  from $G$ to $H$.
More generally, we can define $Z_\AA(\cdot)$ for any fixed
 $m\times m$ matrix $\AA=(A_{i,j})$:\vspace{-0.03cm}
$$
Z_\AA(G)=\sum_{\xi:V\rightarrow [m]} \hspace{0.07cm}\prod_{uv\in E} A_{\xi(u),\xi(v)},
  \ \ \ \ \text{for any directed graph $G=(V,E)$.}\vspace{-0.03cm}
$$
Note that the input $G$ is a directed graph in general.
However, if $\AA$ is a symmetric matrix, then one can always view
$G$ as an undirected graph.
Moreover, if $\AA$ is a $\{0,1\}$-matrix, then $Z_{\AA}(\cdot)$ is exactly $Z_{H}(\cdot)$,
  where $H$ is the graph whose adjacency matrix is $\AA$.\vspace{0.015cm}

Graph homomorphisms can express many interesting counting problems over graphs.
For example, if we take $H$ to be an undirected graph over two vertices $\{0,1\}$ with
  an edge $(0,1)$ and a loop $(1,1)$
 at $1$, then a graph homomorphism from $G$ to $H$ corresponds
  to a {\sc Vertex Cover} of $G$, and $Z_H(G)$ is simply the number of vertex covers of $G$.
As another example, if $H$ is the complete graph on $k$ vertices without self loops,
  then $Z_H(G)$ is the number of $k$-{\sc Colorings} of $G$.
In \cite{freedman}, Freedman, Lov\'{a}sz, and Schrijver characterized what graph
  functions can be expressed as $Z_\AA(\cdot)$.\vspace{0.015cm}

%\subsubsection*{Related Work\vspace{-0.1cm}}

For increasingly more general families $\calC$ of matrices $\AA$, the
  complexity of $Z_{\AA}(\cdot)$
  has been studied and \emph{dichotomy} theorems have been proved.
A \emph{dichotomy} theorem for a given family $\calC$ of matrices $\AA$ states that
 for any $\AA\in \calC$, the problem of computing $Z_{\AA}(\cdot)$
  is either \emph{in polynomial time} or \emph{$\#P$-hard}.
A \emph{decidable} dichotomy theorem requires that
  the dichotomy criterion is computably decidable: There is a finite-time
  classification algorithm that, given any $\AA\in \calC$, decides whether
  $Z_\AA(\cdot)$ is in polynomial time or \#P-hard.
Most results have been obtained for undirected graphs.\vspace{0.3cm}

\noindent\textbf{Symmetric matrices $\AA$, and $Z_\AA(G)$ over \vspace{0.01cm}undirected graphs $G$:}\\
In \cite{Hell1, Hell2}, Hell and Ne\v{s}et\v{r}il showed that
  given any symmetric $\{0,1\}$ matrix $\AA$, \emph{deciding} whether $Z_\AA(G)$ $>0$
  is either in P or NP-complete.
Then Dyer and Greenhill \cite{DyerGreenhill}
%studied the counting version of $Z_\AA(\cdot)$.
%They
showed that given
  any symmetric~$\{0,\hspace{-0.05cm}1\}$ matrix $\AA$, the problem of computing $Z_{\AA}(\cdot)$ is either in P or \#P-complete.
Bulatov and Grohe generalized their result to all non-negative
  symmetric matrices $\AA$ \cite{BulatovGrohe}.\footnote{More exactly, they proved a dichotomy
  theorem for symmetric matrices $\AA$ in
  which every entry $A_{i,j}$ is a non-negative algebraic number.
Our result in this paper applies similarly to all non-negative algebraic numbers,
  and throughout the paper we use $\mathbb{R}$ to denote the set of real algebraic numbers.}
%\footnote{More exactly, they proved a dichotomy theorem for all matrices $\AA$ in
%  which every entry $A_{i,j}$ is a non-negative algebraic number.
%Our result in this paper applies similarly to all non-negative algebraic numbers.
%It can be generalized to computable numbers, which will be
%  given in an appropriate form in a field of a finite transcendence degree.
%   But we will not discuss the details
%of this, and use $\mathbb{R}$ and $\mathbb{C}$
%to denote these real and complex numbers.}
%  \footnote{More exactly, they proved a dichotomy theorem for all matrices $\AA$ in
%  which every entry $A_{i,j}$ is a non-negative algebraic number.
%Our result in this paper applies to all non-negative
%computable numbers given in an appropriate form.
% given as belonging to a finite algebraic
%extension of some field which has a finite transcendence degree.
%This involves a field of a finite transcendence degree.
% But we will not discuss the details
%of this, and use $\mathbb{R}$ and $\mathbb{C}$
%to denote the set of real and complex computable numbers.
%to denote these real and complex numbers.
%The idea is similar to the one used in Section 7.2 of \cite{CCL09}.}
%, we will always assume that
%  the entries of $\AA$ are algebraic numbers and use $\mathbb{R}$ and $\mathbb{C}$ to
%  denote the set of real algebraic numbers and complex algebraic numbers, respectively.}.
They obtained an elegant dichotomy theorem which basically says that
  $Z_{\AA}(\cdot)$ is in P if~every~\emph{block} of $\AA$
  has rank at most one, and is \#P-hard otherwise.
%Recently, Goldberg et.~al.~\cite{GoldbergGJT}
In \cite{GoldbergGJT} Goldberg, Grohe, Jerrum and Thurley
%, Jerrum, Grohe and Thurley \cite{GoldbergGJT}
proved a beautiful dichotomy for all
  symmetric real matrices.
Finally, a dichotomy theorem for all
  symmetric  complex matrices was recently proved by Cai, Chen and \newpage \noindent Lu \cite{CCL09}.
We remark that all these dichotomy theorems for symmetric
  matrices above are \emph{polynomial-time} decidable,
  meaning that given any matrix $\AA$, one can decide in polynomial time (in the input size of $\AA$)
   whether $Z_{\AA}(\cdot)$ is in P or \#P-hard.\vspace{0.3cm}

\noindent\textbf{General matrices $\AA$, and $Z_\AA(G)$ over directed\vspace{0.01cm} graphs $G$:}\\
In a paper that won the best paper award at ICALP in 2006,
Dyer, Goldberg and Paterson~\cite{acyclic}
proved a dichotomy theorem for directed graph homomorphism problems $Z_H(\cdot)$, but
restricted to {directed \emph{acyclic} graphs} $H$.
They~in\-troduced the concept of \emph{Lov\'{a}sz-goodness} and
proved that $Z_H(\cdot)$ is
  in P if the graph~$H$ is \emph{layered}\hspace{0.03cm}\footnote{A directed acyclic graph is \emph{layered}
  if one can partition its vertices into $k$ sets $V_1,\ldots,V_k$, for some
  $k\ge 1$, such that every edge goes from $V_i$ to $V_{i+1}$ for some $i:1\le i<k$.\vspace{0.03cm}}
   and \emph{Lov\'{a}sz-good},
and is \#P-hard otherwise.
  The property of Lov\'{a}sz-goodness turns out to be polynomial-time decidable.\vspace{0.01cm}

In \cite{Bulatov08}, Bulatov presented a sweeping dichotomy theorem for all counting Constraint Satisfaction
  Pro\-blems.
Recently Dyer and Richerby~\cite{private2}
  obtained an alternative proof.
The dichotomy theorem of Bulatov then implies a dichotomy
  for $Z_H(\cdot)$ over all directed graphs $H$.
However, it is rather unclear whether~this dichotomy theorem is {decidable} or not.
The criterion~\footnote{A dichotomy criterion is a well-defined mathematical property over
  the family of matrices $\AA$ being considered such that $Z_\AA(\cdot)$ is in P if $\AA$
  has this property; and is \#P-hard otherwise.\vspace{0.02cm}}
  requires one to check a condition
  on an infinitary object (see Appendix \ref{Appen-cri} for details).
This situation remains the same for the Dyer-Richerby proof in \cite{private2}.
%  requires one to check, for \emph{every} directed graph $G^*$ of \emph{any} size,\vspace{0.01cm}
%  whether $H$ satisfies certain rank-$1$ conditions with respect to $G^*$ or not (see Appendix \ref{Appen-cri} for details).
The decidability of the dichotomy was then left
  as an open problem in \cite{BulatovECCC}.\vspace{0.35cm}

In this paper, we prove a dichotomy theorem for the family of all non-negative real matrices $\AA$.~We
  show that for every fixed $m\times m$ non-negative matrix $\AA$,
  %in which every entry $A_{i,j}$ is nonnegative,
%computable
the problem of computing $Z_{\AA}(\cdot)$ is either
  in P or \#P-hard.
Moreover, our dichotomy criterion is \emph{decidable}:\hspace{-0.03cm}
% in finite number of steps}:
we give a finite-time algorithm which,~given any non-negative matrix $\AA$,
  decides whether $Z_{\AA}(\cdot)$ is in P or \#P-hard.
In particular, for the family of $\{0,1\}$ matrices
  our result gives an alternative dichotomy criterion\hspace{0.01cm}\footnote{Both our
  dichotomy criterion (when specialized
  to the $\{0,1\}$ case) and the one
  of Bulatov characterize $\{0, 1\}$ matrices $\AA$ with $Z_\AA(\cdot)$
  in P and thus, they must be equivalent, i.e., $\AA$ satisfies our criterion if and only if it
  satisfies the one of Bulatov.
As a corollary, our result also implies a finite-time algorithm for checking
  the dichotomy criterion of Bulatov \cite{BulatovECCC} (and the
  version of Dyer and Richerby \cite{private2}) for the case of $\{0, 1\}$ matrices $\AA$.}
  %to the one of Bulatov \cite{BulatovECCC} (as well as the
  %simplified version of Dyer and Richerby \cite{private2}), which is decidable.\vspace{0.02cm}
to that of Bulatov \cite{BulatovECCC} and Dyer-Richerby~\cite{private2},
  which is decidable.\vspace{0.02cm}
%Our result implies that the dichotomy criterion of Bulatov \cite{BulatovECCC}
%  (as well as the simplified version of Dyer and Richerby \cite{private2}) for the
%  $\{0,1\}$-case, as described in Appendix \ref{Appen-cri}, is
%  actually decidable.\vspace{0.015cm}

The main difficulty we encountered in obtaining the dichotomy theorem is due to
  the abundance of new intricate but
tractable cases, when moving from acyclic graphs to general directed graphs.
For example, $H$ does not have to be \emph{layered} for the problem $Z_H(\cdot)$ to be tractable
  (see Figure \ref{fig:figurenew} in Appendix \ref{app:figure} for an example).
Because of the generality of directed graphs,
%  it seems impossible to have a clean
%  criteria\hspace{0.06cm}\footnote{The criterion of
%  Bulatov (and the simplified version of Dyer and Richerby) is clean but does not yield any
%  finite-time classification algorithm (see Appendix \ref{Appen-cri}).} (e.g., being Lovasz-good, as used in the acyclic case \cite{acyclic}) to completely characterize
%  all the tractable cases while being easy to check.
  it seems impossible to have a simply stated criterion
  (e.g., Lovasz-goodness, as was used in the acyclic case \cite{acyclic}) which is both
  powerful enough to completely characterize all the tractable cases and also easy to check.
However, we manage to find a dichotomy criterion as well as a finite-time algorithm to
  decide whether $\AA$ satisfies it or not.\vspace{0.02cm}

In particular, the dichotomy theorem of Dyer, Goldberg and Paterson \cite{acyclic}
  for the acyclic case fits into our framework as follows.
In our dichotomy we start from $\AA$ and then define, in each round, a (possibly infinite)
  set of new matrices.
The size of the matrices defined in round $i+1$ is strictly smaller than that of round $i$
  (so there could be at most $m$ rounds).
The dichotomy then is that $Z_\AA(\cdot)$ is in P if and only if every \emph{block} of any
  matrix defined in the process above is of rank $1$ (see Section \ref{sec:intuition} and \ref{sec:sketch}
  for details).
%Let $\AA$ be the adjacency matrix of $H$ which is acyclic and $k$-layered.
%Then at most $k$ rounds are necessary to reach a conclusion about
%  whether $Z_\AA(\cdot)$ is in P or \#P-hard.
For the special acyclic case treated by Dyer, Goldberg and Paterson \cite{acyclic},
  let $\AA$ be the adjacency matrix of $H$ which
  is acyclic and has $k$ layers,
then at most $k$ rounds are necessary to reach a conclusion about whether $Z_\AA(\cdot)=Z_H(\cdot)$
  is in P or \#P-hard.
However, when $H$ has $k$ layers but is not acyclic (i.e., there are edges from layer
  $k$ to layer $1$), deciding whether $Z_\AA(\cdot)$ is in P or \#P-hard becomes
  much harder in the sense that we might need $\gg k$ rounds to reach a conclusion.\newpage\vspace{0.02cm}

%cleaned displayed
%after

%many many more examples
%due to the generality of
%criteria difficult
%manage to get

%%% In fact,
%%% our dichotomy theorem holds for any finite collection of non-negative matrices
%%% $\{\AA_1,\ldots,\AA_k\}$ in which all the $\AA_i$'s share the same \emph{block structure}
%%% (see Section \ref{mainstatement} for details),
%%% and the dichotomy is also decidable in a finite number of steps
%%% in this case.\vspace{0.015cm}

After we circulated a draft of this paper,
Goldberg informed us that she and coauthors~\cite{private1} found
%%% in the bib item we list the authors.
a reduction from  \emph{weighted} counting CSP with \emph{non-negative
rational} weights to the 0-1 dichotomy theorem of Bulatov~\cite{BulatovECCC}.
%Recently, we were informed that Bulatov, Dyer, Goldberg, Jalsenius,
%  Jerrum, and  Richerby found~a reduction to generalize the
%  dichotomy theorem of Bulatov \cite{BulatovECCC} to all \emph{weighted}
%  Counting Constraint Satisfaction Problems with \emph{rational} weights \cite{private1}.
%In particular, it implies a dichotomy for all $Z_{\AA}(\cdot)$, where $\AA$ is any nonnegative rational matrix.
However, the combined result still only
  works for non-negative rational weights
and more importantly, the dichotomy is not known to
be decidable.
% \cite{private2}.
%
%On the other hand, we were also informed that Dyer and Richerby obtained an alternative proof
%  for the dichotomy of Bulatov that is more understandable \cite{private2}.
%However, both results only work for rational weights and the dichotomy
%  is not known to be decidable.

\subsection{Intuition of the Dichotomy: Domain Reduction}\label{sec:intuition}

Let $\AA$ be the $m\times m$ non-negative matrix being considered,
  and $G=(V,E)$ be the input directed graph.
Before giving a more formal sketch of the proofs, we use a simple example to illustrate one of the most
  important ideas of this work: \emph{domain reduction}.\vspace{0.015cm}

For this purpose we also need to introduce the concept of \emph{labeled directed graphs}.
A labeled directed graph $\calG$ over domain $[m]= \{1,2, \ldots,m\}$ is a directed graph, in
  which every directed edge $e$ is labeled with an $m\times m$ matrix $\AA^{[e]}$; and
  every vertex $v$ is labeled with an $m$-dimensional vector $\ww^{[v]}$.
Then the partition function of $\calG$ is defined as
$$
Z(\calG)=\sum_{\xi:V\rightarrow [m]}\hspace{0.1cm}
  \prod_{v\in V} \ww^{[v]}_{\xi(v)} \prod_{uv\in E} \AA^{[uv]}_{\xi(u),\xi(v)}.
$$
In particular, we have $Z_\AA(G)=Z(\calG_0)$ where $\calG_0$ has the same graph structure as
  $G$; every edge of $\calG_0$ is labeled with
the same $\AA$; and every vertex of $\calG_0$
  is labeled with $\11$, the $m$-dimensional all-$1$ vector.\vspace{0.015cm}

Roughly speaking, starting from the input $G$,
  we build (in polynomial time) a \emph{finite} sequence of new \emph{labeled} directed graphs
  $\calG_0,\hspace{0.03cm}\calG_1,\hspace{0.03cm}\calG_2,\ldots,\calG_h$ one by one.
$\calG_{k+1}$ is constructed from $\calG_k$ by using the domain reduction method which we are
  going to describe next.
On the one hand, the \emph{domains} of these labeled graphs sh\-rink along with $k$. This means, the
  size of the edge weight matrices associated with the edges of $\calG_k$ (or equivalently,
  the dimension
 of the vectors associated with the vertices of $\calG_k$) strictly decreases
 along with $k$.
On the other hand, we have $Z(\calG_{k+1})=Z(\calG_{k})$ for all $k\ge 0$
 and thus,
  $$Z_\AA(G)=Z(\calG_0)=\ldots=Z(\calG_h).$$
Since the domain size decreases monotonically, the number of graphs $\calG_k$ in this sequence
  is at most $m$.
To prove our dichotomy theorem, we show that,
either something bad happens which forces us to stop the domain reduction
  process, in which case we show that $Z_\AA(\cdot)$ is \#P-hard;
  or we can keep reducing the domain size until the computation becomes trivial,
  in which case we show that $Z_\AA(\cdot)$ is in P.\vspace{0.015cm}

We say a matrix $\AA$ is
 \emph{block-rank-}$1$ if one can (separately) permute the rows and columns of $\AA$
  to get a~block diagonal matrix in which every block is of rank at most $1$.
%%% We assume it is block-rank-1.
%%% A matrix $\AA$ being block-rank-$1$ means that one can (separately) permute the rows
%%%  and columns of $\AA$ to get a block diagonal
%%%  matrix in which every block is of rank at most $1$.
%Without loss of generality, we can always assume $\AA$ to be block-rank-$1$ since
%  otherwise, by using the dichotomy of Bulatov and Grohe \cite{BulatovGrohe},
%  we can show that $\AA$ is \#P-hard (see Lemma \ref{lemma1}).
If $\AA$ is not block-rank-$1$ we can easily show that $Z_\AA(\cdot)$ is \#P-hard, using the dichotomy of Bulatov
  and Grohe \cite{BulatovGrohe} for symmetric non-negative matrices (see Lemma \ref{lemma1}).
So without loss of generality, we assume $\AA$ is block-rank-1.
For example, let $\AA$ be the $8\times 8$ {block rank-$1$} non-negative matrix
  in Figure \ref{fig:figure1} in Appendix \ref{app:figure} with $16$ positive entries.
Then we use $\calT=\{(A_1,B_1),(A_2,B_2),(A_3,B_3),(A_4,B_4)\}$ to denote the \emph{block structure} of $\AA$, where
$$
\forall\hspace{0.04cm} s\in [4],\ A_s=\{2s-1,2s\},\ B_1=\{1,3\},\ B_2=\{5,7\},\ B_3=\{2,4\}\ \ \text{and}\ \
B_4=\{6,8\},
$$
so that $A_{i,j}>0$ if and only if $i\in A_s$ and $j\in B_s$, for some $s\in [4]$.
Because $\AA$ is block-rank-$1$, there also exist two $8$-dimensional
  \emph{positive} vectors $\balpha$ and $\bbeta$ such that
$$
A_{i,j}=\alpha_i\cdot \beta_j,\ \ \ \ \text{for all $(i,j)$ such that $i\in A_s$ and $j\in B_s$
  for some $s\in [4]$.}
$$
%%% JYC. do we use a form that avoids problem with these nasty real numbers?
%%% eg do we appeal to BulatovGrohe after already having switched primes?
%%% Tou Liang Huan Zhu?

Now let $G=(V,E)$ be the directed graph in Figure \ref{fig:figure2}, where $|V|=6$
  and $|E\hspace{0.03cm}|=6$.
We illustrate the \emph{domain reduction}
  process by constructing the first labeled directed graph $\calG_1$ in the sequen\-ce as follows.
To simplify the presentation, we let $\yy\in [8]^6$ (instead of $\xi:V\rightarrow [8]$)
  denote an assignment, where $y_i\in [8]$ denotes the value of vertex $i$ in Figure \ref{fig:figure2} for every $i\in [6]$.\vspace{0.02cm}

First, let $\yy\in [8]^6$ be any assignment with a nonzero weight:
  $A_{y_i,y_j}>0$ for every edge $ij\in E$.
Since $\AA$ has the~block structure $\calT$, for every $ij\in E$, there exists a unique index $s\in [4]$ such that
  $y_i\in A_s$ and $y_j\in B_s$.
This inspires us to introduce a new variable $x_\ell\in [4]$ for each edge $e_\ell\in E$,
  $\ell\in [6]$ (as shown in Figure~\ref{fig:figure2}).
For every possible assignment of $\xx=(x_1,x_2,\ldots,x_6)\in [4]^6$, we use $Y[\xx]$ to denote
  the set of all possible assignments $\yy\in [8]^6$ such that
  for every $e_\ell=ij$, $y_i\in A_{x_\ell}$ and $y_j\in B_{x_\ell}$.
Now we have
$$
Z_\AA(G)=\sum_{\xx\in [4]^6}\hspace{0.08cm} \sum_{\yy\in Y[\xx]}
  \text{wt}(\yy),\ \ \ \ \text{where $\text{wt}(\yy)= \prod_{ij\in E} A_{y_i,y_j}$.}
$$

Second, we further simplify the sum above by noticing that if $x_2\ne x_3$ in $\xx$, then
  $Y[\xx]$ must be empty because the two edges $e_2$ and $e_3$
  share the same tail in $G$.
In general, we only need to sum over the case when $x_1=x_2=x_3$ and $x_4=x_5$,
  since otherwise the set $Y[\xx]$ is empty.
As a result,
$$
Z_\AA(G)=\sum_{\substack{x_1=x_2=x_3\\x_4=x_5\\x_6}}\hspace{0.08cm} \sum_{\yy\in Y[\xx]}
  \text{wt}(\yy).
$$

The advantage of introducing $x_\ell$, $\ell\in [6]$, is that, once $\xx$ is fixed,
  one can always decom\-pose $A_{y_i,y_j}$ as a product $\alpha_{y_i}\cdot \beta_{y_j}$,
  for all $\yy\in Y[\xx]$ and all $ij\in E$,
since $\yy$ belonging to $Y[\xx]$ guarantees that $(y_i,y_j)$ falls inside
  one of the four blocks of $\AA$.
This allows us to greatly simplify $\text{wt}(\yy)$: If $\yy\in Y[\xx]$, then\vspace{0.05cm}
$$
\text{wt}(\yy)= A_{y_1,y_3}\cdot A_{y_1,y_2}\cdot A_{y_2,y_3}\cdot
  A_{y_3,y_4}\cdot A_{y_3,y_5}\cdot A_{y_5,y_6}
  =\alpha_{y_1}\beta_{y_3}\alpha_{y_1}\beta_{y_2}\alpha_{y_2}\beta_{y_3}\alpha_{y_3}\beta_{y_4}
  \alpha_{y_3}\beta_{y_5}\alpha_{y_5}\beta_{y_6}.\vspace{0.07cm}
$$
Also notice that $Y[\xx]$, for any $\xx$, is a direct product of subsets of $[8]$:
$\yy\in Y[\xx]$ if and only if\vspace{0.05cm}
\begin{eqnarray*}
&y_1 \in L_1 =  A_{x_1},\  y_2 \in  L_2 = A_{x_3} \cap  B_{x_1} = A_{x_1}
\cap B_{x_1},\  y_3 \in L_3 = A_{x_4} \cap A_{x_5} \cap  B_{x_2} \cap
B_{x_3}
= A_{x_4} \cap  B_{x_1}&\\
&y_4 \in L_4 =  B_{x_4}, y_5 \in L_5 =  A_{x_6} \cap  B_{x_4} , y_6 \in L_6
=  B_{x_6}.\vspace{0.3cm}&
%y_1\in L_1=A_{x_1},\ y_2\in L_2=A_{x_1}\cap B_{x_1},\
%y_3\in L_3= B_{x_1}\cap A_{x_4},\ y_4\in L_4=B_{x_4},\ y_5\in L_5=B_{x_4}\cap A_{x_6},\ y_6\in L_6=B_{x_6}.
\end{eqnarray*}
As a result, $Z_\AA(G)$ becomes
\begin{equation}\label{demo1}
Z_\AA(G)=\sum_{x_1,x_4,x_6}\hspace{0.1cm}
\sum_{y_i\in L_i,\hspace{0.05cm}i\in [6]} \left((\alpha_{y_1})^2 \alpha_{y_2}\beta_{y_2}\right)\cdot
  \left((\alpha_{y_3})^2 (\beta_{y_3})^2\right)\cdot \beta_{y_4}\cdot
  (\alpha_{y_5}\beta_{y_5})\cdot \beta_{y_6}.
\end{equation}

Finally we construct the following \emph{labeled} directed graph $\calG_1$ over domain $[4]$.
There are three vertices $a,b$ and $c$, which correspond to $x_1,x_4$
  and $x_6$, respectively;
  and there are two directed edges $ab$ and $bc$.
We construct the weights as follows. The vertex weight vector of $a$ is
$$
w^{[a]}_\ell= \sum_{y_1\in A_\ell,\hspace{0.08cm}
  y_2\in A_\ell\cap B_{\ell}} (\alpha_{y_1})^2\cdot (\alpha_{y_2}\beta_{y_2}),
\ \ \ \ \text{for every $\ell\in [4]$};
$$
the vertex weights of $b$ and $c$ are the same:
\begin{eqnarray*}
w^{[b]}_\ell=w^{[c]}_\ell=\sum_{y \in B_{\ell}}\beta_{y },\ \ \ \ \text{for every $\ell\in [4]$.}
\end{eqnarray*}
The edge weight matrix $\CC^{[ab]}$ of $ab$ is
\begin{eqnarray*}
C^{[ab]}_{k,\ell}=\sum_{y_3\in B_k\cap A_\ell} (\alpha_{y_3})^2(\beta_{y_3})^2,
  \ \ \ \ \text{for all $k,\ell\in [4]$;}
\end{eqnarray*}
and the edge weight matrix $\CC^{[bc]}$ of $bc$ is
\begin{eqnarray*}
C^{[bc]}_{k,\ell}=\sum_{y_5\in B_k\cap A_\ell} \alpha_{y_5} \beta_{y_5},
\ \ \ \ \text{for all $k,\ell\in [4]$.}
\end{eqnarray*}\newpage
\noindent Using (\ref{demo1}) and the definition of
  $Z(\calG_1)$, it is easy to verify that $Z_\AA(G)=Z(\calG_1)$ and thus, we reduced the domain size
  of the problem from $8$ (which is
the number of rows and columns in $\AA$),
 to $4$ (which is the number of blocks in $\AA$).
However, we also paid a high price.
Two issues are worth pointing out here:
\begin{enumerate}
\item Unlike in $Z_\AA(G)$, different edges in $\calG_1$ have \emph{different} edge weight matrices in general.
For example, the matrices associated with $ab$ and $bc$ are clearly different,
  for general $\balpha$ and $\bbeta$.
Actually, the set of matrices that may appear as an edge weight of $\calG_1$, constructed from
{\it all possible} directed graphs $G$ after one round of domain reduction, is
  \emph{infinite} in general.\vspace{-0.1cm}
%% However, they share the same \emph{structure}, e.g.,
%%  the $(k,\ell)$th entry of $\CC^{[ab]}$ is positive iff the
%%  $(k,\ell)$th entry of $\CC^{[bc]}$ is positive.)
%% In particular, for any input directed graph $G$, every edge matrix $\CC$ of $\calG_1$
%%  constructed from $G$ after one round of domain reduction has the following structure:
%%  $C_{k,\ell}>0$ iff $B_k\cap A_\ell\ne \emptyset$.\vspace{-0.05cm}

\item Unlike in $Z_\AA(G)$, we have to introduce vertex weights in $\calG_1$.
Similarly,
  vertices may have different vertex weight vectors, and the set of
  vectors that may appear as a vertex weight of $\calG_1$, constructed from
{\it all possible} $G$ after one round of domain reduction, is
  \emph{infinite} in general.\vspace{0.06cm}
%% Moreover, unlike the edge weights which have the same structure,
%%  the vertex weight vectors may have different structures.
%% For example, the vectors of both $b$ and $c$ are positive (since $\bbeta$ is positive),
%%  while the vector of $a$ only has two positive entries.
\end{enumerate}
It is also worth noticing that even if the matrix $\AA$ we start with is $\{0,1\}$,
  the edge and vertex weights~of $\calG_1$ immediately become \emph{rational} right after the first round
  of domain reduction and we have to deal with rational weights afterwards.
So $\{0,1\}$-matrices are not that special under this framework.\vspace{0.015cm}

These two issues cause us a lot of trouble because we need to carry out the
  domain reduction process for several times, until the computation becomes trivial.
However, the reduction process above crucially used the assumption that
  $\AA$ is block-rank-$1$ (otherwise one cannot replace $A_{i,j}$ with $\alpha_i\cdot\beta_j$).
Therefore, there is no way to continue this process if
  some edge weight matrix in $\calG_1$ is \emph{not} block-rank-1.
To deal with this case, we show that if this happens for some $G$, then $Z_\AA(\cdot)$ is \#P-hard.
Informally, we have\vspace{0.038cm}

\begin{theo}[Informal]\label{informal1}
\hspace{-0.03cm}For any $G$, if one of the edge matrices in $\calG_k$ \emph{(constructed from $G$ after $k$ rounds of domain reductions)}, for some $k\ge 1$, is \emph{not} block-rank-$1$, then
  $Z_\AA(\cdot)$ is \emph{\#P-hard}.\vspace{0.038cm}
\end{theo}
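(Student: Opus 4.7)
The plan is to reduce a known \#P-hard problem to $Z_\AA(\cdot)$ using the specific graph $G$ (and the integer $k$) that witnesses the non-block-rank-$1$ edge matrix as a gadget, and then invoke the Bulatov--Grohe dichotomy for non-negative symmetric matrices to finish.

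First I would prove a small algebraic lemma: if $\CC$ is a non-negative matrix that is \emph{not} block-rank-$1$, then the symmetric non-negative matrix $\DD=\CC\CC^T$ (equivalently, the weighted adjacency matrix of the bipartite double cover of $\CC$, collapsed on one side) is also not block-rank-$1$. This is a direct structural observation: being block-rank-$1$ is equivalent to the bipartite support graph of $\CC$ decomposing into disjoint bicliques on which the weights factor as $\alpha_i\beta_j$, and if that fails for $\CC$ then it must fail for $\DD$ as well. By Lemma \ref{lemma1} (which, following \cite{BulatovGrohe}, handles the symmetric non-negative case), $Z_\DD(\cdot)$ is \#P-hard on undirected graphs.

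Second, and this is the core step, I would build a gadget that embeds $Z_\DD(H^*)$ into $Z_\AA(\cdot)$ for any undirected input graph $H^*$. The construction of $\calG_k$ from $G$ is combinatorially rigid: a single edge $e^*=a^*b^*$ in $\calG_k$ with matrix $\CC^{[e^*]}$ arises from a specific subgraph $\Gamma\subseteq G$, and its endpoints $a^*,b^*$ correspond to specific ``anchor'' vertices in $\Gamma$. Given $H^*$, I would take one fresh copy of $\Gamma$ for each edge of $H^*$ and identify the anchor vertices of these copies according to the vertices of $H^*$, in such a way that after $k$ rounds of domain reduction the resulting $\calG_k(G^{H^*})$ contains $H^*$ with each of its edges realized by two oppositely directed parallel edges both carrying matrix $\CC^{[e^*]}$ (giving a net $\DD=\CC^{[e^*]}(\CC^{[e^*]})^T$ contribution on each undirected edge), together with ``boundary'' contributions that factor into a global constant. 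Using $N$ fresh copies per edge and standard interpolation (or polynomial identification arguments in the $N$ copies) lets me isolate $Z_\DD(H^*)$ up to a known multiplicative constant. Combining with Step 1 then gives the \#P-hardness of $Z_\AA(\cdot)$.

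The main obstacle will be the second step, specifically two technical points. (a) The domain reduction is a \emph{global} operation on $\calG_{k-1}$, so even after fixing the local gadget around $e^*$, one has to check that wiring many copies of $\Gamma$ together does not accidentally merge the corresponding $\calG_k$-edges or alter $\CC^{[e^*]}$; this will require showing that the ``interference'' between different copies can be controlled by sufficiently padding the anchor vertices or by choosing large enough parameters. (b) The vertex weight vectors in $\calG_k(G^{H^*})$ at the anchor vertices will not in general be the all-ones vector $\11$ required by $Z_\DD(H^*)$; I expect to handle this by exploiting that the vertex weight at each anchor vertex is the same (because all anchor vertices come from isomorphic gadget copies) and either absorbing it into the constant, or replacing $\DD$ by a diagonally scaled version $\mathrm{diag}(\ww)\DD\,\mathrm{diag}(\ww)$, which is still non-negative symmetric and still fails to be block-rank-$1$, so Lemma \ref{lemma1} still applies. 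These adjustments are where most of the bookkeeping will live, but the skeleton is the three-step reduction above.
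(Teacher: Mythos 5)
Your high-level shape (get hardness from Bulatov--Grohe for a symmetric matrix derived from the bad edge matrix, then pull it back to $Z_\AA$ by building level-$0$ gadgets and interpolating) matches the paper's, but two of your steps do not hold up as written. First, the local gadget is wrong: two oppositely directed parallel edges between $u$ and $v$, both labeled $\CC$, contribute $C_{\xi(u),\xi(v)}\cdot C_{\xi(v),\xi(u)}$, i.e.\ the \emph{Hadamard} product $\CC\circ\CC^{T}$, not $\CC\CC^{T}$. These differ exactly where it matters: for $\CC=\left(\begin{smallmatrix}1&1\\0&1\end{smallmatrix}\right)$, which is not block-rank-$1$, one has $\CC\circ\CC^{T}=\left(\begin{smallmatrix}1&0\\0&1\end{smallmatrix}\right)$, which is block-rank-$1$, so your gadget yields no hardness. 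To obtain $\CC\CC^{T}$ you need, for each edge of $H^*$, a two-edge path $u\rightarrow w\leftarrow v$ through a fresh middle vertex $w$ of weight $\11$ summed over freely --- this is precisely how the paper proves Lemma~\ref{lemma1} (which, incidentally, already covers non-symmetric non-negative matrices, so your Step~1 re-derives it). But those middle vertices and their $\11$ weights must then be realized \emph{inside} $\calG_k$ by a level-$0$ instance of $Z_\AA$, which is the very thing to be proved. Your fix in (b) is also not automatic: vertex weights arising at level $k$ may be only $\calT$-weakly positive (zero entries can kill the violating block), and each vertex weight appears once per vertex, not once per incident edge, so it is not literally a diagonal conjugation of the edge matrix.

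The deeper gap is the gluing step. The endpoints of $e^*$ in $\calG_k$ are not vertices of $G$: after one round they are equivalence classes of edges of $G$ under the shared-head/shared-tail relation, and after $k$ rounds they are $k$-fold nested classes; moreover the matrix on $e^*$ and the vertex weights of its endpoints are aggregates over whole classes (sums over all incident edge weights at a merging vertex, Hadamard products over all vertices lying only in a class). Identifying ``anchor vertices'' of copies of $\Gamma$ at the $G$ level therefore perturbs the class structure at every level, changing both the topology of $\calG_k$ and the matrices and weights on it; your point (a) is not bookkeeping to be handled by ``padding,'' it is the central difficulty, and no mechanism is offered. The paper does not attempt a one-shot $G$-level gluing. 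Instead it proves Lemma~\ref{reductionlemm}: each pair $(\frS_k,\frT_k)=\genp(\frS_{k-1},\frT_{k-1})$ is polynomial-time reducible to the previous one, via an explicit construction (Appendix~\ref{reduction}) that works for an \emph{arbitrary} labeled target instance and forces the class structure --- hub vertices $u_{i,a},w_{i,a}$ wired in a cycle keep each block $R_i$ connected and consistent, in/out-degrees are padded to a common value $M$ using the \emph{normalized} matrices of Lemma~\ref{techlemma} (itself proved by two rounds of Vandermonde interpolation) so that spurious per-vertex factors equal $1$, and a final interpolation over the polynomially many values of $\text{nvw}(\xi)$ strips the remaining factor. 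Composing these $k$ reductions and applying Lemma~\ref{lemma1} to the bad matrix in $\frT_k$ gives the theorem. To make your plan work you would essentially have to reproduce this per-level, arbitrary-instance reduction; reducing only the special instances obtained by gluing copies of $\Gamma$ gives you control of neither the resulting edge matrices nor the vertex weights.
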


The proof of Theorem \ref{informal1} for $k=1$ is relatively straight forward,
  because every edge weight matrix in $G$ is $\AA$.
However, due of the two issues mentioned earlier, the edge weights and vertex weights of $\calG_1$
  are drawn from infinite sets  in general, and even proving it for $k=2$
  is highly non-trivial.\vspace{0.015cm}

Even with Theorem \ref{informal1} which essentially gives us a dichotomy theorem for
  all non-negative matrices, it is still unclear whether the dichotomy is \emph{decidable} or not.
The difficulty is that, to decide whether $Z_\AA(\cdot)$ is in P or \#P-hard,
  we need to check infinitely
 many matrices (all the edge weight matrices that
  appear in the domain reduction process, from \emph{all possible} directed graphs $G$)
  and to see whether all of them are block-rank-$1$.
To overcome this, we give an algebraic proof using properties of
polynomials.  We manage to show that it is not necessary to check these matrices one by one,
  but only need to check whether or not the entries of $\AA$ satisfy finitely many polynomial constraints.

\subsection{Proof Sketch}\label{sec:sketch}

Without loss of generality, we assume that $\AA$ is a nonnegative block-rank-$1$ matrix.
To show that $Z_\AA(\cdot)$ is either in P or \#P-hard, we use the following two steps.\vspace{0.013cm}

In the first step, we \emph{define} from $\AA$ a finite sequence of pairs:
$$
(\frS_0,\frY_0),(\frS_1,\frY_1),\ldots,(\frS_h,\frY_h),\ \ \ \ \text{for some $h:0\le h<m$,}$$
where $\frX_0=\{\11\}$, $\frY_0=\{\AA\}$ and
  $\11$ denotes the $m$-dimensional all-$1$ vector.
Each pair $(\frS_k,\frT_k)$, $k\in [h]$, is defined from $(\frS_{k-1},\frT_{k-1})$.
Roughly speaking, $\frY_k$ (resp. $\frX_k$) is the set of all edge matrices
  (resp. vertex vectors) that may appear in $\calG_k$, after $k$ rounds of
  domain reductions.
There also exist positive integers\vspace{-0.04cm}
$$
m=m_0>m_1>\ldots>m_h\ge 1\vspace{-0.04cm}
$$
such that every $\frY_k$, $k\in [h]$, is a set of $m_k\times m_k$ non-negative
  matrices;
  and every $\frX_k$, $k\in [h]$, is a set of $m_k$-dimensional non-negative vectors.
Although the sets $\frX_k$ and $\frY_k$ are \emph{infinite} in general (which is
  the reason why we used the word ``\emph{define}'' instead of ``\emph{construct}''),
  the definition of $(\frS_k,\frY_k)$ guarantees the following two properties:\newpage
\begin{enumerate}
\item For each $k\in [h]$, all matrices in
  $\frY_k$ share the same \emph{structure}: $\forall\hspace{0.06cm}\BB,\BB'\in \frY_k$,
  $B_{i,j}>0\hspace{0.05cm} \Leftrightarrow\hspace{0.05cm} B_{i,j}'>0$;\vspace{-0.07cm}

\item Every matrix $\BB$ in $\frY_h$ is a \emph{permutation} matrix.
\end{enumerate}

The definition of $(\frS_k,\frY_k)$ from $(\frX_{k-1},\frY_{k-1})$ can be found in Appendix \ref{construction}.
In Appendix \ref{reduction} we prove that for every $k\in [h]$, if $\BB\in \frY_k$,
  then the problem of computing $Z_{\BB}(\cdot)$ is {polynomial-time reducible}
  to the~com\-putation of $Z_{\AA}(\cdot)$.
%As a corollary of Bulatov and Grohe \cite{BulatovGrohe},
From this, we can obtain the hardness part of our dichotomy theorem:
%we obtain the \emph{hardness} part of our dichotomy theorem:
  If for some $k\in [h]$, there exists a matrix
  $\BB\in \frY_k$ such that $\BB$ is not block-rank-1, then $Z_\AA(\cdot)$
  is \#P-hard.\vspace{0.013cm}

Now we assume that all matrices in $\frY_k$, $k\in [h]$, are block-rank-1.
To finish the proof we only need to show that if this is true,
  then $Z_\AA(\cdot)$ is indeed in P.
To this end, we use the \emph{domain reduction} process to
  construct a sequence of \emph{labeled} directed graphs
  $\calG_1,\ldots,\calG_h$ such that
\begin{enumerate}
\item $Z(\calG_1)=Z_\AA(G)$ and $Z(\calG_{k+1})=Z(\calG_k)$ for all $k:1\le k<h$; and\vspace{-0.07cm}
\item For every $k\in [h]$, we have $\AA^{[e]}\in \frY_k$ for all edges $e$ in $\calG_k$
  and $\ww^{[v]}\in \frX_k$ for all vertices $v$ in $\calG_k$.
\end{enumerate}
This sequence can be constructed in polynomial time, because the
  construction of ${\cal G}_{k+1}$ from ${\cal G}_k$ can be done very efficiently
  as described in Section \ref{sec:intuition}, and also because the number of graphs in
  the sequence is at most $m$.
By the two properties above, we have $Z_\AA(G)=Z(\calG_h)$; and every edge weight matrix $\AA^{[e]}$ in $\calG_h$
  is a \emph{permutation} matrix.
As a result, we can compute $Z_\AA(G)$ in polynomial time since
  $Z(\calG_h)$ can be computed efficiently.\vspace{0.013cm}

This finishes the proof of our dichotomy theorem:
  \hspace{-0.05cm}given any non-negative matrix $\AA$,
  the problem of computing $Z_\AA(\cdot)$ is either in polynomial time or \#P-hard.
Moreover, to decide which case it is, we only need to check whether the matrices in
  $\frY_k$, $k\in [h]$, satisfy the following condition:
\begin{quote}
\emph{\textbf{\emph{The Block-Rank-1 Condition}}}: Every matrix $\BB\in \frY_k$, $k\in [h]$, is block-rank-$1$.
\end{quote}
However, as mentioned earlier, all the sets $\frY_k$, $k\in [h]$, are infinite in general, so
  one cannot check the matrices one by one.
Instead, we express the block-rank-$1$ condition as a finite collection of
  polynomial constraints over $\frY_k$.
The way $(\frX_k,\frY_k)$ is defined from $(\frX_{k-1},\frY_{k-1})$ allows us to
  prove that, to check whether every matrix in $\frY_{k}$ (or every vector in $\frX_k$)
  satisfies a certain polynomial constraint, one only needs to check a finitely
  many polynomial constraints for $(\frX_{k-1},\frY_{k-1})$.
Therefore, to check whether $\frY_{k}$, $k\in [h]$, satisfies the block-rank-$1$
  condition we only need to check a finitely many polynomial constraints
  for $(\frX_0,\frY_0)$.
Since $\frX_0=\{\11\}$ and $\frY_0=\{\AA\}$ are both finite,
  this can be done in a finite number of steps.\vspace{-0.04cm}

\section{Preliminaries}\label{prelim}

We say $\calG=(G,\calV,\calE)$ is a \emph{labeled directed graph} over $[m]=\{1,\ldots,m\}$
  for some positive integer $m$, if
\begin{enumerate}
\item $G=(V,E)$ is a directed graph (which may have parallel edges but no self-loops);
\item Every vertex $v\in V$ is labeled with an $m$-dimensional non-negative vector
  $\calV(v)\in \mathbb{R}_+^{m}$ as its\\ vertex weight; and
\item Every edge $uv\in E$ is labeled with an $m\times m$ (not necessarily
  symmetric) non-negative matrix\\ $\calE(uv)\in \mathbb{R}_+^{m\times m}$
  as its edge weight.
\end{enumerate}

Let $\calG=(G,\calV,\calE)$ be a labeled directed graph, where $G=(V,E)$.
For each $v\in V$, we use $\ww^{[v]}=\calV(v)$ to denote its vertex weight vector;
  and for each $uv\in E$, we use $\CC^{[uv]}=\calE(uv)$ to denote its edge weight matrix.
Then we define $Z(\calG)$ as follows:
$$
Z(\calG)=\sum_{\xi:V\rightarrow [m]} \text{wt}(\calG,\xi),\ \ \ \
\text{where \hspace{0.08cm}$\text{wt}(\calG,\xi)=\prod_{v\in V}\hspace{0.05cm} w^{[v]}_{\xi(v)}\hspace{0.05cm}\prod_{uv\in E}
  C^{[uv]}_{\xi(u),\hspace{0.05cm}\xi(v)}$}
$$
denotes the \emph{weight} of the assignment $\xi$.\vspace{0.015cm}

Let $\CC$ be an $m\times m$ non-negative matrix.
We are interested in the complexity of $Z_\CC(\cdot)$:
$$
Z_\CC(G)=Z(\calG),\ \ \ \ \ \text{for any directed graph $G=(V,E)$,}
$$
where $\calG=(G,\calV,\calE)$ is the labeled directed graph
  with $\calV(v)=\11\in \mathbb{R}_+^m$ for all $v\in V$ and
  $\calE(uv)=\CC$ for all edges $uv\in E$.

\begin{defi}[Pattern and block pattern]
We say $\calP$ is an $m\times m$ \emph{pattern} if $\calP\subseteq [m]\hspace{-0.04cm}\times
  \hspace{-0.04cm} [m]$.
$\calP$ is said to be \emph{trivial} if $\calP=\emptyset$.
A non-negative $m\times m$ matrix $\CC$ is of \emph{pattern} $\calP$, if for all $i,j\in [m]$, we have
  $C_{i,j}$ $>0$ if and only if $(i,j)\in \calP$.
$\CC$ is also called a \emph{$\calP$-matrix}.
We say $\calT$ is an $m\times m$ \emph{block pattern} if
\begin{enumerate}
\item $\calT=\big\{(A_1,B_1),\ldots,(A_r,B_r)\big\}$
  for some $r\ge 0$;\vspace{-0.07cm}
\item $A_i\subseteq [m]$, $A_i\ne \emptyset$, $B_i\subseteq [m]$ and $B_i\ne \emptyset$ for
  all $i\in [r]$; and\vspace{-0.07cm}
\item $A_i\cap A_j=B_i\cap B_j=\emptyset$, for all $i\ne j\in [r]$.
\end{enumerate}
$\calT$ is said to be \emph{trivial} if $\calT=\emptyset$.
A \emph{block pattern} $\calT$ naturally defines a \emph{pattern} $\calP$, where\vspace{-0.04cm}
$$
\calP = \big\{\hspace{0.03cm}(i,j) \hspace{0.08cm} \big| \hspace{0.07cm}
  \exists\hspace{0.06cm}k \in [r]\ \text{such that \hspace{0.04cm}$i \in A_k$ and $j \in B_k$}\big\}.
$$
We also say $\calP$ is \emph{consistent} with $\calT$.
Finally, we say a non-negative $m \times m$ matrix $\CC$ is of \emph{block pattern} $\calT$,
  if $\CC$ is of pattern $\calP$ defined by $\calT$.
$\CC$ is also called a \emph{$\calT$-matrix}.\vspace{0.05cm}
\end{defi}

\begin{defi}
We say an $m\times m$ non-negative matrix $\CC$ is \emph{block-rank-$1$} if
\begin{enumerate}
\item Either $\CC=\mathbf{0}$ is the zero matrix \emph{(}and is of block pattern $\calT=\emptyset$\emph{)}; or\vspace{-0.07cm}
\item $\CC$ is of block pattern $\calT$, for some $m\times m$ block pattern $\calT=\{(A_1,B_1),
  \ldots,(A_r,B_r)\}$ with $r\ge 1$;\\ and for every $k\in [r]$,
  the sub-matrix of $\CC$ induced by $A_k$ and $B_k$ is \emph{(}exactly\emph{)} rank $1$.
\end{enumerate}
Let $\CC$ be a non-negative \emph{block-rank-$1$} matrix of block
  pattern $\calT$.
Then there exists a \emph{unique} pair $(\balpha,\bbeta)$ of
  non-negative $m$-dimensional vectors such that
\begin{enumerate}
\item For every $i\in [m]$, $\alpha_i>0\hspace{0.04cm} \Longleftrightarrow\hspace{0.04cm}
  i\in \bigcup_{k\in [r]} A_k$;
  and $\beta_i>0\hspace{0.04cm} \Longleftrightarrow\hspace{0.04cm} i\in \bigcup_{k\in [r]} B_k$;\vspace{-0.07cm}
\item $C_{i,j}=\alpha_i\cdot \beta_j$ for all $i,j\in [m]$ such that $C_{i,j}>0$; and\vspace{-0.07cm}
\item $\sum_{j\in A_i} \alpha_j=1$, for all $i\in [r]$.
\end{enumerate}
The pair $(\balpha,\bbeta)$ is called the \emph{{(}vector{)} representation} of $\CC$.
Note that we have $\balpha=\bbeta=\00$ when $\CC=\00$.\vspace{0.05cm}
\end{defi}

It is clear that $\calT$ and $(\balpha,\bbeta)$ together uniquely determine a
  non-negative block-rank-$1$ matrix.\vspace{0.006cm}

%The following lemma concerning the complexity
%  of $Z_\CC(\cdot)$ is a corollary of Bulatov and Grohe \cite{BulatovGrohe}.
%The proof can be found in Appendix \ref{AppBulatovG}.\vspace{0.05cm}
The following lemma concerns the complexity of $Z_\CC(\cdot)$.
The proof can be found in Appendix \ref{AppBulatovG}.\vspace{0.05cm}

\begin{lemm}\label{lemma1}
If $\CC$ is not block-rank-$1$, then $Z_\CC(\cdot)$ is \#P-hard.
\end{lemm}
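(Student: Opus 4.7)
The plan is to reduce the directed problem to the symmetric (undirected) one and invoke the Bulatov--Grohe dichotomy for symmetric non-negative matrices \cite{BulatovGrohe}, which implies that $Z_\AA(\cdot)$ is \#P-hard whenever some bipartite connected component of the support of a symmetric non-negative $\AA$ has rank at least $2$.

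The reduction is a standard ``bipartite doubling'' gadget. Set $\DD := \CC\CC^T$, which is symmetric and non-negative. Given an undirected graph $G' = (V', E')$, I would build a directed graph $G$ by introducing, for each edge $e = \{u,v\} \in E'$, a fresh vertex $w_e$ together with two directed edges $u \to w_e$ and $v \to w_e$. Separately summing over the value of any homomorphism at each $w_e$ gives $Z_\CC(G) = Z_\DD(G')$, so computing $Z_\DD$ on undirected inputs reduces in polynomial time to computing $Z_\CC$ on directed inputs.

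Next, I would prove a structural characterization: a non-negative matrix $\CC$ is block-rank-$1$ if and only if every bipartite connected component $(A,B)$ of its support satisfies $\mathrm{rank}(\CC|_{A \times B}) \le 1$. The forward direction uses the fact that a non-negative rank-$1$ matrix necessarily has full-rectangle support, forcing each block of any valid pattern $\calT$ to coincide with a bipartite connected component. The reverse direction just takes the components themselves as the block pattern. Hence, if $\CC$ is not block-rank-$1$, some bipartite connected component $(A, B)$ of its support has $\mathrm{rank}(\CC|_{A \times B}) \ge 2$.

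Finally, I would transfer this structure to $\DD$. Since rows of $\CC$ indexed by $A$ are supported entirely in $B$, the $A \times A$ submatrix of $\DD$ equals $(\CC|_{A \times B})(\CC|_{A \times B})^T$, whose rank equals $\mathrm{rank}(\CC|_{A \times B}) \ge 2$ via the identity $\mathrm{rank}(XX^T) = \mathrm{rank}(X)$ over $\mathbb{R}$. Moreover, $A$ is itself a bipartite connected component of $\DD$'s support graph: the diagonal on $A$ is strictly positive, entries of $\DD$ between $A$ and its complement vanish because rows of $\CC$ in $A$ have support only in $B$, and connectivity inside $A$ is inherited from alternating paths through shared columns in $B$. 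So $\DD$ has a component of rank $\ge 2$, Bulatov--Grohe yields \#P-hardness of $Z_\DD$, and the reduction above transfers the hardness to $Z_\CC$. The main obstacle I anticipate is the structural transfer in these last two paragraphs: it must handle uniformly the two distinct failure modes of being block-rank-$1$, namely the support of $\CC$ admitting no block pattern at all, and admitting one but with some block of rank $\ge 2$. The rank identity for Gram matrices subsumes both.
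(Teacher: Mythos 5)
Your proposal is correct and follows essentially the same route as the paper: the identical gadget (a fresh vertex $w_e$ with two directed edges into it for each undirected edge) realizes $\DD=\CC\CC^T$, and \#P-hardness is then imported from the Bulatov--Grohe dichotomy for symmetric non-negative matrices. The only divergence is how you certify that $\DD$ falls into the hard case---you characterize block-rank-$1$ via connected components of the support and transfer rank through $\mathrm{rank}(\CC\CC^T)=\mathrm{rank}(\CC)$, whereas the paper simply picks two non-proportional rows of $\CC$ with overlapping support and uses strict Cauchy--Schwarz to exhibit an all-positive $2\times 2$ submatrix of $\DD$ of rank $2$; both arguments are valid, and yours in effect also justifies the ``without loss of generality'' step the paper states tersely.
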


Let $\calT$ be an $m\times m$ \emph{non-trivial} block pattern where
  $\calT=\{(A_1,B_1),\ldots,(A_r,B_r)\}$ for some $r\ge 1$.
It defines the following $r\times r$ pattern $\calP=\text{\gen}(\calT)$:
  For all $i,j\in [r]$,
$(i,j)\in \calP$ if and only if $B_i\cap A_j\ne \emptyset.$\vspace{0.018cm}

We also define $\text{\genb}(\calT)$ as follows:
\begin{enumerate}
\item If $\calP=\text{\gen}(\calT)$ is consistent with a block pattern, denoted by $\calT'$,
  then $\text{\genb}(\calT)=\calT'$;\vspace{-0.08cm}
\item Otherwise, we set $\text{\genb}(\calT)=\text{\rm{\emph{false}}}$.
\end{enumerate}
We note that $\calP=\text{\gen}(\calT)$ could be \emph{trivial} even if $\calT$ is \emph{non-trivial}.\vspace{0.012cm}

%It is easy to check that $\text{gen}(\calA,\calB)=(\calA,\calB)$ if and only if
%  all $A_i$ and $B_i$ are singletons.
%Every matrix of type $(\calA,\calB)$ is a (weighted) permutation matrix.

Next, we introduce a generalized version of $Z_\CC(\cdot)$.
Let $m\ge 1$ and $(\mathfrak{P},\mathfrak{Q})$ be a pair in which
\begin{enumerate}
\item $\frP$ is a \emph{finite} and nonempty set of non-negative $m$-dimensional vectors with $\11\in \frP$; and\vspace{-0.08cm}
\item $\frQ$ is a \emph{finite} and nonempty set of $m\times m$ non-negative matrices.
\end{enumerate}
We then use $Z(\cdot)$ to define the function $Z_{\frP,\frQ}(\cdot)$ as follows:
$$
Z_{\frP,\frQ}(\calG)=Z(\calG),
$$
where $\calG=(G,\calV,\calE)$ is a labeled directed graph with
  $\calV(v)\in \frP$ for any vertex $v\in V(G)$; and
  $\calE(uv)\in \frQ$ for any edge $uv\in E(G)$.
As an example, $Z_\CC(\cdot)$ is exactly $Z_{\frP,\frQ}(\cdot)$ with
  $\frP=\{\11\}$ and $\frQ=\{\CC\}$.\vspace{0.025cm}

Finally, let $m\ge 1$ and $(\frS,\frT)$ and $(\frS',\frT')$ be two pairs such that:
\begin{enumerate}
\item $\frS$ and $\frS'$ are two nonempty (and possibly infinite) sets of non-negative
  $m$-dimensional\\ vectors with $\mathbf{1}\in \frS$ and $\11\in \frS'$; and\vspace{-0.08cm}
\item $\frT$ and $\frT'$ are two nonempty (and possibly infinite)
  sets of non-negative $m\times m$ matrices.\vspace{0.06cm}
\end{enumerate}

\begin{defi}[Reduction]
We say $(\frS',\frT')$ is \emph{polynomial-time reducible} to $(\frS,\frT)$
  if for every finite and nonempty~subset $\frP'\subseteq \frS'$ with $\11\in \frP'$ and every
  finite and nonempty subset $\frQ'\subseteq \frT'$, there exist
  a finite and nonempty subset $\frP\subseteq \frS$ with $\11\in \frP$ and a finite
  and nonempty subset $\frQ\subseteq \frT$, such that $Z_{\frP',\frQ'}(\cdot)$ is \emph{polynomial-time reducible} to
  $Z_{\frP,\frQ}(\cdot)$.\vspace{-0.1cm}
\end{defi}

\section{Main Theorems}\label{mainstatement}\vspace{-0.02cm}

We prove a complexity dichotomy theorem for all counting problems $Z_\CC(\cdot)$ where $\CC$
  is any non-negative matrix.
Actually, our main theorem is more general.\vspace{0.03cm}

\begin{defi}
Let $\calP$ be an $m\times m$ pattern.
An $m$-dimensional non-negative vector $\ww$ is said to be\vspace{-0.03cm}
\begin{itemize}
\item[--] \emph{positive:} $w_i>0$ for all $i\in [m]$; and\vspace{-0.1cm}
\item[--] \emph{$\calP$-weakly positive:} for all $i\in [m]$, $w_i>0$ if and only if $(i,i)\in \calP$.
\end{itemize}
We call $(\frS,\frT)$ a \emph{$\calP$-pair} if
\begin{enumerate}
\item $\frS$ is a nonempty \emph{(}and possibly infinite\emph{)} set of
  \emph{positive} and $\calP$\emph{-weakly positive} vectors with $\11\in \frS$;\vspace{-0.1cm}
\item $\frT$ is a nonempty \emph{(}and possibly infinite\emph{)}
  set of $m\times m$ \emph{(}non-negative\emph{)} $\calP$-matrices.
\end{enumerate}
We say it is a \emph{finite} $\calP$-pair if both sets are finite.
We normally use $(\frP,\frQ)$ to denote a finite $\calP$-pair.\vspace{0.005cm}

Similarly, for any $m\times m$ block pattern $\calT$, we can define
  $\calT$\emph{-weakly positive vectors} as well as \emph{$\calT$-pairs} by replacing the $\calP$ above
  with the pattern defined by $\calT$.\vspace{0.03cm}
\end{defi}

We prove the following complexity dichotomy theorem:\vspace{0.02cm}

\begin{theo}[Complexity Dichotomy]\label{maintheorem}
Let $\calP$ be an $m\times m$ pattern for some $m\ge 1$,
  then for any \emph{finite $\calP$-pair} $(\frP,\frQ)$,
  the problem of computing $Z_{\frP,\frQ}(\cdot)$
  is either in polynomial time or \#P-hard.\vspace{0.05cm}
\end{theo}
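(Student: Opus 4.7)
The plan is to prove Theorem \ref{maintheorem} by strong induction on the domain size $m$. For the base case $m = 1$, every matrix in $\frQ$ is a nonnegative scalar, so $Z_{\frP,\frQ}(\calG)$ factors as a product of scalars over vertices and edges and is computable in linear time.

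For the inductive step, fix a finite $\calP$-pair $(\frP,\frQ)$ of domain size $m \geq 2$. First I would inspect each $\BB \in \frQ$: if some $\BB$ is not block-rank-$1$, then Lemma \ref{lemma1} gives that $Z_{\BB}(\cdot)$ is \#P-hard, and since $Z_{\BB}(\cdot)$ trivially reduces to $Z_{\frP,\frQ}(\cdot)$ (take the labeled graph whose every vertex carries weight $\11 \in \frP$ and every edge carries weight $\BB \in \frQ$), the hardness transfers and we are done. So assume every $\BB \in \frQ$ is block-rank-$1$. Since all matrices in $\frQ$ share the same pattern $\calP$, they share the same block pattern $\calT = \{(A_1, B_1), \dots, (A_r, B_r)\}$, and each $\BB$ admits a vector representation $(\balpha^{[\BB]}, \bbeta^{[\BB]})$. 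Next I would single out the \emph{permutation case}: if $r = m$ with $|A_k| = |B_k| = 1$ for every $k$, then each $\BB \in \frQ$ is a positively weighted permutation matrix, and $Z_{\frP,\frQ}(\calG)$ is computable in polynomial time by processing weakly connected components of $\calG$, fixing the assignment of one vertex per component, and propagating through the forced permutation while closing cycles as consistency checks.

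In the remaining case one has $r < m$ strictly, since $r \leq \sum_k |A_k| \leq m$ with equality forcing the permutation case above. Here I would carry out the \emph{domain reduction} of Section \ref{sec:intuition} uniformly: given any input labeled directed graph $\calG$ using vectors from $\frP$ and matrices from $\frQ$, I construct in polynomial time a new labeled directed graph $\calG'$ on the smaller domain $[r]$ satisfying $Z(\calG') = Z(\calG)$. The construction introduces a variable $x_e \in [r]$ for each edge $e$ of $\calG$, uses the factorisation $B_{i,j} = \alpha_i^{[\BB]} \beta_j^{[\BB]}$ on the support of each $\BB$, enforces at every vertex $v$ the consistency that all in-edges share one $B$-index and all out-edges share one $A$-index, and absorbs $\ww^{[v]}$ together with the relevant $\alpha$ and $\beta$ coordinates into a new $r$-dimensional vertex weight supported on the intersection $A_\ell \cap B_{\ell'}$ dictated by the $x_e$. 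The ambient pattern becomes $\calP' = \gen(\calT)$, and all derived vectors and matrices live inside a (generally infinite) $\calP'$-pair $(\frS_1,\frT_1)$ formally defined in Appendix \ref{construction}, with the identity $Z(\calG') = Z(\calG)$ verified in Appendix \ref{reduction}.

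To close the induction, note that any single input $\calG$ uses only finitely many weights, so $\calG'$ draws from some finite sub-pair $(\frP',\frQ') \subseteq (\frS_1,\frT_1)$ with $\11 \in \frP'$. By the inductive hypothesis at domain size $r$, either $Z_{\frP',\frQ'}(\cdot)$ is in polynomial time for every such finite sub-pair, in which case $Z_{\frP,\frQ}(\cdot) \in \mathrm{P}$, or some finite sub-pair is \#P-hard, in which case some $\BB' \in \frT_1$ fails block-rank-$1$; the reduction of Appendix \ref{reduction}, which realises $\BB'$ as the edge weight of an explicit gadget built from $(\frP,\frQ)$, then transfers this hardness back to $Z_{\frP,\frQ}(\cdot)$. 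The main obstacle I expect is formalising this two-way correspondence: the forward direction requires that every edge matrix and vertex vector produced by the reduction actually lies in $(\frS_1,\frT_1)$, while the backward direction requires that every $\BB' \in \frT_1$ be realisable by a concrete gadget over $(\frP,\frQ)$. The asymmetric treatment of in- and out-edges at a vertex (leading to constraints of the form ``$\xi(v) \in A_\ell \cap B_{\ell'}$'') is exactly what forces the definition of a $\calP$-pair to admit $\calP$-weakly positive, rather than strictly positive, vertex vectors, and keeping this bookkeeping coherent across rounds is the principal technical burden.
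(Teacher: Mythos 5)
Your skeleton matches the paper's strategy (hardness via Lemma \ref{lemma1} transferred through the reducibility of the derived pair, domain reduction using the block-rank-$1$ factorization $B_{i,j}=\alpha_i\beta_j$, and a trivial endgame when all blocks are singletons), but the way you close the induction has a genuine gap on the tractability side. Your Case A reads: if $Z_{\frP',\frQ'}(\cdot)$ is in polynomial time for \emph{every} finite sub-pair $(\frP',\frQ')$ of the derived pair $(\frS_1,\frT_1)$, then $Z_{\frP,\frQ}(\cdot)$ is in polynomial time. This does not follow. The finite sub-pair your reduction produces depends on the input $\calG$ (the number of distinct derived edge matrices and vertex vectors grows with $\calG$, since each is assembled from the local in/out structure of a vertex), while the inductive hypothesis only supplies, for each \emph{fixed} finite pair at domain size $r$, some polynomial-time algorithm whose degree and constants may depend on that pair, with no uniformity and no effective way to select it. So ``in P pair-by-pair'' does not assemble into one polynomial-time algorithm for $Z_{\frP,\frQ}(\cdot)$, whose derived instances range over the infinite family $(\frS_1,\frT_1)$. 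This is precisely why the paper does not invoke a dichotomy at the smaller domain: it iterates \genp\ on the \emph{input-independent} pairs $(\frS_k,\frT_k)$ down to singleton blocks, demands block-rank-$1$ at every level (otherwise Lemma \ref{reductionlemm} plus Lemma \ref{lemma1} gives \#P-hardness), and then proves one uniform algorithm (Lemma \ref{tractabilitylemma}) that carries any input through the whole chain. To repair your induction you would have to replace the appeal to the inductive dichotomy by exactly this: an inductively maintained structural condition (every matrix at every level is block-rank-$1$) together with a uniform tractability argument, which is the content of condition \textbf{(R)} and Lemma \ref{tractabilitylemma} and is left unproved in your proposal.

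A secondary point: in Case B you assert that \#P-hardness of some finite sub-pair of $(\frS_1,\frT_1)$ forces some $\BB'\in\frT_1$ to fail block-rank-$1$. That implication is unjustified -- all matrices in $\frT_1$ can be block-rank-$1$ while hardness arises only at deeper levels of the hierarchy. Fortunately you do not need it: Lemma \ref{reductionlemm} transfers the \#P-hardness of any finite sub-pair of $(\frS_1,\frT_1)$ back to a finite sub-pair of $(\frP,\frQ)$, hence to $Z_{\frP,\frQ}(\cdot)$, directly. Also be aware that the ``explicit gadget'' you invoke from Appendix \ref{reduction} is itself a substantial construction (it requires the normalized matrices of Lemma \ref{techlemma} and two levels of interpolation via Vandermonde systems), so treating it as routine understates a key part of the argument.
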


Clearly, it gives us a dichotomy for the special case
  of $Z_{\CC}(\cdot)$ when $\frP=\{\11\}$ and $\frQ=\{\CC\}$.
Moreover, we show that for the special case when $\frP=\{\11\}$,
  we can decide in a finite number of steps whether $Z_{\frP,\frQ}$ is
   in polynomial time or \#P-hard.
In particular, it implies that the dichotomy for $Z_{\CC}(\cdot)$ is decidable.\vspace{0.05cm}

\begin{theo}[Decidability]\label{maindecidability}
Given any positive integer $m\ge 1$, an
  $m\times m$ pattern $\calP$, and a finite $\calP$-pair $(\frP,\frQ)$ with $\frP=\{\11\}$,
  the problem of whether $Z_{\frP,\frQ}(\cdot)$ is in polynomial time or \#P-hard is decidable.\vspace{0.05cm}
\end{theo}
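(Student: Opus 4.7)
The plan is to build on the dichotomy established by Theorem~\ref{maintheorem}: $Z_{\frP,\frQ}(\cdot)$ is in polynomial time if and only if every matrix in $\frY_k$, for each $k \in [h]$, is block-rank-$1$, where $(\frX_k, \frY_k)$ is the sequence defined from $(\frX_0, \frY_0) = (\{\11\}, \frQ)$ by the recursive construction of Appendix~\ref{construction}. The difficulty is that each $\frY_k$ is in general infinite, so we cannot test matrices one at a time. The strategy is (a) to express the block-rank-$1$ requirement as a finite list of universally quantified polynomial identities over $\frY_k$, and then (b) to push these identities down through the sequence $k = h, h-1, \ldots, 1$ to the finite base $(\frX_0, \frY_0)$, where they can be evaluated directly.

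For step~(a), I would first observe that by Property~1 of the sketch, all matrices in a fixed $\frY_k$ share a common pattern $\calP_k$, which is a finite combinatorial object determined explicitly by the recursive construction, hence computable up front. Given $\calP_k$, one can check combinatorially whether $\calP_k$ is consistent with some block pattern $\calT_k$; if not, block-rank-$1$ already fails for structural reasons. Otherwise, block-rank-$1$ for a particular $\BB \in \frY_k$ is equivalent to the vanishing of every $2\times 2$ sub-determinant of $\BB$ whose row indices and column indices both lie in the same block of $\calT_k$. This is a finite list of polynomial equalities in the entries of $\BB$. Thus the universal statement ``every $\BB \in \frY_k$ is block-rank-$1$'' becomes a finite conjunction of universal polynomial identities of the form $(\forall\, \BB \in \frY_k)\ p(\BB) = 0$.

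For step~(b), the key lemma is: any universal polynomial identity $(\forall\, \BB \in \frY_k)\ p(\BB) = 0$ or $(\forall\, \ww \in \frX_k)\ q(\ww) = 0$ is equivalent to a finite list of universal polynomial identities on $(\frX_{k-1}, \frY_{k-1})$. To prove this I would examine the construction in Appendix~\ref{construction}: each $\BB \in \frY_k$ and each $\ww \in \frX_k$ arises from a bounded-size ``gadget'' applied to a tuple of objects in $(\frX_{k-1}, \frY_{k-1})$, and each entry of $\BB$ (resp.\ $\ww$) is a fixed polynomial expression in the entries of those objects whose shape is controlled by the gadget. Substituting these expressions into $p$ (resp.\ $q$), expanding by monomial type over the gadget parameters, and demanding the result vanish universally over all admissible inputs yields a finite list of universal polynomial identities on $(\frX_{k-1}, \frY_{k-1})$. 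Iterating this reduction $h \le m$ times collapses the entire block-rank-$1$ checklist to a finite list of polynomial identities over $(\frX_0, \frY_0) = (\{\11\}, \frQ)$, which is a finite object, and hence can be verified by direct substitution in finite time.

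The main obstacle is the key lemma of step~(b). The delicate point is that, although the sets $\frX_k$ and $\frY_k$ are themselves infinite (there are infinitely many admissible gadget inputs in general), the induced family of universal polynomial identities is finitely generated. Concretely, one must show that the polynomial identity $p = 0$ on the gadget output is equivalent, after separating by the finitely many monomial types in the gadget parameters, to a finite collection of identities each of which is a universal polynomial identity on $(\frX_{k-1}, \frY_{k-1})$ whose enumeration depends only on $m$, on the pattern $\calP_{k-1}$, and on $p$. This is precisely the algebraic ingredient advertised in the abstract, and it is what converts the combinatorially infinite dichotomy criterion from Theorem~\ref{maintheorem} into a genuine finite decision procedure.
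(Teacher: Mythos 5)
Your overall strategy is the same as the paper's: express the block-rank-$1$ requirement for each $\frT_k$ as finitely many $2\times 2$-minor identities with respect to the (computable) block pattern $\calT_k$, and then push these universally quantified identities down through the \genp\ construction until they become conditions on the finite base pair $(\{\11\},\frQ)$. However, the ``key lemma'' of your step~(b) is exactly the heart of the matter, and you assert it rather than prove it. After substituting the explicit formulas of Appendix~\ref{construction} into a polynomial $p$ and grouping by monomial type, what one obtains is an identity of the form
$$
\sum_{i\in[n_1]}\prod_{j}\Big(f_i(\cdot)\Big)^{k_j}\;=\;\sum_{i\in[n_2]}\prod_{j}\Big(g_i(\cdot)\Big)^{k_j}
\qquad\text{for \emph{all} positive exponent tuples,}
$$
and the entire difficulty is to show that this forces $n_1=n_2$ and a termwise matching $f_i = g_{\pi(i)}$ for some permutation $\pi$. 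The paper proves this rigidity statement (Lemma~\ref{huhulemma}) by sorting the values, dividing by the largest term, letting one exponent tend to infinity, and inducting on the number of exponent variables; it then needs a separate finite-intersection argument (Lemma~\ref{forklore}) to show that a \emph{single} permutation $\pi$ works uniformly over all choices of finite subsets drawn from the infinite sets $\frS_{k-1},\frT_{k-1}$. Your proposal names this as ``the delicate point'' but supplies no argument for it, so the decidability claim is not yet established.

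Two further inaccuracies in your formulation would need repair even once that lemma is available. First, the correct outcome of the pushdown is not a single finite conjunction of identities on $(\frX_{k-1},\frY_{k-1})$ but a \emph{disjunction over the finitely many candidate permutations} $\pi$ of conjunctions (as in Lemmas~\ref{matrixp1}--\ref{matrixp5}: $(\frS^*,\frT^*)$ satisfies $f$ iff \emph{some} $F_i$ is satisfied); this is still decidable, but your ``equivalent to a finite list of universal polynomial identities'' is not literally what the argument yields. Second, for the recursion to iterate you must check that the class of identities is stable under the pushdown --- the paper carefully works with four sorts of polynomials ($\calT$-matrix, $\calT$-diagonal, $m$-vector, and $\calT$-weakly positive vector polynomials), passes through the intermediate generalized pair $(\frS^*,\frT^*)$, uses the closedness of the vector sets under Hadamard products (guaranteed here by $\frS_0=\{\11\}$), and expresses the representation $(\balpha,\bbeta)$ rationally in the entries of a $\calT$-matrix. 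None of this bookkeeping appears in your sketch, and without it the claim that the identities produced at level $k-1$ are again of a form to which the reduction applies is unsupported.
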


We prove Theorem \ref{maintheorem} and \ref{maindecidability} in the rest of the section.
The lemmas (Lemma \ref{reductionlemm},
  \ref{tractabilitylemma}, and \ref{decidabilitylemma}) used in the proof will be proved in the appendix.

\subsection{Defining New Pairs: \genp\hspace{0.06cm}$(\frS,\frT)$}

Before proving Theorem \ref{maintheorem}, we state a key lemma
  which will be proved in Appendix \ref{construction} and Appendix \ref{reduction}.\vspace{0.01cm}

Let $(\frS,\frT)$ be a (possibly infinite) $\calT$-pair, for some non-trivial $m\times m$
  block pattern $\calT$.
Also assume that every matrix in $\frT$ is block-rank-$1$.
%(For example, $(\frS,\frT)$ could be the finite pair $(\frP,\frQ)$ that we are interested in.)
Then in Appendix \ref{construction}, we introduce an operation
  \genp\ over $(\frS,\frT)$, which \emph{defines} a new (and possibly infinite) pair
$
(\frS',\frT')=\text{\genp}(\frS,\frT).%\ \ \ \ \ \text{where $\calP=\text{\gen}(\calT)$.}
$

\begin{defi}
A set $S$ of non-negative $m$-dimensional vectors, for some $m\ge 1$,
  is \emph{closed} if $\ww_1\circ \ww_2\in S$ for all vectors $\ww_1,\ww_2\in S$, where
  we let $\circ$ denote the \emph{Hadamard product} of two vectors: $\ww_1\circ \ww_2$
  is the $m$-dimensional vector whose $i$th entry is $w_{1,i}\cdot w_{2,i}$ for all $i\in [m]$.
\end{defi}

In Appendix \ref{reduction}, we prove the following lemma.

\begin{lemm}\label{reductionlemm}
Let $(\frS,\frT)$ be a $\calT$-pair, for some non-trivial
  block pattern $\calT$.
Suppose every matrix in $\frT$ is block-rank-$1$,
  then $(\frS',\frT')=\text{\emph{\genp}}(\frS,\frT)$ is a $\calP'$-pair,
  where $\calP'=\text{\emph{\gen}}(\calT)$.
The new vector set $\frS'$ is \emph{closed} and
  $(\frS',\frT')$ is \emph{polynomial-time reducible} to $(\frS,\frT)$.\vspace{-0.03cm}
\end{lemm}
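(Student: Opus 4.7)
The plan is to verify each of the three conclusions (pattern, closedness, reducibility) by unpacking the construction of $\genp(\frS,\frT)$ whose shape is already indicated by the motivating example in Section \ref{sec:intuition}. Namely, each matrix $\BB'\in\frT'$ arises from an ``edge gadget'': a finite labeled subgraph whose entries $B'_{k,\ell}$ are sums of monomials in the representations $(\balpha^\CC,\bbeta^\CC)$ of finitely many $\CC\in\frT$, with the sum ranging over an index $y\in B_k\cap A_\ell$. Similarly, each vector $\ww'\in\frS'$ arises from a ``vertex gadget'' whose entries $w'_\ell$ are sums of such monomials over tuples $\yy$ lying in appropriate intersections of the $A_i$'s and $B_j$'s. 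The trivial gadget produces the all-ones vector $\11$, confirming $\11\in\frS'$.

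First I would establish that $(\frS',\frT')$ is a $\calP'$-pair with $\calP'=\gen(\calT)$. Since every representation $\balpha^\CC$ (resp.\ $\bbeta^\CC$) is strictly positive on $\bigcup_k A_k$ (resp.\ $\bigcup_k B_k$), each monomial inside the sum defining $B'_{k,\ell}$ is strictly positive. Hence $B'_{k,\ell}>0$ iff $B_k\cap A_\ell\ne\emptyset$, which by the definition of $\gen$ is iff $(k,\ell)\in\calP'$. A parallel case analysis on the vertex gadget shows that each $\ww'$ is either strictly positive or $\calP'$-weakly positive, depending on whether the gadget imposes a nontrivial ``self-loop'' intersection $A_\ell\cap B_\ell$. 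Closedness of $\frS'$ under the Hadamard product then follows by gadget composition: given vertex gadgets realizing $\ww_1,\ww_2\in\frS'$, identifying their distinguished vertices yields a new gadget whose $\ell$-th output entry factorizes over the two subgadgets and equals $w_{1,\ell}\cdot w_{2,\ell}$, so $\ww_1\circ\ww_2\in\frS'$.

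The heart of the proof is the polynomial-time reduction. Given finite subsets $\frP'\subseteq\frS'$ and $\frQ'\subseteq\frT'$ with $\11\in\frP'$, let $\frP\subseteq\frS$ and $\frQ\subseteq\frT$ be the finite collection of vectors and matrices that appear as labels in the gadgets realizing the elements of $\frP'\cup\frQ'$, together with $\11\in\frS$. Given any input labeled directed graph $\calG'=(G',\calV',\calE')$ with $\calV'$-values in $\frP'$ and $\calE'$-values in $\frQ'$, construct a labeled directed graph $\calG$ over $[m]$ by substituting each vertex and edge of $G'$ with its corresponding gadget and identifying endpoints appropriately. This substitution is a polynomial-size blow-up since each gadget has constant size depending only on the fixed finite $\frP'\cup\frQ'$. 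Grouping an assignment $\yy$ of $\calG$ by the induced block indices $\xx\in[r]^{V(\calG')}$ and factoring each edge weight $C_{y_i,y_j}=\alpha^\CC_{y_i}\cdot\beta^\CC_{y_j}$ within its block---exactly the derivation that produced equation~(\ref{demo1})---yields the identity $Z(\calG)=Z(\calG')$.

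The main obstacle will be verifying that this factorization really is self-consistent: one must check that for every choice of edge and vertex gadgets and every matrix $\CC\in\frT$ used within them, the block-rank-$1$ hypothesis permits the clean separation of each $C_{y_i,y_j}$ into an outgoing $\alpha$-factor and an incoming $\beta$-factor, and that these factors regroup precisely into the constructed vertex weights and edge weights on the reduced $[r]$-side. Once this factorization identity is in hand, all three claims of the lemma follow essentially by inspection, since only finitely many gadgets arise from the finite set $\frP'\cup\frQ'$.
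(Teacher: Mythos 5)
Your verification that $(\frS',\frT')$ is a $\calP'$-pair and that $\frS'$ is closed is essentially the paper's own argument (positivity of the representations on $\bigcup_k A_k$ and $\bigcup_k B_k$ forces $B'_{k,\ell}>0$ iff $B_k\cap A_\ell\ne\emptyset$, and closedness is built into the definition of $\genp$ via Hadamard products), so that part is fine modulo the fact that you are guessing the definition of $\genp$ rather than working from it. The genuine gap is in the reduction. You assert that substituting a constant-size gadget for every vertex and edge of $\calG'$ and ``identifying endpoints appropriately'' yields a labeled graph $\calG$ over $[m]$ with $Z(\calG)=Z(\calG')$ exactly, and you defer the difficulty to ``checking the factorization is self-consistent.'' That exact identity is precisely what fails. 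To force all $[m]$-vertices of a gadget to agree on a single block index $\xi(i)\in[r]$ you must connect them through auxiliary anchor vertices (in the paper: the vertices $u_{i,a},w_{i,a}$ joined in a cycle labeled $\CC^{[1]}$), and once the assignment is restricted to a block, each such auxiliary vertex contributes its own sum $H_v$ over $A_{\xi(i)}$ or $B_{\xi(i)}$; since only the $\balpha$-side of a representation is normalized to sum to $1$, and monomials of degree $\ge 2$ never are, these factors cannot be made equal to $1$, and they moreover depend on the local degree of the vertex in $\calG'$. So the substituted graph computes $\sum_\xi \text{wt}(\calG',\xi)\cdot \text{nvw}(\xi)$ with a nontrivial spurious factor $\text{nvw}(\xi)$, not $Z(\calG')$.

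The paper's proof has to do two extra things that your plan has no counterpart for: (i) it pads all anchor out/in-degrees up to a common value $M$ so that the spurious factor becomes a product of quantities $\mu^{[a]}_{\xi(i)}\nu^{[a]}_{\xi(i)}$ depending only on $\xi(i)$ and taking polynomially many values, and this padding uses the \emph{normalized} versions of the matrices in $\frT$ so that the pendant padding vertices contribute exactly $1$ --- which requires the separate technical lemma (Lemma \ref{techlemma}) that normalized matrices are ``free to use,'' itself proved by two levels of interpolation; and (ii) it builds a family of graphs $\calG^{[k]}$, $k=1,\dots,L$, whose values give $\sum_\xi \text{wt}(\calG',\xi)(\text{nvw}(\xi))^k$, and recovers $Z(\calG')$ by solving a Vandermonde system. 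In other words, the reduction is an oracle reduction with polynomially many calls and an interpolation step, not a single partition-function-preserving gadget substitution. Without identifying and cancelling the $\text{nvw}$ factor (or proving it can be made trivial, which it cannot in general), your reduction does not go through.
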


\subsection{Proof of Theorem \ref{maintheorem}}\label{hahahasss}

Let $(\frP,\frQ)$ be a \emph{finite} $\calP$-pair, where $\calP$ is an $m\times m$ pattern.\vspace{0.01cm}

We assume $Z_{\frP,\frQ}(\cdot)$ is not \#P-hard, and we only
  need to show that $Z_{\frP,\frQ}(\cdot)$ is in polynomial time.\vspace{0.01cm}

By Lemma \ref{lemma1}, there must be a block pattern $\calT$ consistent with $\calP$
  and all the matrices in $\frQ$ are block- rank-$1$ since
  otherwise $Z_{\frP,\frQ}(\cdot)$ is \#P-hard, which contradicts the assumption.
Therefore, we have\vspace{0.06cm}
\begin{itemize}
\item[\textbf{R$_0$:}] $(\frP,\frQ)$ is a finite $\calT$-pair for some $m\times m$
  block pattern $\calT$; and\\ Every matrix in $\frQ$ is block-rank-$1$.\vspace{0.06cm}
\end{itemize}
For convenience, we rename $(\frP,\frQ)$ to be $(\frS_0,\frT_0)$
  and rename $m$ and $\calT$ to be $m_0$ and $\calT_0$, respectively.

Now we define a finite sequence of pairs using the \genp\ operation,
  starting with $(\frS_0,\frT_0)$.\vspace{0.005cm}

First, if $|A_i| =|B_i|=1$ for all $i$, i.e., every set $A_i$ and $B_i$ in
  $\calT_0$ is a singleton, then the sequence has only one pair
  $(\frS_0,\frT_0)$, and the definition of this sequence is complete.
Note that this also includes the special case when $\calT_0=\emptyset$
  and $\frT_0=\{\00\}$.

Otherwise, in Step $1$, we define a new $\calP_1$-pair $(\frS_1,\frT_1)$
  using \genp:\vspace{-0.03cm}
$$
(\frS_1,\frT_1)=\text{\genp}(\frS_0,\frT_0),\ \ \ \ \ \text{where
  $\calP_1=\text{\gen}(\calT_0)$.}\vspace{-0.03cm}
$$
By Lemma \ref{reductionlemm} $(\frS_1,\frT_1)$ is polynomial-time reducible to
  $(\frS_0,\frT_0)$.
This implies that $\calP_1$ must be consistent with a block pattern, denoted by $\calT_1$, and
  every matrix in $\frT_1$ is block-rank-$1$.
  (Otherwise, assume $\DD\in \frT_1$ is not block-rank-1, then
   by Lemma \ref{lemma1}, $Z_{\frP_1,\frQ_1}(\cdot)$ is \#P-hard, where
   $\frP_1=\{\11\}$ and $\frQ_1=\{\DD\}$.
It follows from Lemma \ref{reductionlemm} that there exists a finite pair
  $(\frP_0,\frQ_0)$ where $\frP_0\subseteq \frS_0$ and $\frQ_0\subseteq \frT_0$,
  such that $Z_{\frP_1,\frQ_1}(\cdot)$ is polynomial-time reducible to
  $Z_{\frP_0,\frQ_0}(\cdot)$.
On the other hand, it is clear that $Z_{\frP_0,\frQ_0}(\cdot)$ is
  reducible to $Z_{\frS_0,\frT_0}(\cdot)$ and thus, the latter is also \#P-hard, which contradicts our assumption.)
As a result, we have\vspace{0.12cm}
\begin{itemize}
\item[\textbf{R$_1$:}] $\calT_1=\text{\genb}(\calT_0)$ is an $m_1\times m_1$ block pattern,
  where $m_1$ is the number of pairs in $\calT_0$; \\
$(\frS_1,\frT_1)=\text{\genp}(\frS_0,\frT_0)$ is a $\calT_1$-pair, and
  every matrix in $\frT_1$ is block-rank-$1$.\vspace{0.12cm}
\end{itemize}
We also have $m_0>m_1$ since at least one of the sets in $\calT_0$
  is not a singleton.

We remark that both sets $\frS_1$ and $\frT_1$ are generally infinite, so
  one can not check the matrices in $\frT_1$ for the block-rank-$1$ property one by one.
It does not matter right now because we are only proving the dichotomy theorem.
However, it will become a serious problem later when we show that the dichotomy is decidable.
We have to show that the block-rank-$1$ property can be verified in a finite number of steps.

We then repeat the process above.
After $\ell\ge 1$ steps, we get a sequence of $\ell+1$ pairs:\vspace{-0.01cm}
$$
(\frS_0,\frT_0),(\frS_1,\frT_1),\ldots,(\frS_\ell,\frT_\ell),\vspace{-0.01cm}
$$
and $\ell+1$ block patterns $\calT_0,\calT_1,\ldots,\calT_\ell$ such that\vspace{0.06cm}
\begin{itemize}
\item[\textbf{R$_\ell$:}] For every $i\in [\ell]$, $\calT_i=\text{\genb}(\calT_{i-1})$; \\
For every $i\in [\ell]$, $(\frS_i,\frT_i)=\text{\genp}(\frS_{i-1},\frT_{i-1})$
  is a $\calT_i$-pair; and\\ For every $i\in [0:\ell]$, all the matrices in $\frT_i$
  are block-rank-$1$.\vspace{0.06cm}
\end{itemize}
We have two cases.
If every set in $\calT_\ell$ is a singleton (including the
  case when $\calT_\ell=\emptyset$ and $\frT_\ell=\{\00\}$), then the sequence has only $\ell+1$ pairs
  and the definition of the sequence is complete.
Otherwise in Step $\ell+1$ we apply the $\text{\genp}$ operation again to define
  a new pair $(\frS_{\ell+1},\frT_{\ell+1})$ from $(\frS_\ell,\frT_\ell)$.\vspace{0.015cm}

Finally, assuming $Z_{\frP,\frQ}(\cdot)$ is not \#P-hard,
  we get a sequence of $h+1$ pairs\vspace{-0.01cm}
$$
(\frS_0,\frT_0),(\frS_1,\frT_1),\ldots,(\frS_h,\frT_h),\ \ \ \ \ \text{for some $h\ge 0$,}\vspace{-0.01cm}
$$
together with $h+1$ positive integers $m_0>\ldots>m_h\ge 1$ and
  $h+1$ block patterns $\calT_0,\ldots,\calT_h$ such that\vspace{0.06cm}
\begin{itemize}
\item[\textbf{R:}]
For every $i\in [0:h]$, $\calT_i$ is an $m_i\times m_i$ block pattern;\\
For every $i\in [h]$, $\calT_i=\text{\genb}(\calT_{i-1})$;\\
Either $\calT_h=\emptyset$ is trivial or every set in $\calT_h$ is a singleton; \\
For every $i\in [h]$, $(\frS_i,\frT_i)=\text{\genp}
  (\frS_{i-1},\frT_{i-1})$ is a $\calT_i$-pair; and \\
For every $i\in [0:h]$, all the matrices in $\frT_i$ are block-rank-$1$.\vspace{0.06cm}
\end{itemize}
Because $m_0>\ldots>m_h\ge 1$, we also have $h<m_0=m$.\vspace{-0.04cm}

\subsubsection{Dichotomy}

Now we know that if $Z_{\frP,\frQ}(\cdot)$ is not \#P-hard, then
  there is a sequence of $h+1$ pairs for some $h:0\le h$ $<m$, which
  satisfies condition \textbf{(R)}.
To complete the dichotomy theorem, we show in Appendix \ref{tractability} that\vspace{0.05cm}

\begin{lemm}[Tractability]\label{tractabilitylemma}
\hspace{-0.06cm}Given any block pattern $\calT$ and a finite $\calT$-pair $(\frP,\frQ)$,
  let $(\frS_0,\frT_0),\ldots,(\frS_h,\frT_h)$ be a sequence of pairs defined as above,
  with $(\frS_0, \frT_0) = (\frP, \frQ)$.
Suppose it satisfies condition \emph{\textbf{(R)}},
then $Z_{\frP,\frQ}(\cdot)$ is computable in polynomial time.\vspace{0.04cm}
\end{lemm}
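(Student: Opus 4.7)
The plan is to prove tractability by actually executing the domain-reduction process sketched informally in Section~\ref{sec:intuition}, one layer at a time, until the edge matrices trivialize and the partition function becomes easy to compute.

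First I would fix the following invariant. Given any labeled directed graph $\calG_0 = (G, \calV_0, \calE_0)$ with $\calV_0(v) \in \frP = \frS_0$ and $\calE_0(e) \in \frQ = \frT_0$, I shall build in polynomial time a sequence $\calG_0, \calG_1, \ldots, \calG_h$ of labeled directed graphs, over domains of shrinking sizes $m_0 > m_1 > \cdots > m_h$, such that for every $k$:
\begin{itemize}
\item every vertex label of $\calG_k$ lies in $\frS_k$ and every edge label lies in $\frT_k$;
\item $Z(\calG_k) = Z(\calG_{k-1})$ for $k \ge 1$, so in particular $Z_{\frP,\frQ}(\calG_0) = Z(\calG_h)$;
\item the size of $\calG_k$ is polynomial in the size of $\calG_0$.
\end{itemize}

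The inductive step from $\calG_k$ to $\calG_{k+1}$ mimics the example worked out for $\AA$ and $G$ in Section~\ref{sec:intuition}. Since every matrix in $\frT_k$ is block-rank-$1$ with the common block pattern $\calT_k = \{(A_1,B_1),\ldots,(A_{m_{k+1}},B_{m_{k+1}})\}$, each edge matrix $\CC^{[e]}$ has a vector representation $(\balpha^{[e]},\bbeta^{[e]})$, and every nonzero-weight $\xi: V(G_k) \to [m_k]$ determines a unique block index $x_e \in [m_{k+1}]$ for each edge $e$. Regrouping the sum defining $Z(\calG_k)$ by $(x_e)_e$, enforcing the local consistency constraints among the $x_e$'s incident to each vertex (generalizing the collapse $x_1 = x_2 = x_3$ in the example), and then summing out each original coordinate $\xi(v)$ locally, produces a new labeled graph $\calG_{k+1}$ whose edge weight matrices lie in $\frT_{k+1}$ and whose vertex weight vectors lie in $\frS_{k+1}$. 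This is precisely what the definition of $(\frS_{k+1},\frT_{k+1}) = \genp(\frS_k,\frT_k)$ from Appendix~\ref{construction} encodes, and the equality $Z(\calG_{k+1}) = Z(\calG_k)$ then follows by the same summation rearrangement as in equation~\eqref{demo1}. Condition \textbf{(R)} supplies the strict descent $m_0 > m_1 > \cdots > m_h \ge 1$, which bounds the number of rounds by $m$.

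At the final level $h$, condition \textbf{(R)} guarantees that either $\calT_h = \emptyset$ (so every edge matrix in $\frT_h$ is zero and $Z(\calG_h)$ is trivial), or every $A_i, B_i$ appearing in $\calT_h$ is a singleton. In the latter case, any $\CC \in \frT_h$ has its nonzero entries sitting in distinct rows and distinct columns (because the $A_i$'s and the $B_i$'s are pairwise disjoint), so the constraint imposed by an edge $uv$ is \emph{functional}: in any assignment of nonzero weight, the value of $\xi(u)$ uniquely determines $\xi(v)$, and vice versa. Therefore $Z(\calG_h)$ can be computed in polynomial time by processing each connected component of the underlying undirected graph of $\calG_h$ separately: fix a root vertex, enumerate its at most $m_h$ possible values, propagate along every edge, multiply in the vertex and edge weights encountered, abort the branch on any inconsistency, and sum the surviving contributions.

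The main obstacle will be matching my bottom-up construction of $\calG_{k+1}$ with the top-down definition of $\genp$ given in Appendix~\ref{construction}: I must verify line by line that the vertex weights and edge weights produced by summing out the original coordinates really land in $\frS_{k+1}$ and $\frT_{k+1}$, and that the polynomial-size bound on $\calG_{k+1}$ survives the local gadget insertions needed to enforce the consistency of the $x_e$'s around each vertex. Once these bookkeeping points are in place, $Z(\calG_k) = Z(\calG_{k+1})$ is a routine rearrangement of sums, and the functional-propagation argument at level $h$ finishes the proof.
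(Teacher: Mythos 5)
Your proposal follows essentially the same route as the paper's proof: iteratively apply the domain-reduction construction to obtain $\calG_0,\ldots,\calG_h$ with vertex/edge labels in $(\frS_k,\frT_k)$ and $Z(\calG_k)$ preserved at every round, then evaluate $Z(\calG_h)$ by functional propagation on each connected component since either $\calT_h$ is trivial or all of its blocks are singletons. The ``bookkeeping'' you defer is exactly what the paper carries out: one round is done in two stages, first building an intermediate graph $\calG^*$ over the generalized pair $(\frS^*,\frT^*)$ by grouping edges under the relation $\sim$ (sharing a head or a tail) and summing out the old coordinates, and then contracting the edges carrying $\calP$-diagonal matrices into vertex weights so that the result lands in $(\frS_{k+1},\frT_{k+1})$ --- no gadget insertions are needed, only this grouping and contraction.
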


This finishes the proof of Theorem \ref{maintheorem}.\vspace{-0.01cm}

\subsection{Proof of Theorem \ref{maindecidability}}

Next, we show that for the special case when $\frS_0=\frP=\{\11\}$, the dichotomy theorem is decidable.\vspace{0.003cm}

First, the condition \textbf{(R$_{0}$)} can be checked easily since there
  are only finitely many matrices in $\frT_0$.\vspace{0.006cm}

Assume after $\ell: 0\le \ell<m$ steps, we get a sequence of $\ell+1$ pairs:
$
(\frS_0,\frT_0),(\frS_1,\frT_1),\ldots,(\frS_\ell,\frT_\ell),
$
together with $\ell+1$ block patterns $\calT_0,\ldots,\calT_\ell$.
Moreover, we know that they satisfy \textbf{(R$_{\ell}$)}.
If every set in $\calT_\ell$ is a singleton (including the case when $\calT_\ell=\emptyset$),
  then we are done because by Lemma \ref{tractabilitylemma}, the problem is in polynomial time.
Otherwise, to prove Theorem \ref{maindecidability}, we need a finite-time algorithm
  to check whether every matrix in the new $\calP$-pair
$(\frS_{\ell+1},\frT_{\ell+1})=\text{\genp}(\frS_\ell,\frT_\ell)$,
  {where $\calP=\gen(\calT_\ell)$,}
is block-rank-$1$ or not. We refer to this property as the \emph{rank property} for $\frT_{\ell+1}$.\vspace{0.01cm}

We prove the following lemma in Appendix \ref{decidability}.
Theorem \ref{maindecidability} then follows.\vspace{0.045cm}

\begin{lemm}\label{decidabilitylemma}
\hspace{-0.06cm}Given any block pattern $\calT$ and a finite $\calT$-pair $(\frS_0,\frT_0)$
  with $\frS_0=\{\11\}$, let $(\frS_0,\frT_0),\ldots,(\frS_\ell,\frT_\ell)$ be a
  sequence of pairs defined as above.
Suppose it satisfies condition \emph{(\textbf{R$_{\ell}$})}.
Then the \emph{rank property} for $\frT_{\ell+1}$ can be checked in a finite number of steps.
\end{lemm}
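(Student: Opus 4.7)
\textbf{Proof proposal for Lemma \ref{decidabilitylemma}.} The strategy is to express the rank property for $\frT_{\ell+1}$ as a \emph{finite} collection of polynomial identities in the entries of elements of $(\frS_\ell,\frT_\ell)$, and then to iteratively lift such universal polynomial conditions back through each \genp\ step to $(\frS_0,\frT_0)$, where they become finitely many numerical checks on the concrete data $\{\11\}$ and $\frT_0$.

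First I would translate the rank property into polynomial form. By condition \textbf{(R$_\ell$)} and Lemma \ref{reductionlemm}, every matrix $\BB\in\frT_{\ell+1}$ is a $\calP_{\ell+1}$-matrix for the fixed pattern $\calP_{\ell+1}=\gen(\calT_\ell)$, so the positivity pattern of $\BB$ is combinatorially determined by $\calT_\ell$. Deciding whether $\calP_{\ell+1}$ is consistent with some block pattern is a finite combinatorial check on $\calT_\ell$; if it fails, the rank property fails, and otherwise let $\calT_{\ell+1}=\genb(\calT_\ell)$. Then $\BB$ is block-rank-$1$ if and only if for every $k$ and every four indices $i,i'\in A_k$ and $j,j'\in B_k$, the $2\times 2$ minor $B_{i,j}B_{i',j'}-B_{i,j'}B_{i',j}$ vanishes. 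This is a \emph{finite} list of polynomial equations, indexed by the combinatorics of $\calT_{\ell+1}$, each of which must hold for every $\BB\in\frT_{\ell+1}$.

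Next I would use the explicit formulas from the \genp\ construction (Appendix \ref{construction}) to rewrite each entry of each $\BB\in\frT_{\ell+1}$ as a fixed polynomial in the entries of a finite tuple of vectors from $\frS_\ell$ and matrices from $\frT_\ell$, where the shape of the tuple is indexed by the finite combinatorial data of \genp. Substituting these formulas into each of the $2\times 2$ minor equations produces, for each minor and each tuple-shape, a polynomial identity that must hold universally over all choices of the parameterizing data in $(\frS_\ell,\frT_\ell)$. Thus ``every $\BB\in\frT_{\ell+1}$ satisfies the minor identity'' is equivalent to a finite list of universal polynomial statements over $(\frS_\ell,\frT_\ell)$. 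Repeating this substitution step, and using that elements of $(\frS_\ell,\frT_\ell)$ are themselves polynomial expressions in entries of elements of $(\frS_{\ell-1},\frT_{\ell-1})$, each universal polynomial statement at level $k$ can be rewritten as finitely many universal polynomial statements at level $k-1$.

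Iterating this lifting $\ell+1$ times, we land on finitely many universal polynomial statements over $(\frS_0,\frT_0)=(\{\11\},\frT_0)$, which is a finite set of concrete non-negative algebraic vectors and matrices; each statement is then a finite list of numerical identities in the fixed algebraic entries of $\frT_0$ and can be verified exactly in finite time. The main obstacle is the polynomial-lifting step: one must ensure that pulling a universal polynomial constraint through \genp\ yields only \emph{finitely many} new universal constraints rather than an unbounded cascade. The key will be that the sums appearing in \genp\ range over finitely many index-types determined by $\calT_{k-1}$, and that within each type the parameterization is fully free, so that a universal polynomial identity over an output vector/matrix decomposes cleanly into one universal identity per index-type; careful bookkeeping of this decomposition is the technical core of the argument.
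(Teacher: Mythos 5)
Your high-level plan coincides with the paper's: reduce the rank property to the vanishing of the $2\times 2$ minors $x_{i,j}x_{i',j'}-x_{i,j'}x_{i',j}$ for the block pattern $\calT_{\ell+1}=\genb(\calT_\ell)$ (after the finite check that $\gen(\calT_\ell)$ is consistent with a block pattern), then pull these polynomial constraints back through each \genp\ step until you reach $(\frS_0,\frT_0)=(\{\11\},\frT_0)$, where everything is a finite numerical check. However, the step you defer as ``careful bookkeeping'' is precisely the technical core, and as stated your argument has a genuine gap. A matrix in $\frT_{\ell+1}$ is \emph{not} a fixed polynomial in a tuple of bounded shape: by the definition of \genp\ its entries have the form $\sum_{x\in B_i\cap A_j}(\beta^{[1]}_x)^{s_1}\cdots(\beta^{[g]}_x)^{s_g}(\gamma^{[1]}_x)^{t_1}\cdots(\gamma^{[h]}_x)^{t_h}w_x$ with \emph{arbitrary} numbers $g,h$ of matrices from $\frT_\ell$, an arbitrary vector $\ww\in\frS_\ell$, and \emph{arbitrary} positive integer exponents. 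So the ``tuple shapes'' are infinite in number, and substituting into a minor identity gives, for each choice of subsets, an identity that must hold for \emph{all} exponent vectors. Your claim that this ``decomposes cleanly into one universal identity per index-type'' cannot be taken for granted: the bases of the powers are specific positive algebraic numbers, not formal indeterminates, so one cannot simply equate coefficients term by term.

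The paper closes exactly this gap with two lemmas. First, a limiting/induction argument (Lemma \ref{huhulemma}) shows that if two sums of products of powers of positive numbers agree for \emph{all} exponent tuples, then the two families of terms coincide up to a permutation; this converts ``the minor vanishes for all exponents'' into ``there exists a permutation $\pi$ matching the monomials $f_i$ against the $g_j$ such that the matched monomials agree on the representation vectors $\bbeta,\bgamma$ of every matrix used and on every vector $\ww$ used.'' Second, a finite-intersection argument (Lemma \ref{forklore}) upgrades ``for each finite subset some permutation works'' to a single universal permutation, and the representations $(\balpha,\bbeta)$ are then re-expressed polynomially in the entries of the $\calT_\ell$-matrices, so that the resulting condition has the form: there exists $i\in[L]$ such that $(\frS_\ell,\frT_\ell)$ satisfies every polynomial in a finite set $F_i$ --- a disjunction of conjunctions, not the plain conjunction of universal identities your sketch envisions. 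One also has to run this through the intermediate generalized pair $(\frS^*,\frT^*)$ (with its $\calP$-diagonal matrices and the Hadamard closure of $\frS^*$, which is where $\frS_0=\{\11\}$ and closedness are used), and then iterate the $\exists\,\forall$ structure down the levels, producing a tree of depth $\ell+1$ whose leaves are finite checks on $(\{\11\},\frT_0)$. Without an argument of the Lemma \ref{huhulemma}/\ref{forklore} type, your lifting step does not produce finitely many conditions, so the proof as proposed is incomplete at its decisive point.
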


\newpage
\bibliographystyle{plain}
\begin{flushleft}
\bibliography{Reference}
\end{flushleft}
\newpage
\appendix

\section{Figures}\label{app:figure}

\begin{figure}[h!]\centering
\includegraphics[height=4.8cm]{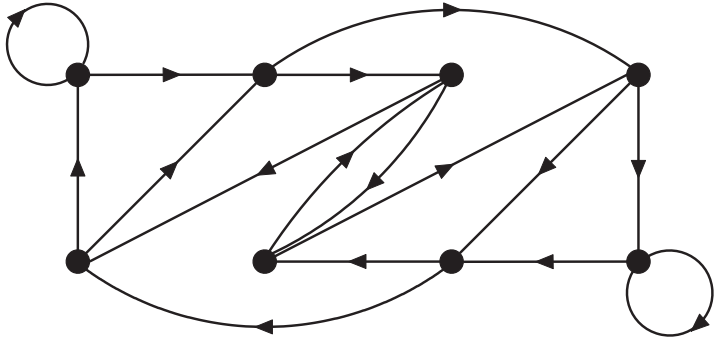}
\caption{A directed graph $H$ such that $Z_H(\cdot)$ is tractable}
\label{fig:figurenew}
\end{figure}
\begin{figure}[h!]
\centering
$$
\AA=\begin{pmatrix}
\hspace{0.08cm}A_{1,1} & & A_{1,3}\\
\hspace{0.08cm}A_{2,1} & & A_{2,3}\\
& & & & A_{3,5} & & A_{3,7}\\
& & & & A_{4,5} & & A_{4,7}\\
& A_{5,2} & & A_{5,4}\\
& A_{6,2} & & A_{6,4}\\
& & & & & A_{7,6} & & A_{7,8}\hspace{0.12cm}\\
& & & & & A_{8,6} & & A_{8,8}\hspace{0.12cm}
\end{pmatrix}
$$
\caption{The $8\times 8$ block-rank-$1$ matrix $\AA$}
\label{fig:figure1}
\end{figure}
\begin{figure}[h!]
\centering
\includegraphics[height=4.2cm]{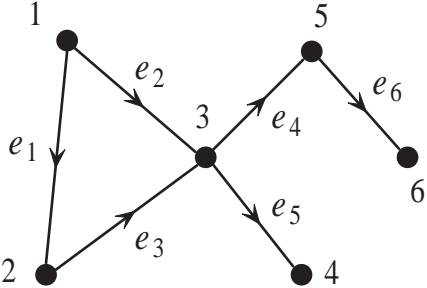}
\caption{The input directed graph $G$} \label{fig:figure2}
\end{figure}\newpage

\section{Proof of Lemma \ref{lemma1}}\label{AppBulatovG}

Bulatov and Grohe showed that for any $m\times m$ non-negative \emph{symmetric}
  matrix $\DD$, $Z_{\DD}(\cdot)$ is \#P-hard if $\DD$ is not block-rank-$1$.
Note that when $\DD$ is symmetric, the directions of the edges in $G$ do not affect
  the value of $Z_\DD(G)$, so we can always assume that $G$ is an undirected graph.

We prove Lemma \ref{lemma1} by giving a reduction from the
  symmetric case.

Let $\CC$ be an $m\times m$ non-negative matrix, which is not block-rank-$1$.
Without loss of generality, we may assume that $\CC_1,\CC_2$,
  the first and the second row vectors of $\CC$, satisfy
  $\CC_1\cdot \CC_2>0$; but $\CC_1$ and $\CC_2$ are not linearly dependent.
Let $\DD$ denote the following symmetric matrix:
$$
D_{i,j}=\CC_i\cdot \CC_j,\ \ \ \ \ \text{for all $i,j\in [m]$}.
$$
By the assumption, we have $D_{1,1},D_{1,2},D_{2,1},D_{2,2}>0$ but
  $D_{1,1}D_{2,2}>D_{1,2}D_{2,1}$.
It then follows from the result of Bulatov and Grohe that $Z_{\DD}(\cdot)$ is \#P-hard to compute.

Now we prove the \#P-hardness of $Z_{\CC}(\cdot)$ by showing a reduction from
  $Z_{\DD}(\cdot)$.
Let $G=(V,E)$ be an input undirected graph of $Z_{\DD}(\cdot)$.
We construct a directed graph $G'=(V',E')$ in which
$$
V'=V\cup \big\{\hspace{0.04cm}w_e:e\in E\hspace{0.04cm}\big\}\ \ \ \ \text{and}\ \ \ \
E'=\big\{\hspace{0.04cm}uw_e,vw_e:e=uv\in E\hspace{0.04cm}\big\}.
$$
By the definition of $Z_{\CC}(\cdot)$ and $Z_{\DD}(\cdot)$, it is easy to verify that\vspace{-0.06cm}
$$
Z_{\CC}(G')=Z_\DD(G),\ \ \ \ \ \text{for any undirected graph $G$.}\vspace{-0.05cm}
$$
As a result, $Z_\DD(\cdot)$ is polynomial-time reducible to $Z_{\CC}(\cdot)$,
  and the latter is also \#P-hard.

\section{Definition of the \genp\ Operation}\label{construction}

In this section, we define the operation \genp.

Let $\calT=\{\hspace{0.01cm}(A_1,B_1),\ldots,(A_r,B_r)\hspace{0.01cm}\}$ be a non-trivial $m\times m$
  block pattern with $r\ge 1$.
We use $\text{diag}(\calT)$ to denote the set of all $i\in [m]$ such that
  $i\in A_k$ and $i\in B_k$ for some $k\in [r]$.
In this section, we always assume that $(\frS,\frT)$ is a $\calT$-pair such that
  every matrix in $\frT$ is block-rank-$1$.
This means that
\begin{enumerate}
\item All matrices in $\frT$ are block-rank-$1$ and are of the same block pattern $\calT$;
\item $\11\in \frS$ and every vector $\ww\in \frS$ is either
\begin{enumerate}
\item[]\ \ \emph{positive}: $w_i>0$ for all $i\in [m]$; or
\item[]\ \ \emph{$\calT$-weakly positive}: $w_i>0$ if and only if $i\in \text{diag}(\calT)$.
\end{enumerate}
\end{enumerate}
Given such a pair $(\frS,\frT)$, \genp\ defines a new $\calP$-pair
$$
(\frS',\frT')=\genp(\frS,\frT),\ \ \ \ \ \text{where $\calP=\gen(\calT)$.}
$$
To this end we first define a pair $(\frS^*,\frT^*)$ from $(\frS,\frT)$,
  which is a \emph{generalized $\calP$-pair} defined as follows.\vspace{0.04cm}

\begin{defi}
Let $\calP$ be an $r\times r$ pattern with $r\ge 1$.
An $r\times r$ nonnegative matrix is called a \emph{$\calP$-diagonal} matrix if
  it is a diagonal matrix and for all $i\in [r]$, its $(i,i)$th entry is
  positive if and only if $(i,i)\in \calP$.

We call $(\frS^*,\frT^*)$ a \emph{generalized $\calP$-pair} if
\begin{enumerate}
\item $\frS^*$ is a nonempty \emph{(}and possibly infinite\emph{)} set of
  \emph{positive} and \emph{$\calP$-weakly positive} vectors with $\11\hspace{-0.06cm}\in\hspace{-0.04cm} \frS^*$;\vspace{-0.08cm}
\item $\frT^*$ is a nonempty \emph{(}and possibly infinite\emph{)} set of
  $\calP$-matrices and \emph{$\calP$-diagonal} matrices.
\end{enumerate}
For any block pattern $\calT$, one can define \emph{$\calT$-diagonal} matrices
  and \emph{generalized $\calT$-pairs} similarly,
  by rep\-lacing the pattern $\calP$ above with the one defined by $\calT$.\vspace{0.04cm}
\end{defi}\newpage

We then use $(\frS^*,\frT^*)$ to define $(\frS',\frT')$.
In this section we only show that $(\frS',\frT')$ is a $\calP$-pair and $\frS'$ is \emph{closed}.
We will give the polynomial-time reduction from $(\frS',\frT')$ to $(\frS,\frT)$ in Appendix \ref{reduction}.

\subsection{Definition of $\frT^*$}

We define $\frT^*$ which contains both $\calP$-matrices and $\calP$-diagonal matrices,
  where $\calP=\gen(\calT)$.\vspace{0.003cm}

There are two types of matrices in $\frT^*$.
First, $\DD$ is an $r\times r$ $\calP$-matrix in $\frT^*$ if there exist
\begin{enumerate}
\item a finite subset of matrices $\{\CC^{[1]},\ldots,
  \CC^{[g]}\}\subseteq \frT$ with $g\ge 1$, and positive integers $s_1,\ldots,s_g$;\vspace{-0.06cm}
\item a finite subset of matrices $\{\DD^{[1]},\ldots,\DD^{[h]}\}\subseteq \frT$ with $h\ge 1$,
  and positive integers $t_1,\ldots,t_h$;\vspace{-0.06cm}
\item a \emph{positive} vector $\ww\in \frS$,
\end{enumerate}
such that: Let $(\balpha^{[i]},\bbeta^{[i]})$ and $(\boldsymbol{\gamma}^{[i]},\bdelta^{[i]})$
  be the representations of $\CC^{[i]}$ and $\DD^{[i]}$, respectively, then\vspace{0.1cm}
$$
D_{i,j}=\sum_{x\in B_i\cap A_j} \Big(\beta^{[1]}_x\Big)^{s_1}\cdots
  \Big(\beta^{[g]}_x\Big)^{s_g}\cdot \Big(\gamma^{[1]}_x\Big)^{t_1}
  \cdots \Big(\gamma^{[h]}_x\Big)^{t_h}\cdot w_x,\ \ \ \ \ \text{for all $i,j\in [r]$.}\vspace{0.05cm}
$$
The following lemma is easy to prove.\vspace{0.05cm}

\begin{lemm}
If $\ww\in \frS$ is positive, then the matrix $\DD$ defined above is a \emph{$\calP$-matrix},
  where $\calP=\text{\emph{\texttt{gen}}}(\calT)$.
\end{lemm}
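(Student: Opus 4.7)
The plan is to verify the defining equivalence of a $\calP$-matrix directly: I will show $D_{i,j} > 0$ if and only if $(i,j) \in \calP$. By the definition of $\gen(\calT)$, this is the same as showing $D_{i,j} > 0$ if and only if $B_i \cap A_j \neq \emptyset$.

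The forward implication is immediate from the shape of the formula: the sum defining $D_{i,j}$ is indexed over $B_i \cap A_j$, so if this set is empty the sum is empty and $D_{i,j} = 0$. For the converse, I would pick any $x \in B_i \cap A_j$ and argue that the corresponding summand is strictly positive; since every other summand is non-negative (all factors are non-negative reals and $s_l, t_l$ are positive integers), this forces $D_{i,j} > 0$, with no cancellation to worry about.

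The positivity of the chosen summand is where I invoke the hypothesis that each $\CC^{[l]}$ and $\DD^{[l]}$ is block-rank-$1$ with block pattern $\calT$. From $x \in B_i \subseteq \bigcup_{k \in [r]} B_k$, the definition of the representation $(\balpha^{[l]}, \bbeta^{[l]})$ of $\CC^{[l]}$ gives $\beta^{[l]}_x > 0$, and hence $\bigl(\beta^{[l]}_x\bigr)^{s_l} > 0$ for every $l \in [g]$. Symmetrically, from $x \in A_j \subseteq \bigcup_{k \in [r]} A_k$ together with the representation $(\bgamma^{[l]}, \bdelta^{[l]})$ of $\DD^{[l]}$ (where $\bgamma^{[l]}$ plays the role of the $\balpha$-part and is therefore supported on $\bigcup_k A_k$), we get $\gamma^{[l]}_x > 0$ and hence $\bigl(\gamma^{[l]}_x\bigr)^{t_l} > 0$ for every $l \in [h]$. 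Finally, since $\ww \in \frS$ is assumed positive, $w_x > 0$. Multiplying these positive factors produces a strictly positive summand, completing the converse direction.

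There is essentially no real obstacle here: the claim is a bookkeeping exercise about the supports of the representations. The only care needed is to keep straight that within a representation pair $(\balpha,\bbeta)$ of a block-rank-$1$ $\calT$-matrix, the $\balpha$-vector is supported on $\bigcup_k A_k$ while the $\bbeta$-vector is supported on $\bigcup_k B_k$, so that the $\bbeta^{[l]}$-factors account for the membership $x \in B_i$ and the $\bgamma^{[l]}$-factors account for $x \in A_j$.
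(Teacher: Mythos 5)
Your proof is correct and follows essentially the same route as the paper's: both arguments observe that since every $\CC^{[l]}$ and $\DD^{[l]}$ is a $\calT$-matrix, $\bbeta^{[l]}$ is positive on $B_1\cup\cdots\cup B_r$ and $\bgamma^{[l]}$ is positive on $A_1\cup\cdots\cup A_r$, so together with the positivity of $\ww$ each summand over $x\in B_i\cap A_j$ is strictly positive, giving $D_{i,j}>0$ if and only if $B_i\cap A_j\ne\emptyset$. Your write-up merely spells out the support bookkeeping that the paper leaves as ``easy to check.''
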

\begin{proof}
Because $(\frS,\frT)$ is a $\calT$-pair, all the matrices
  $\CC^{[i]}$ and $\DD^{[j]}$, $i\in [g]$ and $j\in [h]$, are $\calT$-matrices and thus,
  $\bbeta^{[i]}$ is positive over $B_1\cup\cdots \cup B_r$ and
  $\bgamma^{[j]}$ is positive over $A_1\cup\cdots\cup A_r$.
Since $\ww$ is positive, it is easy
  to check that $D_{i,j}>0$ if and only if $B_i\cap A_j\ne \emptyset$.
\end{proof}

Second, $\DD$ is an $r\times r$ $\calP$-diagonal matrix in $\frT^*$ if there exist
\begin{enumerate}
\item a finite subset of matrices $\{\CC^{[1]},\ldots,
  \CC^{[g]}\}\subseteq \frT$ with $g\ge 1$, and positive integers $s_1,\ldots,s_g$;\vspace{-0.06cm}
\item a finite subset of matrices $\{\DD^{[1]},\ldots,\DD^{[h]}\}\subseteq \frT$ with $h\ge 1$,
  and positive integers $t_1,\ldots,t_h$;\vspace{-0.06cm}
\item a \emph{$\calT$-weakly positive} vector $\ww\in \frS$,
\end{enumerate}
such that: Let $(\balpha^{[i]},\bbeta^{[i]})$ and $(\boldsymbol{\gamma}^{[i]},\bdelta^{[i]})$ be the representation
  of $\CC^{[i]}$ and $\DD^{[i]}$, respectively, then\vspace{0.1cm}
$$
D_{i,j}=\sum_{x\in B_i\cap A_j} \Big(\beta^{[1]}_x\Big)^{s_1}\cdots
  \Big(\beta^{[g]}_x\Big)^{s_g}\cdot \Big(\gamma^{[1]}_x\Big)^{t_1}
  \cdots \Big(\gamma^{[h]}_x\Big)^{t_h}\cdot w_x,\ \ \ \ \ \text{for all $i,j\in [r]$.}\vspace{0.05cm}
$$
Similarly one can show that\vspace{0.05cm}

\begin{lemm}
If $\ww$ is $\calT$-weakly positive, then the matrix $\DD$
  defined above is \emph{$\calP$-diagonal} where $\calP=\text{\emph{\texttt{gen}}}(\calT)$.
\end{lemm}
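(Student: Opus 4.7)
The plan is to verify the two defining conditions of a $\calP$-diagonal matrix directly from the closed-form expression for $D_{i,j}$: first, that $\DD$ is diagonal, and second, that $D_{i,i}>0$ exactly when $(i,i)\in\calP$.

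For the diagonal property, I would fix $i\ne j$ and argue that every summand in the formula for $D_{i,j}$ vanishes. The sum ranges over $x\in B_i\cap A_j$, and because $\ww$ is $\calT$-weakly positive, a summand can be nonzero only when $w_x>0$, which requires $x\in\text{diag}(\calT)$, i.e., $x\in A_k\cap B_k$ for some $k\in[r]$. Now disjointness of $A_1,\ldots,A_r$ combined with $x\in A_j\cap A_k$ forces $k=j$, and disjointness of $B_1,\ldots,B_r$ combined with $x\in B_i\cap B_k$ forces $k=i$, so $i=j$, a contradiction. Hence every summand vanishes and $D_{i,j}=0$.

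For the diagonal entries, recall that $(i,i)\in\calP=\gen(\calT)$ iff $B_i\cap A_i\ne\emptyset$. If $(i,i)\notin\calP$, the index set of the sum is empty, so $D_{i,i}=0$. Conversely, if $(i,i)\in\calP$, fix any $x\in B_i\cap A_i\subseteq\text{diag}(\calT)$; then $w_x>0$ by $\calT$-weak positivity, and the defining properties of the representation of any $\calT$-matrix give $\beta^{[k]}_x>0$ for every $k\in[g]$ (since $x\in\bigcup_k B_k$) and $\gamma^{[k]}_x>0$ for every $k\in[h]$ (since $x\in\bigcup_k A_k$). Thus the summand indexed by $x$ is strictly positive and $D_{i,i}>0$.

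The only real content is the argument in the first paragraph: it is essential that $\calT$ is a block pattern, so that the $A_k$'s (and the $B_k$'s) are pairwise disjoint, since disjointness is exactly what converts $\calT$-weak positivity of $\ww$ into the diagonal structure of $\DD$. The second paragraph is then immediate from the definition of $\gen$ together with the observation that the support of the representation of any $\calT$-matrix covers $\text{diag}(\calT)$. The proof runs exactly parallel to the preceding lemma for positive $\ww$, except that here the support restriction imposed by $\calT$-weak positivity combines with disjointness to collapse the off-diagonal entries, rather than merely to characterize their nonvanishing.
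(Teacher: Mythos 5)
Your proof is correct and follows essentially the same route as the paper's: off-diagonal entries vanish because $\calT$-weak positivity of $\ww$ forces $w_x=0$ for $x\in B_i\cap A_j$ with $i\ne j$ (the paper states this via ``$(x,x)$ is not in the pattern defined by $\calT$,'' which is exactly your disjointness argument spelled out), and $D_{i,i}>0$ if and only if $A_i\cap B_i\ne\emptyset$, i.e., $(i,i)\in\gen(\calT)$. Your extra remark that the $\beta^{[k]}_x$ and $\gamma^{[k]}_x$ factors are positive on $\bigcup_k B_k$ and $\bigcup_k A_k$ is a correct and slightly more explicit justification of the diagonal-positivity step than the paper gives.
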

\begin{proof}
First, we show that $\DD$ is diagonal.
Let $i\ne j$ be two distinct indices in $[r]$.
If $B_i\cap A_j=\emptyset$, then $D_{i,j}$ is trivially $0$.
Otherwise, for every $k\in B_i\cap A_j$,
  we know that $(k,k)$ is not in the pattern defined by
  $\calT$ because $k\in B_i$, $k\in A_j$ but $i\ne j$.
As a result, we have $w_k=0$
  which implies $D_{i,j}=0$ for all $i\ne j\in [r]$.

Second, if $A_i\cap B_i\ne \emptyset$ then $(k,k)$ is in the pattern
  defined by $\calT$  for every $k\in A_i\cap B_i$. This implies that $w_k>0$.
As a result, we have $D_{i,i}>0$ if and only if $A_i\cap B_i\ne \emptyset$.
\end{proof}

\subsection{Definition of $\frS^*$}

Now we define $\frS^*$.
To this end, we first define $\frS^\#$ which is a set of $r$-dimensional
  positive and $\calP$-weakly positive vectors.
We have $\ww^\#\in \frS^\#$ if and only if one of the following four cases is true:
\begin{enumerate}
\item $\ww^\#=\11$;
%Second, $\ww^\#\in \frS^\#$ if and only if one of the following three cases is true:\newpage
%\begin{enumerate}
\item There exist a finite subset $\{\CC^{[1]},\ldots,\CC^{[g]}\}\subseteq \frT$ with $g\ge 1$,
  positive integers $s_1,\ldots,s_g$ and a vector $\ww \in \frS$
  (positive or $\calT$-weakly positive) such that:
Let $(\balpha^{[i]},\bbeta^{[i]})$ be the representation of $\CC^{[i]}$, then\vspace{0.1cm}
$$
w^\#_i=\sum_{x\in A_i} \Big(\alpha^{[1]}_{x}\Big)^{s_1}\cdots
  \Big(\alpha^{[g]}_{x}\Big)^{s_g}\cdot w_x,\ \ \ \ \
  \text{for all $i\in [r]$}.
$$
It can be checked that $\ww^\#$ is positive if $\ww$ is positive
  and $\ww^\#$ is $\calP$-weakly positive if $\ww$ is $\calT$-weakly positive.\vspace{0.02cm}

\item There exist a finite subset $\{\DD^{[1]},\ldots,\DD^{[h]}\}\subseteq \frT$
  with $h\ge 1$, positive integers $t_1,\ldots,t_g$ and a vector $\ww \in \frS$
  (positive or $\calT$-weakly positive) such that:
Let $( \bgamma^{[i]},\bdelta^{[i]})$ be the representation of $\DD^{[i]}$, then\vspace{0.1cm}
$$
w^\#_i=\sum_{x\in B_i} \Big(\delta^{[1]}_{x}\Big)^{t_1}\cdots
  \Big(\delta^{[h]}_{x}\Big)^{t_h}\cdot w_x,\ \ \ \ \
  \text{for all $i\in [r]$}.
$$
Similarly, it can be checked that $\ww^\#$ is positive if $\ww$ is positive
  and $\ww^\#$ is $\calP$-weakly positive if $\ww$~is $\calT$-weakly positive.\vspace{0.02cm}

\item There exist two finite subsets
$\{\CC^{[1]},\ldots,\CC^{[g]}\}\subseteq \frT$ and $\{\DD^{[1]},\ldots,\DD^{[h]}\}\subseteq \frT$ with $g\ge 1$
  and $h\ge$ $1$, positive integers $s_1,\ldots,s_g,t_1,\ldots,t_h$ and a vector $\ww \in \frS$
  (positive or $\calT$-weakly positive) such that:
Let $(\balpha^{[i]},\bbeta^{[i]})$ and $(\bgamma^{[i]},\bdelta^{[i]})$
  be the representations of $\CC^{[i]}$ and $\DD^{[i]}$, respectively, then\vspace{0.08cm}
$$
w^\#_i=\sum_{x\in B_i\cap A_i} \Big(\beta^{[1]}_{x}\Big)^{s_1}\cdots
  \Big(\beta^{[g]}_{x}\Big)^{s_g}\cdot \Big(\gamma^{[1]}_x\Big)^{t_1}
  \cdots \Big(\gamma^{[h]}_x\Big)^{t_h}\cdot w_x,\ \ \ \ \
  \text{for all $i\in [r]$}.
$$
It can be checked that $\ww^\#$ is always a $\calP$-weakly positive vector.\vspace{0.06cm}
\end{enumerate}
This finishes the definition of $\frS^\#$.\vspace{0.015cm}

Set $\frS^*$ is the \emph{closure} of $\frS^\#$: $\ww\in \frS^*$ if and only if
  there exist a finite subset $\{\ww_1,\ldots,\ww_g\} \subseteq \frS^\#$
  and positive integers $s_1,\ldots,s_g$ such that
$$
\ww=\big(\ww_1\big)^{s_1}\circ \cdots \circ \big(\ww_g\big)^{s_g},
$$
where $\circ$ denotes the Hadamard product.
It immediately implies that $\frS^*$ is closed, and any vector in it
  is either positive or $\calP$-weakly positive.
It is also easy to check that $(\frS^*,\frT^*)$ is a \emph{generalized} $\calP$-pair.

\subsection{Definition of $(\frS',\frT')$}

We use $(\frS^*,\frT^*)$ to define $(\frS',\frT')$ as follows.

First, $\frT'$ contains exactly all the $\calP$-matrices in $\frT^*$.

The definition of $\frS'$ is more complicated.
We have $\ww'\in \frS'$ if and only if \vspace{0.05cm}
\begin{enumerate}
\item $\ww'\in \frS^*$; or
\item There exist\vspace{0.05cm}
%First, we have $\frS^*\subseteq \frS'$ and thus, $\11\in \frS'$.
%Second, we have $\ww'\in \frS'-\frS^*$ if there exist
\begin{enumerate}
\item a finite subset
  of $\calP$-matrices $\{\CC^{[1]},\ldots,\CC^{[g]}\}\subseteq \frT^*$ with $g\ge 0$ (so
  this set could\\ be empty) and $g$ positive integers $s_1,\ldots,s_g$;\vspace{0.04cm}
\item a finite subset of $\calP$-diagonal matrices $\{\DD^{[1]},\ldots,\DD^{[h]}\}\subseteq \frT^*$
  with $h\ge 1$, and $h$\\ positive integers $t_1,\ldots,t_h$;\vspace{0.04cm}
\item and a vector $\ww\in \frS^*$ (which is either positive or $\calP$-weakly positive),\vspace{0.05cm}
\end{enumerate}
such that $\ww'$ satisfies\vspace{0.1cm}
$$
w'_i=w_i\cdot \Big(C^{[1]}_{i,i}\Big)^{s_1}\cdots \Big(C^{[g]}_{i,i}\Big)^{s_g}\cdot
  \Big(D^{[1]}_{i,i}\Big)^{t_1}\cdots \Big(D^{[h]}_{i,i}\Big)^{t_h},\ \ \ \ \ \text{for any $i\in [r]$.}\vspace{0.02cm}
$$
\end{enumerate}
It can be checked that every $\ww'\in \frS'$ is either positive or $\calP$-weakly positive.\vspace{0.02cm}

This finishes the definition of $(\frS',\frT')$ and the \genp\ operation.
It is easy to verify that the new pair $(\frS',\frT')$ is a $\calP$-pair.
Moreover, since $\frS^*$ is closed, one can show that $\frS'$ is also closed.
This proved the first part of Lemma \ref{reductionlemm}:

\begin{lemm}\label{partlem2}
Let $(\frS,\frT)$ be a $\calT$-pair for some non-trivial
  block pattern $\calT$.
Suppose every matrix in $\frT$ is block-rank-$1$,
then $(\frS',\frT')=\text{\emph{\genp}}(\frS,\frT)$ is a $\calP$-pair,
  where $\calP=\text{\emph{\gen}}(\calT)$, and $\frS'$ is \emph{closed}.
Moreover, the pair $(\frS^*,\frT^*)$ defined from $(\frS,\frT)$
  is a \emph{generalized} $\calP$-pair and $\frS^*$ is also \emph{closed}.
\end{lemm}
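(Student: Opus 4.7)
The plan is to unwind the case-by-case construction and verify the four assertions — that $(\frS^*,\frT^*)$ is a generalized $\calP$-pair, that $\frS^*$ is closed, that $(\frS',\frT')$ is a $\calP$-pair, and that $\frS'$ is closed — using the two inline sub-lemmas (which already cover the nontrivial typing for $\frT^*$) together with a short list of Hadamard-product stability facts.

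First I would dispense with the matrix side. The two inline sub-lemmas already show that every element of $\frT^*$ is either a $\calP$-matrix (recipe~1, $\ww$ positive) or a $\calP$-diagonal matrix (recipe~2, $\ww$ being $\calT$-weakly positive). Nonemptiness follows by taking $g=h=1$, any $\CC\in\frT$ in both slots, $s_1=t_1=1$, and $\ww=\11\in\frS$, which simultaneously produces an element of $\frT'$. Since $\frT'$ is defined as the subset of $\calP$-matrices inside $\frT^*$, the matrix components of both pairs are as claimed.

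Next, for the vector side of $(\frS^*,\frT^*)$, each of the four cases defining $\frS^\#$ is already annotated with its positivity type: case~1 gives $\11$ (positive); cases~2 and~3 preserve the type of $\ww$; case~4 is always $\calP$-weakly positive because its summation over $B_i\cap A_i$ is empty whenever $(i,i)\notin\calP$. These annotations use only the fact that, since every matrix in $\frT$ is block-rank-$1$ of block pattern $\calT$, the representation vectors $\balpha^{[i]},\bbeta^{[i]},\bgamma^{[i]},\bdelta^{[i]}$ are supported exactly on $\bigcup_k A_k$ or $\bigcup_k B_k$. The key stability facts I would then record are elementary: the Hadamard product of two positive vectors is positive; the Hadamard product of a positive with a $\calP$-weakly positive vector, or of two $\calP$-weakly positive vectors, is $\calP$-weakly positive. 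Since $\frS^*$ is defined as the set of finite Hadamard products of positive integer powers of elements of $\frS^\#$, these facts immediately yield that every vector in $\frS^*$ is positive or $\calP$-weakly positive, that $\frS^*$ is closed by construction, and (via case~1) that $\11\in\frS^*$. Hence $(\frS^*,\frT^*)$ is a generalized $\calP$-pair.

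For $(\frS',\frT')$, clause~1 of the definition gives $\frS^*\subseteq\frS'$, hence $\11\in\frS'$. For clause~2, the product $\prod_k (C^{[k]}_{i,i})^{s_k}\prod_k (D^{[k]}_{i,i})^{t_k}$ is supported exactly on $\{i:(i,i)\in\calP\}$ because every diagonal entry of a $\calP$-matrix or $\calP$-diagonal matrix is positive iff $(i,i)\in\calP$, and the requirement $h\ge 1$ forces at least one such factor; multiplying termwise with $\ww\in\frS^*$ therefore yields a $\calP$-weakly positive vector, so every $\ww'\in\frS'$ has the correct positivity type and $(\frS',\frT')$ is a $\calP$-pair. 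Closure of $\frS'$ under Hadamard products is a case split on whether each input comes from clause~1 or clause~2: two clause-1 vectors multiply to land in $\frS^*$ by closure of $\frS^*$; a clause-1 vector $\ww'_1\in\frS^*$ combined with a clause-2 vector of base $\ww_2\in\frS^*$ produces a clause-2 vector with new base $\ww'_1\circ\ww_2\in\frS^*$ and the same diagonal factors; two clause-2 vectors combine via base $\ww_1\circ\ww_2\in\frS^*$ and concatenated factor lists with their exponents. The only bookkeeping subtlety — and the step I expect to be the mildest but most pedantic obstacle — is the constraint $h\ge 1$ in clause~2, which makes the case analysis asymmetric; each case above respects it automatically since every clause-2 input already supplies at least one $\DD$-factor.
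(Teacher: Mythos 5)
Your proposal is correct and follows essentially the same route as the paper: the paper establishes this lemma by the two inline lemmas typing the matrices in $\frT^*$ together with the ``it can be checked'' positivity and closure verifications attached to the definitions of $\frS^\#$, $\frS^*$ and $\frS'$, which is exactly what you fill in (including the Hadamard-product stability facts and the clause-by-clause closure argument for $\frS'$). The only cosmetic point is that when concatenating factor lists for two clause-2 vectors you should merge repeated matrices by adding exponents, since the definition uses finite subsets; this is immediate and not a gap.
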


\section{Dichotomy: Tractability}\label{tractability}

In this section, we prove Lemma \ref{tractabilitylemma}, the tractability part of the dichotomy theorem.

Let $(\frS_0,\frT_0)=(\frP,\frQ)$ be a finite $\calT_0$-pair, for some
  block pattern $\calT_0$.
Let
$
(\frS_0,\frT_0), \ldots,(\frS_h,\frT_h)
$
be~a sequence of $h+1$ pairs for some $h\ge 0$,
  $m_0>m_1>\ldots>m_h\ge 1$ be $h+1$ positive integers,
  and $\calT_0$, $\calT_1,\ldots,\calT_h$ be $h+1$ block patterns such that\vspace{0.08cm}
\begin{itemize}
\item[\textbf{R:}]
For every $i\in [0:h]$, $\calT_i$ is an $m_i\times m_i$ block pattern;\\
For every $i\in [h]$, $\calT_i=\text{\genb}(\calT_{i-1})$;\\
Either $\calT_h=\emptyset$ is trivial or every set in $\calT_h$ is a singleton; \\
For every $i\in [h]$, $(\frS_i,\frT_i)=\text{\genp}
  (\frS_{i-1},\frT_{i-1})$ is a $\calT_i$-pair; and \\
For every $i\in [0:h]$, all the matrices in $\frT_i$
  are block-rank-$1$.\vspace{0.08cm}
\end{itemize}
We need to show that $Z_{\frP,\frQ}(\cdot)=Z_{\frS_0,\frT_0}(\cdot)$ can be computed in polynomial time.\vspace{0.01cm}

Let $\calG_0=(G_0,\calV_0,\calE_0)$ be an input labeled directed graph
  of $Z_{\frS_0,\frT_0}(\cdot)$.
By definition we have $\calV_0(v)\in$ $\frS_0$ for all vertices $v\in V(G_0)$, and
  $\calE_0(uv)\in \frT_0$ for all edges $uv\in E(G_0)$.
We further assume that the underlying undirected graph of $G_0$ is connected.
(If $G_0$ is not connected, then we only need to compute $Z_{\frS_0,\frT_0}(\cdot)$ for
  each undirected
  connected component of $G_0$ and multiply them to obtain $Z_{\frS_0,\frT_0}(\calG_0)$.)\vspace{0.015cm}

To compute $Z_{\frS_0,\frT_0}(\calG_0)$,
  we will construct~in polynomial-time a sequence of $h+1$ labeled
  directed graphs $\calG_0,\ldots,\calG_h$.
We will show that these graphs have the following two properties:\vspace{0.05cm}
\begin{enumerate}
\item[\textbf{P}$_1$:] For every $\ell\in [0:h]$, $\calG_\ell=(G_\ell,\calV_\ell,\calE_\ell)$
  is a labeled directed graph such that
  $\calV_\ell(v)\in \frS_\ell$ for all $v\in$\\ $V(G_\ell)$;
  $\calE_\ell(uv)\in \frT_\ell$ for all $uv\in E(G_\ell)$; and the underlying undirected
  graph of $G_\ell$ is connected.
\item[\textbf{P}$_2$:] $Z(\calG_0)=Z(\calG_1)=\cdots=Z(\calG_h).$\vspace{0.05cm}
\end{enumerate}
As a result, to compute $Z(\calG_0)$, one only needs to compute $Z(\calG_h)$.
On the other hand, we do know how to compute $Z(\calG_h)$ in polynomial time.
If $\calT_h$ is trivial, then computing $Z(\calG_h)$ is also trivial.
Otherwise, if every set in $\calT_h$ is a singleton, then one can efficiently enumerate all
  possible assignments of $\calG_h$ with non-zero weight (since the underlying undirected
  graph of $G_h$ is \emph{connected}).
This allows us to compute $Z(\calG_0)=Z(\calG_h)$ in polynomial time.

\subsection{Construction of $\calG'$ from $\calG$}

Let $(\frS,\frT)$ be a $\calT$-pair for some $m\times m$ non-trivial block pattern $\calT$
  such that all the matrices in $\frT$ are block-rank-$1$.
Then by Lemma \ref{partlem2},
  $(\frS',\frT')=\genp(\frS,\frT)$ is a $\calP$-pair where $\calP=\gen(\calT)$.

Let $\calG=(G,\calV,\calE)$ be a labeled directed graph such that
  $\calV(v)\in \frS$ for all $v\in V(G)$; $\calE(uv)\in \frT$~for all $uv\in E(G)$;
  and the underlying undirected graph of $G$ is connected.
We further assume that $G$ is not trivial: $V$ is not a singleton (since for this
  special case, $Z(\calG)$ can be computed trivially).
In this section, we show how to construct a new graph $\calG'=(G',\calV',\calE')$
  in polynomial time such that $\calV'(v)\in \frS'$ for all $v\in V(G')$;
  $\calE'(uv)\in \frT'$ for all $uv\in E(G')$; the underlying undirected graph
  of $G'$ is connected; and\vspace{-0.03cm}
\begin{equation}\label{impeq}
Z(\calG)=Z(\calG').
\end{equation}
Then we can repeatedly apply this construction, starting from $\calG_0$,
  to obtain a sequence of $h+1$ labeled directed graphs $\calG_0,\ldots,\calG_h$ that satisfy
  both \textbf{P}$_1$ and \textbf{P}$_2$.
Lemma \ref{tractabilitylemma} then follows.\vspace{0.015cm}

Now we describe the construction of $\calG'$.
Let $G=(V,E)$ and $\calT=\{(A_1,B_1),\ldots,(A_{n}, B_n)\}$ for some $n\ge 1$,
  then $\calP=\gen(\calT)$ is an $n\times n$ pattern.
The construction of $\calG'$ is divided into two steps,
  just like the definition of $(\frS',\frT')=\genp(\frS,\frT)$
  in Appendix \ref{construction}.
In the first step, we construct a labeled graph $\calG^*=(G^*,\calV^*,\calE^*)$
  from $\calG$ such that
\begin{enumerate}
\item $\calV^*(v)\in \frS^*$ for all $v\in V(G^*)$;
  $\calE^*(uv)\in \frT^*$ for all $uv\in E(G^*)$; and the underlying
  undirected\\ graph of $G^*$ is connected, where $(\frS^*,\frT^*)$ denotes the
  \emph{generalized} $\calP$-pair defined in Appendix \ref{construction}.
\item $Z(\calG^*)=Z(\calG)$.
\end{enumerate}
In the second step, we construct $\calG'$ from $\calG^*$ and show that $Z(\calG')=Z(\calG^*)$.

\subsubsection{Construction of $\calG^*$ from $\calG$}

Let $\calG=(G,\calV,\calE)$ and $G=(V,E)$.
We decompose the edge set using the following equivalence relation:\vspace{0.05cm}

\begin{defi}
Let $e,e'$ be two directed edges in $E$.
We say $e \sim e'$ if there exist a sequence of edges $$e=e_0,e_1,\ldots,e_k=e'$$
  in $E$ such that for all $i\in [0:k-1]$,
$e_i$ and $e_{i+1}$ share either the same head or the same tail.\vspace{0.05cm}
\end{defi}

We divide $E$ into equivalence classes $R_1,\ldots,R_f$ using $\sim$:
$$
E=R_1\cup \ldots\cup R_f,\ \ \ \ \text{for some $f\ge 1$.}
$$
Because the underlying undirected graph of $G$ is connected, there is no
  isolated vertex $v$ in $G$ and thus every vertex $v\in V$ appears as an incident
  vertex of some edge in at least one of the equivalence classes.
This equivalence relation is useful because of the following observation.

\begin{obse}\label{obse1}
\emph{For any $i\in [f]$, the subgraph spanned by $R_i$
  is connected if we view it as an undirect\-ed~graph.
There are three types of vertices in it:\vspace{0.05cm}
\begin{enumerate}
\item Type-L: vertices which only have outgoing edges in $R_i$;
\item Type-R: vertices which only have incoming edges in $R_i$; and
\item Type-M: vertices which have both incoming and outgoing edges in $R_i$.\vspace{0.05cm}
\end{enumerate}
Let $\xi:V\rightarrow [m]$ be any assignment with
  $\text{wt}(\calG,\xi)\ne 0$,
then for any $i\in [f]$ there exists a unique $k_i\in [n]$ such that the
  value of every edge $uv\in R_i$ is derived from the $k_i$-th block of $\calT$:
$$
\xi(u)\in A_{k_i}\ \ \ \ \text{and}\ \ \ \ \xi(v)\in B_{k_i}.
$$
\noindent Therefore, for every $i\in [f]$, there exists a unique $k_i\in [n]$ such that\vspace{0.1cm}
\begin{enumerate}
\item For every Type-L vertex $v$ in the graph spanned by $R_i$, $\xi(v)\in A_{k_i}$;\vspace{-0.06cm}
\item For every Type-R vertex $v$ in the graph spanned by $R_i$, $\xi(v)\in B_{k_i}$; and\vspace{-0.06cm}
\item For every Type-M vertex $v$ in the graph spanned by $R_i$, $\xi(v)\in A_{k_i}\cap B_{k_i}$.\vspace{0.1cm}
\end{enumerate}}
\end{obse}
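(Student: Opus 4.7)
The plan is to deduce the observation from two simple ingredients: (i) the disjointness properties $A_i\cap A_j=B_i\cap B_j=\emptyset$ for $i\neq j$ in the block pattern $\calT$, which tell us that every positive entry of a $\calT$-matrix sits in a uniquely determined block; and (ii) the definition of $\sim$, which only chains edges through a shared head or a shared tail (never through a ``head-meets-tail'' coincidence). Undirected connectivity of the subgraph spanned by $R_i$ is immediate from (ii): any chain $e=e_0,e_1,\ldots,e_k=e'$ witnessing $e\sim e'$ is an undirected walk through the endpoints of these edges, so all endpoints of edges in $R_i$ lie in one undirected component. The partition of those vertices into Type-L, Type-R, Type-M is by definition, not something to prove.

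Next I would fix $i\in [f]$ and an assignment $\xi$ with $\mathrm{wt}(\calG,\xi)\neq 0$, and define $k(e)$ for each edge $e=uv\in R_i$ as follows. Since $\calE(uv)\in\frT$ is a $\calT$-matrix and $\mathrm{wt}(\calG,\xi)\neq 0$ forces $\calE(uv)_{\xi(u),\xi(v)}>0$, there is a block index $k$ with $\xi(u)\in A_k$ and $\xi(v)\in B_k$; disjointness of the $A_k$'s (and of the $B_k$'s) makes this $k=k(e)$ unique. The content of the observation is that $k(e)$ is constant on $R_i$. For this I walk along the chain defining $\sim$: if $e_s$ and $e_{s+1}$ share the same head $w$, then $\xi(w)\in B_{k(e_s)}\cap B_{k(e_{s+1})}$, so disjointness of the $B_j$'s gives $k(e_s)=k(e_{s+1})$; symmetrically, if they share the same tail, disjointness of the $A_j$'s yields the same conclusion. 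Iterating along the chain gives one common value $k_i\in [n]$ for all edges of $R_i$.

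Finally, the three consequences for Type-L, Type-R, Type-M fall out immediately from this single $k_i$: a Type-L vertex $v$ is the tail of at least one edge of $R_i$, so $\xi(v)\in A_{k_i}$; a Type-R vertex is the head of such an edge, so $\xi(v)\in B_{k_i}$; and a Type-M vertex is both, so $\xi(v)\in A_{k_i}\cap B_{k_i}$. The only place requiring genuine care, which I would call out as the main (mild) obstacle, is the chain step: one must verify that the relation $\sim$ is defined exactly so that consecutive edges share a vertex in matching roles (head--head or tail--tail), because a shared vertex in mismatched roles (head of one, tail of the other) would force $\xi(v)\in B_{k(e_s)}\cap A_{k(e_{s+1})}$, which is not forbidden by the block structure and would break the argument. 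In other words, the observation is a precise match between the definition of $\sim$ and the combinatorial content of ``block pattern,'' and the proof amounts to making that match explicit.
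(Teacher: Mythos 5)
Your proof is correct and is exactly the argument the paper implicitly relies on: the paper states this as an Observation without a written proof, and the intended justification is precisely your chain argument using the head--head/tail--tail structure of $\sim$ together with the disjointness $A_i\cap A_j=B_i\cap B_j=\emptyset$ of the block pattern. Your explicit remark that the definition of $\sim$ deliberately excludes head-meets-tail coincidences is the right point of care, and the rest (connectivity of the subgraph spanned by $R_i$ and the Type-L/R/M consequences) follows just as you say.
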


Now we build $\calG^*=(G^*,\calV^*,\calE^*)$, where $G^*=(V^*,E^*)$.
We start with the construction of $G^*$.
$V^*$ is exactly $[f]$ in which the vertex $i\in [f]$ corresponds to $R_i$ of $G$.
For every vertex $v\in V$, if it appears in both the subgraph spanned by $R_i$
  and the one spanned by $R_j$ for some $i\ne j\in [f]$ (note that it cannot appear in more than two
  such subgraphs)
  and if the incoming edges of $v$ are from $R_i$ and the outgoing edges
  of $v$ are from $R_j$, then we add a directed edge $ij$ in $E^*$. Note that
  $E^*$ may have parallel edges.
This finishes the construction of $G^*$.
It is easy to verify that the underlying undirected graph of $G^*$
  is also \emph{connected}. \vspace{0.01cm}

The only thing left is to label the graph $G^*$ with vertex and edge weights.
For every edge in $E^*$ we assign it the following $n\times n$ matrix $\DD$.
Assume the edge $ij$ is created because of $v\in V$, which appears in both
  $R_{i}$ and $R_j$.
Let the incoming edges of $v$ be $u_1v,\ldots,u_sv$ in $R_i$
  and the outgoing edges of $v$ be $vw_1,\ldots,vw_t$ in $R_j$,
  where $s,t\ge 1$.
We use $\CC^{[i]}\in \frT$ to denote the edge weight
  of $u_iv$, $\DD^{[i]}\in \frT$ to denote the edge weight of $vw_i$,
  and $\ww\in \frS$ to denote the vertex weight of $v$ in $\calG$.
We also use $(\balpha^{[i]},\bbeta^{[i]})$ and $(\bgamma^{[i]},\bdelta^{[i]})$ to denote the representations
  of $\CC^{[i]}$ and $\DD^{[i]}$, respectively.
Then the $(i,j)$th entry of $\DD$ is\vspace{0.08cm}
$$
D_{i,j}=\sum_{x\in B_i\cap A_j} \beta^{[1]}_x\cdots \beta^{[s]}_x \cdot \gamma^{[1]}_x\cdots
  \gamma^{[t]}_x\cdot w_x,\ \ \ \ \ \text{for all $i,j\in [n]$.}
$$
By the definition of \genp, it is easy to check that $\DD\in \frT^*$.\vspace{0.015cm}

Finally, we define the vertex weight of $i\in  [f]$. To this end,
  we first define an $n$-dimensional vector $\ww^{[v]}$ for each vertex $v\in V$
  that \emph{only appears} in $R_i$.
We then multiply (using Hadamard product) all such vectors to get the
  vertex weight vector of $i\in [f]$.

Let $v\in V$ be a vertex which only appears in $R_i$, then we have the following three cases:\vspace{0.05cm}
\begin{enumerate}
\item If $v$ is Type-L, then we use $vw_1,\ldots,vw_s$ to denote its outgoing edges.
We let $\ww$ denote the vertex\\ weight of $v$ in $\calG$ and
  $\CC^{[j]}$ denote the edge weight of $vw_j$ with representation $(\balpha^{[j]},\bbeta^{[j]})$.
Then
$$
w^{[v]}_k=\sum_{x\in A_k}\hspace{0.04cm} \alpha^{[1]}_x\cdots \alpha^{[s]}_x\cdot w_x,
\ \ \ \ \ \text{for all $k\in [n]$.}
$$
\item If $v$ is Type-R, then we use $u_1v,\ldots, u_sv$ to denote its
  incoming edges.
We let $\ww$ denote the vertex\\ weight of $v$ in $\calG$ and
  $\CC^{[j]}$ denote the edge weight of $u_jv$ with representation $(\balpha^{[j]},\bbeta^{[j]})$.
Then
$$
w^{[v]}_k=\sum_{x\in B_k}\hspace{0.04cm} \beta^{[1]}_x\cdots \beta^{[s]}_x\cdot w_x,
\ \ \ \ \ \text{for all $k\in [n]$.}
$$
\item If $v$ is Type-M, then we use $u_1v,\ldots,u_sv,vw_1,\ldots,vw_t$ to
  denote its edges where $s,t\ge 1$.
We let $\ww$ be the vertex weight of $v$ in $\calG$, $\CC^{[j]}$ be
  the edge weight of $u_jv$ with representation $(\balpha^{[j]},\bbeta^{[j]})$,
  and $\DD^{[j]}$ be the edge weight of $vw_j$ with representation $(\bgamma^{[j]},\bdelta^{[j]})$.
Then\vspace{0.04cm}
$$
w^{[v]}_k=\sum_{x\in B_k\cap A_k} \beta^{[1]}_x\cdots \beta^{[s]}_x\cdot
  \gamma^{[1]}_x\cdots \gamma^{[t]}_x\cdot w_x,\ \ \ \ \ \text{for all $k\in [n]$.}
$$
\end{enumerate}
We then multiply (using Hadamard product) all the vectors $\ww^{[v]}$ over all vertices
  $v$ that only appear in $R_i$ to get the vertex weight vector $\ww$ of $i\in [f]$ in $\calG^*$.
By definition, it can be checked that $\ww\in \frS^*$.
This finishes the construction of $\calG^*$. Next, we show that $Z(\calG^*)=Z(\calG)$.\vspace{0.02cm}

Let $\phi:V^*=[f]\rightarrow [n]$ be any assignment. We use $\Xi_\phi$ to denote\vspace{0.03cm}
$$
\Big\{\hspace{0.05cm}\xi:V\rightarrow [m]\hspace{0.12cm}\Big|\hspace{0.12cm}
  \forall\hspace{0.04cm}i\in [f],\ \forall\hspace{0.06cm}
  uv\in R_i,\ \xi(u)\in A_{\phi(i)}\ \hspace{0.06cm} \text{and}\ \hspace{0.06cm}\xi(v)\in
  B_{\phi(i)}\hspace{0.02cm}\Big\}.\vspace{0.06cm}
$$
Equivalently, $\phi$ defines for each vertex $v\in V$ a set $U_v\subseteq [m]$, where
\begin{enumerate}
\item If $v$ appears in both the subgraph spanned by $R_i$ and the subgraph spanned by $R_j$,
  for some \\$i\ne j\in [f]$;
  and $v$ is Type-R in $R_i$ and Type-L in $R_j$, then $U_v=B_{\phi(i)}\cap A_{\phi(j)}$;
\item Otherwise, assume $v$ only appears in the subgraph spanned by $R_i$. Then
\begin{enumerate}
\item If $v$ is Type-L, then $U_v=A_{\phi(i)}$;
\item If $v$ is Type-R, then $U_v=B_{\phi(i)}$; and
\item If $v$ is Type-M, then $U_v=B_{\phi(i)} \cap A_{\phi(i)}$,
\end{enumerate}\end{enumerate}
  such that
$
\xi\in \Xi_\phi\hspace{0.05cm} \Longleftrightarrow
  \hspace{0.05cm} \xi(v)\in U_v\ \text{for all $v\in V$.}
$
In particular, $\Xi_\phi=\emptyset$ if $U_v=\emptyset$ for some $v\in V$.

By Observation \ref{obse1}, if $\text{wt}(\calG,\xi)\ne 0$ then $\xi\in \Xi_\phi$ for some unique $\phi$.
For any $v\in V$, we let $\ww^{[v]}$ denote its vertex weight in $\calG$;
  and for any $uv\in E$, we let $\DD^{[uv]}$ denote its edge weight in $\calG$,
  with representation $(\balpha^{[uv]},\bbeta^{[uv]})$.
Then by the definition of $\Xi_\phi$, we have for all $\xi\in \Xi_\phi$,
$$
D^{[uv]}_{\xi(u),\xi(v)}= \alpha^{[uv]}_{\xi(u)} \cdot \beta^{[uv]}_{\xi(v)},
\ \ \ \ \ \text{for all $uv\in E$.}
$$
Therefore, we have the following equation:
$$
\sum_{\xi\in \Xi_\phi}\hspace{0.04cm} \text{wt}(\calG,\xi)
=\sum_{\xi\in \Xi_\phi}\hspace{0.02cm}\left( \prod_{v\in V} w^{[v]}_{\xi(v)}\hspace{0.04cm}
  \prod_{uv\in E} \alpha^{[uv]}_{\xi(u)}\cdot \beta^{[uv]}_{\xi(v)}\right).
$$
This sum can be written as a product:
$$
\sum_{\xi\in \Xi_\phi} \hspace{0.04cm}\text{wt}(\calG,\xi)
  = \prod_{v\in V} H_v,
$$
in which for every $v\in V$, the factor $H_v$ is a sum over $\xi(v)\in U_v$.

By the construction of $\calG^*$, we can show that
\begin{equation}\label{uuuttt}
\text{wt}(\calG^*,\phi)=\sum_{\xi\in \Xi_\phi}\hspace{0.04cm} \text{wt}(\calG,\xi)=\prod_{v\in V} H_v.
\end{equation}
This follows from the following observations:
\begin{enumerate}
\item If $v$ appears in both the subgraph spanned by $R_i$ and the subgraph
  spanned by $R_j$, for some\\ $i\ne j\in [n]$,
  and this $v$ defines an edge $ij\in E^*$, then the edge weight of this edge $ij$ in $\calG^*$ with\\ respect to
  $\phi$ is exactly $H_v$;

\item For every $i\in [n]$, we let $V_i\subseteq V$ denote the set of vertices
  that only appear in the subgraph\\ spanned by $R_i$.
We also let $\ww$ denote the vertex weight of $i\in [n]$ in $\calG^*$.
Then we have
$$
w_{\xi(i)}=\prod_{v\in V_i} H_{v}.
$$
\end{enumerate}\newpage
\noindent As a result, it follows from (\ref{uuuttt}) that\vspace{0.1cm}
$$
Z(\calG^*)=\sum_{\phi}\hspace{0.04cm} \text{wt}(\calG^*,\phi)
  =\sum_{\phi}\hspace{0.04cm} \sum_{\xi\in \Xi_\phi}\hspace{0.04cm} \text{wt}(\calG,\xi)=Z(\calG).\vspace{0.06cm}
$$

\subsubsection{Construction of $\calG'$ from $\calG^*$}

Let $\calG^*=(G^*,\calV^*,\calE^*)$ be the labeled directed graph
  constructed above, where $G^*=(V^*,E^*)$.
We know that $\calV^*(v)\in \frS^*$ for all $v\in V^*$;
  $\calE^*(uv)\in \frT^*$ for all $uv\in E^*$; and the underlying
  undirected graph of $G^*$ is connected.
Since $(\frS^*,\frT^*)$ is a generalized $\calP$-pair, every $\DD\in \frT^*$ is either
  a $\calP$-matrix or a $\calP$-diagonal matrix.\vspace{0.015cm}

We will build a new labeled directed graph
  $\calG'=(G',\calV',\calE')$ with $G'=(V',E')$
  such that $\calV'(v)\in \frS'$ for all $v\in V'$; $\calE'(uv)\in \frT'$ for all $uv \in E'$;
  the underlying undirected graph of $G'$ is connected;
  and $$Z(\calG')=Z(\calG^*).$$

Let $E^*=E_0\cup E_1$, where $E_0$ consists of the edges in $E^*$ whose
  weight is a $\calP$-matrix and $E_1$ consists of the edges in $E^*$ whose weight
  is a $\calP$-diagonal matrix.
We decompose the vertex set $V^*$ of $G^*$ using the following equivalence
  relation $\sim$.\vspace{0.05cm}

\begin{defi}
Let $v,v'$ be two distinct vertices in $V^*$.
$v\sim v'$ if $v$ and $v'$ are connected by $E_1$ \emph{(}which is
  viewed as a set of undirected edges here\emph{)}.\vspace{0.05cm}
\end{defi}

By using $\sim$, we divide $V^*$ into equivalence
  classes $V_1,\ldots,V_g$ for some $g\ge 1$.
This relation is useful because of the following observation:\vspace{0.04cm}

\begin{obse}\label{obse2}
\emph{Let $\phi:V^*\rightarrow [n]$ be an assignment with non-zero weight:\hspace{-0.05cm}
  $\text{{wt}}(\calG^*,\phi)\ne 0$.
Then for any $i\in [g]$, there exists a unique $k_i\in [n]$ such that
  $\phi(v)=k_i$ for all $v\in V_i$.
}\end{obse}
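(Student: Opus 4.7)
The plan is short because Observation \ref{obse2} follows almost immediately from two facts already in place: a $\calP$-diagonal matrix has, by definition, all off-diagonal entries equal to zero; and the equivalence classes $V_i$ are by construction the connected components of the graph $(V^*, E_1)$ seen as undirected.

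First I would fix any single edge $uv \in E_1$. By definition of $E_1$, its edge weight $\calE^*(uv)$ is a $\calP$-diagonal matrix, so every off-diagonal entry of $\calE^*(uv)$ vanishes. Since
$$
\text{wt}(\calG^*,\phi) = \prod_{v \in V^*} w^{[v]}_{\phi(v)} \prod_{uv \in E^*} \bigl(\calE^*(uv)\bigr)_{\phi(u),\phi(v)}
$$
is assumed nonzero, every factor on the right must be nonzero; in particular $(\calE^*(uv))_{\phi(u),\phi(v)} \neq 0$. Combined with the diagonal structure of $\calE^*(uv)$, this forces $\phi(u) = \phi(v)$ for every edge $uv \in E_1$.

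Second I would promote this edge-wise identity to an identity on the whole equivalence class. By the definition of $\sim$, any two vertices $v, v' \in V_i$ are joined by a finite sequence $v = v_0, v_1, \ldots, v_\ell = v'$ such that each consecutive pair $v_j v_{j+1}$ or $v_{j+1} v_j$ lies in $E_1$. Applying the previous paragraph to each consecutive pair and chaining the resulting equalities yields $\phi(v) = \phi(v')$. Define $k_i$ to be this common value of $\phi$ on $V_i$; uniqueness of $k_i$ is automatic because $V_i$ is nonempty. Finally one should also verify $k_i \in [n]$, which is trivial since $\phi$ is $[n]$-valued.

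I do not anticipate any real obstacle: the only point that has to be checked is the definitional one, namely that $\calP$-diagonal matrices are genuinely diagonal (so they kill off-diagonal assignments). Once that is recorded, the proof is a one-line transitivity argument along $E_1$-paths.
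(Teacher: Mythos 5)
Your proof is correct and is exactly the argument the paper intends: the paper states this as an unproved Observation, and the natural justification is precisely that a nonzero weight forces every edge factor to be nonzero, that $\calP$-diagonal edge weights then force $\phi(u)=\phi(v)$ across each $E_1$-edge, and that this propagates over the whole class $V_i$ by the connectivity built into the definition of $\sim$. No gap; nothing further is needed.
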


Now we construct $\calG'=(G',\calV',\calE')$.
First we construct $G'=(V',E')$.
$V'$ is exactly $[g]$ in which vertex $i\in [g]$ corresponds to $V_i$.
For every edge $uv\in E_0$ such that $u\in V_i$, $v\in V_j$,
  and $i\ne j\in [g]$, we add an edge from $i$ to $j$ in $G'$.
This finishes the construction of $G'$.
It is easy to verify that the underlying undirected graph
  of $G'$ is also \emph{connected}.\vspace{0.01cm}

Finally, we assign vertex and edge weights.
For each edge $ij$ in $G'$, suppose it is created because of $uv\in E_0$.
Then the edge weight of $ij$ is the same as that of $uv$.
As a result, all the edge weight matrices of $\calG'$ come from $\frT'$ (since
  by definition of \genp, $\frT'$ contains all the $\calP$-matrices in $\frT^*$).\vspace{0.02cm}

We define the vertex weights of $\calG'$ as follows.
If $V_i=\{v\}$ is a singleton, then the vertex weight of $i$ in $\calG'$
  is the same as the weight of $v$ in $\calG^*$.
Otherwise, we let $v_1,\ldots,v_r$ be the vertices in $V_i$ with $r> 1$,
  let $e_1,\ldots,e_s$ be the edges in $E_1$ with both vertices in $V_i$ for some $s\ge 1$,
  and let $e_1',\ldots,e_t'$ be the edges in $E_0$ with both vertices in $V_i$
  for some $t\ge 0$.
We use $\ww^{[j]}\in \frS^*$ to denote the vertex weight of $v_j$ in $\calG'$
  $\CC^{[j]}\in \frT^*$ to denote the $\calP$-diagonal matrix of $e_j$
  and $\DD^{[j]}\in \frT^*$ to denote the $\calP$-matrix of $e_j'$.
Then we assign the following vertex weight vector $\ww$ to $i\in V'$:\vspace{0.06cm}
$$
w_k=w^{[1]}_k\cdots w^{[r]}_k\cdot C^{[1]}_{k,k}\cdots C^{[s]}_{k,k}\cdot
  D^{[1]}_{k,k}\cdots D^{[t]}_{k,k},\ \ \ \ \ \text{for every $k\in [n]$.}\vspace{0.06cm}
$$
By definition, we have $\ww\in \frT'$.
Using Observation \ref{obse2}, it is also easy to verify that $Z(\calG')=Z(\calG^*)$.\vspace{0.005cm}

This completes the proof of Lemma \ref{tractabilitylemma}.

\section{Reduction: Normalized Matrices are Free to Use}

To give a polynomial-time reduction from $(\frS',\frT')=\genp(\frS,\frT)$
  to $(\frS,\frT)$, we need to first prove a technical lemma on
  \emph{normalized} block-rank-$1$ matrices.\vspace{0.01cm}

Let $\CC$ be an $m\times m$ block-rank-$1$ matrix of block pattern $\calT$ and
  representation $(\balpha,\bbeta)$, where $\calT=\{(A_1, B_1),\ldots,(A_r,B_r)\}$
  for some $r\ge 1$.
By definition, $\balpha$ satisfies
$$
\sum_{j\in A_i} \alpha_j=1,\ \ \ \ \ \text{for all $i\in [r]$.}
$$
We say $\CC'$ is the \emph{normalized} version of $\CC$ if it is an $m\times m$
  block-rank-$1$ matrix of block pattern $\calT$ and representation $(\balpha,\bdelta)$, where
$$
\delta_j=\frac{\beta_j}{\sum_{k\in B_i} \beta_k},\ \ \ \ \ \text{for all $j\in B_i$ and
  $i\in [r]$,}
$$
so that $\bdelta$ also satisfies
$$
\sum_{j\in B_i} \delta_j=1,\ \ \ \ \ \text{for all $i\in [r]$.}\vspace{0.04cm}
$$

Let $(\frP,\frQ)$ be a finite $\calT$-pair for some non-trivial
  $m\times m$ block pattern $\calT$,
  and
$$
\frQ=\big\{\CC^{[1]},\ldots,\CC^{[s]}\big\},
$$
in which every $\CC^{[i]}$ is block-rank-$1$ and has
  representation $(\balpha^{[i]},\bbeta^{[i]})$.
For each $i\in [s]$, we let $\DD^{[i]}$ denote the normalized version
  of $\CC^{[i]}$ with representation $(\balpha^{[i]},\bdelta^{[i]})$, and
$$
\frQ'=\big\{\CC^{[1]},\ldots,\CC^{[s]},\DD^{[1]},\ldots,\DD^{[s]}\big\}.
$$
In this section, we prove the following technical lemma:\vspace{0.06cm}

\begin{lemm}\label{techlemma}
$Z_{\frP,\frQ}(\cdot)$ and $Z_{\frP,\frQ'}(\cdot)$ are computationally equivalent.\vspace{0.06cm}
\end{lemm}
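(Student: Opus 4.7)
The direction $Z_{\frP,\frQ}(\cdot) \le_P Z_{\frP,\frQ'}(\cdot)$ is immediate, since $\frQ \subseteq \frQ'$ means every input to $Z_{\frP,\frQ}$ is already a valid input to $Z_{\frP,\frQ'}$ with the same partition function. The real content is the reverse reduction $Z_{\frP,\frQ'}(\cdot) \le_P Z_{\frP,\frQ}(\cdot)$, and my plan is to replace each $\DD^{[i]}$-labeled edge in the input $\calG'$ by a small gadget built from matrices in $\frQ$ and vertex weights in $\frP$.

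The driving identity is
$$
D^{[i]}_{a,b} \;=\; \alpha^{[i]}_a\,\beta^{[i]}_b \big/ N^{[i]}_{p(b)} \;=\; C^{[i]}_{a,b} \big/ N^{[i]}_{p(b)},
$$
where $p(b)$ denotes the column block containing $b$ and the normalizers $N^{[i]}_p = \sum_{b \in B_p}\beta^{[i]}_b$ are fixed positive constants determined entirely by $\frQ$, hence precomputable. Equivalently, $\DD^{[i]} = \CC^{[i]}\KK^{[i]}$ with $\KK^{[i]}$ diagonal and $\KK^{[i]}_{b,b} = 1/N^{[i]}_{p(b)}$. Thus, after relabeling every $\DD^{[i_e]}$-edge of $\calG'$ by the corresponding $\CC^{[i_e]}$, we obtain a graph $\calG_0$ (a valid input for $Z_{\frP,\frQ}$) with
$$
Z(\calG_0) \;=\; \sum_\xi \mathrm{wt}_{\calG'}(\xi)\,\prod_{e \in E_\DD} N^{[i_e]}_{p(\xi(v_e))},
$$
where $v_e$ is the column-endpoint of edge $e$ and $E_\DD$ is the set of $\DD$-labeled edges. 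It therefore suffices to design, for each column-endpoint $v$, a cancellation gadget whose summed contribution at $\xi(v)$ is the factor $\prod_{e:\,v_e=v} 1/N^{[i_e]}_{p(\xi(v))}$---a step function of $\xi(v)$ taking at most $r$ distinct values indexed by the block $p(\xi(v)) \in [r]$.

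The elementary gadgets I would employ are pendant vertices $x$ with weight $\mathbf{1}\in\frP$ attached at $v$ via a single $\CC^{[j]}$-labeled edge; summing over $\xi(x)$ contributes a factor $\beta^{[j]}_{\xi(v)}$ (for an incoming edge $xv$) or $\alpha^{[j]}_{\xi(v)}\,N^{[j]}_{p(\xi(v))}$ (for an outgoing edge $vx$) to the effective vertex weight at $v$. Combining several such pendants, varying $j$, and using parallel edges (Hadamard powers) generates a rich family of factors of the form $f(\xi(v), p(\xi(v)))$.

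The main obstacle is that the target correction $1/N^{[i]}_{p(\xi(v))}$ is a \emph{reciprocal} of a block constant, whereas the elementary gadgets produce only \emph{positive} polynomial combinations of the entries $\alpha^{[j]}, \beta^{[j]}, N^{[j]}_p$. To circumvent this, I would use a polynomial interpolation/inversion argument: since the correction at each column-endpoint is a piecewise-constant function on $[m]$ with at most $r$ distinct values, by building polynomially many auxiliary graphs obtained from $\calG'$ by attaching $O(r)$ different gadget configurations at each column-endpoint and evaluating $Z_{\frP,\frQ}$ on all of them, one assembles a linear system whose coefficient matrix is determined by the precomputed constants $\alpha^{[j]}, \beta^{[j]}, N^{[j]}_p$; its Vandermonde-like structure, inherited from the strict positivity of the representation vectors on the support, ensures invertibility, and its solution returns $Z_{\frP,\frQ'}(\calG')$ in polynomial time. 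The technical heart of the argument is thus the verification of nonsingularity of this interpolation system.
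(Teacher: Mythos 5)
Your easy direction and your two building blocks are fine: the identity $D^{[i]}_{a,b}=C^{[i]}_{a,b}/N^{[i]}_{p(b)}$ is exactly the paper's constant $K^{[i]}_b$, and pendant vertices with weight $\11$ attached by a single $\CC^{[j]}$-edge do produce the factors $\beta^{[j]}_{\xi(v)}$ (incoming) and $\alpha^{[j]}_{\xi(v)}\cdot N^{[j]}_{p}$ (outgoing), thanks to the normalization $\sum_{j\in A_k}\alpha_j=1$. The overall strategy (replace $\DD$-edges by $\CC$-gadgets, then strip the unwanted block constants by interpolation and a Vandermonde system) is also the paper's. But the step you explicitly defer --- ``verification of nonsingularity of this interpolation system'' --- is the entire content of the lemma, and as sketched your scheme does not reduce to a system you can actually set up and invert. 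If you attach different gadget configurations \emph{independently} at each column-endpoint you need exponentially many graphs; if instead you use one global parameter $k$ (attach $k$ copies of a fixed gadget everywhere), the $k$-th evaluation has the form $\sum_{\gamma}\gamma^{\hspace{0.02cm}k} d_\gamma$ where the grouping is by the value $\gamma$ of the gadget factor, and the recovered unknowns $d_\gamma$ are sums of $\mathrm{vw}\cdot\mathrm{cw}\cdot\mathrm{dw}$ still multiplied by the unwanted factor $\mathrm{kw}(\xi)=\prod_e N^{[i_e]}_{p(\xi(v_e))}$. Unless the gadget value determines $\mathrm{kw}$ within each group --- which can fail under numerical coincidences among the $\alpha,\beta,N$ constants, and which strict positivity alone does not give --- you cannot divide it out and reassemble $Z_{\frP,\frQ'}(\calG)$. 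A further concrete defect: outgoing pendants at a column-endpoint $v$ give weight $0$ whenever $\xi(v)\notin\bigcup_k A_k$, so they change the support and destroy legitimate terms rather than merely multiplying by a correction.

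The paper's proof supplies precisely the missing mechanism. For each $\DD^{[i]}$-edge $e=uv$ it does not keep a single relabeled copy with endpoint corrections; instead, in the $k$-th graph it uses $L-k$ parallel $\CC^{[i]}$-edges from $u$ to $v$ together with $k$ pendant \emph{pairs} (an outgoing pendant at $u$ and an incoming pendant at $v$, both labeled $\CC^{[i]}$). Each parallel edge contributes $K^{[i]}_{b}\hspace{0.02cm}D^{[i]}_{\xi(u),\xi(v)}$ and each pendant pair contributes $\big(K^{[i]}_{b}\big)^{2}D^{[i]}_{\xi(u),\xi(v)}$, so the total weight is $\mathrm{vw}\cdot\mathrm{cw}\cdot\big(\mathrm{dw}\big)^{L}\cdot\big(\mathrm{kw}\big)^{L+k}$: the exponent of $\mathrm{dw}$ is pinned at $L$ while the exponent of $\mathrm{kw}$ sweeps with $k$, and since at least one parallel edge remains the support is unchanged. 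This yields a genuine univariate Vandermonde in the (positive, polynomially many) values of $\mathrm{kw}$, and a \emph{second} interpolation over the exponent $L'\in[L:L+M-1]$ then strips the power of $\mathrm{dw}$, after which $Z_{\frP,\frQ'}(\calG)=\sum_{x,\hspace{0.03cm}0<y}\big(\sum_{\phi\in\Phi_{x,y}}\mathrm{vw}(\phi)\hspace{0.02cm}\mathrm{cw}(\phi)\big)\cdot y$. Your write-up would need to invent this (or an equivalent) decoupling device and prove the resulting system solvable; as it stands, that gap is not closed.
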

\begin{proof}
In the proof, we use two levels of interpolations and Vandermonde systems.\vspace{0.01cm}

We start with some notation.
Let $\calG=(G,\calV,\calE)$ be the input labeled directed graph of $Z_{\frP,\frQ'}(\cdot)$ with $G=(V,E)$.
For $v\in V$, we use $\ww^{[v]}\in \frP$ to denote its vertex weight.
We use $E_i\subseteq E$, $i\in [s]$, to denote the set of edges labeled with $\CC^{[i]}$,
  and $F_i\subseteq E$, $i\in [s]$, to denote the set of edges labeled with $\DD^{[i]}$.
For every assignment $\xi:V\rightarrow [m]$, we define\vspace{0.09cm}
$$
\text{vw}(\xi)=\prod_{v\in V}\hspace{0.05cm} w^{[v]}_{\xi(v)},\ \ \ \ \
\text{cw}(\xi)=\prod_{i\in [s]}\hspace{0.08cm}\prod_{uv\in E_i} C^{[i]}_{\xi(u),\xi(v)},\ \ \ \ \
\text{dw}(\xi)=\prod_{i\in [s]}\hspace{0.08cm}\prod_{uv\in F_i} D^{[i]}_{\xi(u),\xi(v)}.\vspace{0.04cm}
$$
Note that a product over an empty set is equal to $1$.

\noindent Then we need to compute the following sum
$$
Z_{\frP,\frQ'}(\calG)=\sum_{\xi}\hspace{0.06cm} \text{vw}(\xi)
  \cdot \text{cw}(\xi)\cdot \text{dw}(\xi).
$$

For all $a\in [s]$ and $b\in [r]$, we use $K^{[a]}_b>0$ to denote the number
  such that
$$
C^{[a]}_{i,j}=K^{[a]}_b\cdot D^{[a]}_{i,j},\ \ \ \ \ \text{for all $i\in A_b$ and $j\in B_b$.}
$$
Actually, this gives us the following equation
$$
C^{[a]}_{i,j}=K^{[a]}_b\cdot D^{[a]}_{i,j},\ \ \ \ \ \text{for all $i\in A_b$ and
  $j\in [m]$,}\vspace{0.05cm}
$$
since $\CC^{[a]}$ and $\DD^{[a]}$ have the same block pattern $\calT$.
Then we use $\text{kw}(\xi)$, where $\xi:V\rightarrow [m]$, to denote\vspace{0.1cm}
$$
\text{kw}(\xi)=\prod_{a\in [s]}\hspace{0.1cm}\left( \prod_{uv\in F_a\hspace{0.06cm} \text{with}
  \hspace{0.09cm} \xi(u)\in A_b} K^{[a]}_b\hspace{-0.02cm}\right).\vspace{0.05cm}
$$
We use $X$ to denote the set of all possible values of $\text{kw}(\xi)$:
$$
X=\big\{\hspace{0.05cm}\text{kw}(\xi)\hspace{0.1cm}\big|\hspace{0.1cm}
  \xi:V\rightarrow [m]\hspace{0.02cm}\big\}.
$$
It can be checked that $|X|$ is polynomial in $|E|$ since both $s$ and $r$
  are considered as constants here.
We use $L$ to denote $|X|$.\vspace{0.02cm}

For all $k\in [0:L-1]$, we build a new graph $\calG^{[k]}=(G^{[k]},\calV^{[k]},\calE^{[k]})$,
  where $G^{[k]}=(V^{[k]},E^{[k]})$:\vspace{0.09cm}
\begin{enumerate}
\item $V\subseteq V^{[k]}$ and every $v\in V$ is labeled with the same vertex weight
  as in $\calG$;
\item For all $i\in [s]$ and $uv\in E_i$, we add one edge $uv\in E^{[k]}$ and
  label it with the same matrix $\CC^{[i]}$;
\item For all $i\in [s]$ and all $e=uv\in F_i$, we add $L-k$ parallel edges from
  $u$ to $v$ with $\CC^{[i]}$ as their edge weights;
  we also add $2k$ new vertices $u_{e,j}$ and $v_{e,j}$, $j\in [k]$, to $V^{[k]}$; we add one edge from
    $u$ to $u_{e,j}$ and one edge from $v_{e,j}$ to $v$ for all $j\in [k]$, all of which are labeled
    with $\CC^{[i]}$.
For each new vertex, we assign $\mathbf{1}$ as its vertex weight.\vspace{0.09cm}
\end{enumerate}
It is clear that $\calG^{[k]}$ can be constructed in polynomial time
  and is a valid input of $Z_{\frP,\frQ}(\cdot)$.\vspace{0.005cm}

Fix $k\in [0:L-1]$.
For every assignment $\phi:V\rightarrow [m]$, we let $\Xi_\phi$
  denote the set of all $\xi:V^{[k]}\rightarrow [m]$ such that
  $\xi(v)=\phi(v)$ for all $v\in V$.
We also define
$$
\text{wt}^{[k]}(\phi)=\sum_{\xi\in \Xi_\phi }\hspace{0.05cm} \text{wt}(\calG^{[k]},\xi).
$$ \noindent
Then we have the following equation\vspace{0.05cm}
$$
Z_{\frP,\frQ}(\calG^{[k]})=\sum_{\xi:V^{[k]}\rightarrow [m]}\hspace{-0.05cm} \text{wt}(\calG^{[k]},\xi)
  =\sum_{\phi:V\rightarrow [m]} \text{wt}^{[k]}(\phi).\vspace{0.1cm}
$$

By the construction, we show that
\begin{equation}\label{hahaeq}
\text{wt}^{[k]}(\phi)=\text{vw}(\phi)\cdot \text{cw}(\phi)\cdot \Big(\text{dw}(\phi)\Big)^L
  \cdot \Big(\text{kw}(\phi)\Big)^{L+k},\ \ \ \ \ \text{for all $k\in [0:L-1]$}.
\end{equation}
First, we have
\begin{equation}\label{hahaeq2}
\text{wt}^{[k]}(\phi) = \text{vw}(\phi) \cdot \text{cw}(\phi) \cdot
\sum_{\xi\in \Xi_\phi} \left( \prod_{i\in [s]}\left( \prod_{e=uv\in F_i}
  \left(C^{[i]}_{\xi(u),\xi(v)}\right)^{L-k} \left(\prod_{j\in [k]} C^{[i]}_{\xi(u),\xi(u_{e,j})}
  C^{[i]}_{\xi(v_{e,j}),\xi(v)}\right) \right) \right).\vspace{0.12cm}
\end{equation}
For each edge $e=uv\in F_i$ for some $i\in [s]$, there must exist
  an index $b_e\in [r]$ such that $\phi(u)\in A_{b_e}$ and $\phi(v)\in B_{b_e}$;
  otherwise both sides of (\ref{hahaeq}) are $0$ and we are done.
In this case, the sum in (\ref{hahaeq2}) becomes\vspace{0.06cm}
\begin{equation}\label{hahaeq3}
\prod_{i\in [s]} \left(\prod_{e=uv\in F_i} \left(K^{[i]}_{b_e}\cdot D^{[i]}_{\xi(u),\xi(v)}\right)^{L-k}
\left(\sum_{x\in B_{b_e}} C^{[i]}_{\xi(u),x} \right)^k \left(
\sum_{x\in A_{b_e}} C^{[i]}_{x,\xi(v)}\right)^k\right).\vspace{0.06cm}
\end{equation}
By the definition of $(\balpha^{[i]},\bbeta^{[i]})$ and $(\balpha^{[i]},\bdelta^{[i]})$, we have\vspace{0.06cm}
$$
\sum_{x\in B_{b_e}} C^{[i]}_{\xi(u),x}=\alpha^{[i]}_{\xi(u)}\sum_{x\in B_{b_e}} \beta^{[i]}_{x}
  = \alpha^{[i]}_{\xi(u)}\cdot K^{[i]}_{b_e}\ \ \ \ \ \text{and}\ \ \ \ \
\sum_{x\in A_{b_e}} C^{[i]}_{x,\xi(v)} = \beta^{[i]}_{\xi(v)}.\vspace{0.02cm}
$$
As a result, (\ref{hahaeq3}) becomes\vspace{0.06cm}
$$
\prod_{i\in [s]} \left(\prod_{e=uv\in F_i} \left(K^{[i]}_{b_e}\cdot D^{[i]}_{\xi(u),\xi(v)}\right)^{L-k}
\left(\alpha^{[i]}_{\xi(u)}\cdot K^{[i]}_{b_e}\right)^k \left(
\beta^{[i]}_{\xi(v)}\right)^k\right)=
\prod_{i\in [s]} \left(\prod_{e=uv\in F_i} \left( K^{[i]}_{b_e}\right)^{L+k}
  \left(D^{[i]}_{\xi(u),\xi(v)}\right)^{L}\right).\vspace{0.06cm}
$$
This finishes the proof of equation (\ref{hahaeq}).\vspace{0.01cm}

Since $L$ is polynomial in the input size,
  we can use $Z_{\frP,\frQ}(\cdot)$ as an oracle to compute
$$
\sum_{\phi:V\rightarrow [m]} \text{vw}(\phi)\cdot \text{cw}(\phi)\cdot \Big(\text{dw}(\phi)\Big)^L
  \cdot \Big(\text{kw}(\phi)\Big)^{L+k},\ \ \ \ \ \text{for all $k\in [0:L-1]$}.
$$
in a polynomial number of steps.

For every $x\in X$, we use $\Phi_x$ to denote the set of $\phi:V\rightarrow [m]$ with
  $\text{kw}(\phi)=x$, then we computed\vspace{0.04cm}
$$
\sum_{x\in X} \left(\sum_{\phi\in \Phi_x} \text{vw}(\phi)\cdot \text{cw}(\phi)\cdot
  \Big(\text{dw}(\phi)\Big)^L \right)\cdot x^{L+k},\ \ \ \ \ \text{for all $k\in [0:L-1]$.}\vspace{0.04cm}
$$
Because $x>0$ for all $x\in X$, we can solve this Vandermonde system and obtain
$$
\sum_{\phi\in \Phi_x} \text{vw}(\phi)\cdot \text{cw}(\phi)\cdot \Big(\text{dw}(\phi)\Big)^L,
  \ \ \ \ \ \text{for each $x\in X$,}
$$
in a polynomial number of steps.\vspace{0.008cm}

It is also clear that the whole process can be repeated for any $L'\ge L$ with
$$
L'\le L+\text{poly}(\text{input size}),
$$
and we can use $Z_{\frP,\frQ}(\cdot)$ as an oracle to compute
$$
\sum_{\phi\in \Phi_x} \text{vw}(\phi)\cdot \text{cw}(\phi)\cdot \Big(\text{dw}(\phi)\Big)^{L'},
  \ \ \ \ \text{for all $x\in X$ and $L\le L'\le L+\text{poly}(\text{input size})$,}
$$
in a polynomial number of steps.\vspace{0.005cm}

Next we use $Y$ to denote the set of all possible values of $\text{dw}(\phi)$,
  $\phi:V\rightarrow [m]$ (note it is possible that $0\in Y$).
Again, $|Y|$ is polynomial and we use $M$ to denote $|Y|$.
For every $x\in X$, we can compute
$$
\sum_{\phi\in \Phi_x} \text{vw}(\phi)\cdot \text{cw}(\phi)\cdot \Big(\text{dw}(\phi)\Big)^{L+k},
  \ \ \ \ \text{for all $k\in [0:M-1]$.}
$$
Let $\Phi_{x,y}$ denote the set of $\phi$ with $\text{kw}(\phi)=x$ and $\text{dw}(\phi)=y$.
Solving this Vandermonde system, we get\vspace{0.04cm}
$$
\sum_{\phi\in \Phi_{x,y}} \text{vw}(\phi)\cdot \text{cw}(\phi),
  \ \ \ \ \ \text{for all $x\in X$ and $0<y\in Y$.}
$$
Finally, using all these items, we can compute $Z_{\frP,\frQ'}(\calG)$ in a polynomial number of steps:\vspace{0.06cm}
$$
Z_{\frP,\frQ'}(\calG)=\sum_{x\in X,\hspace{0.06cm}0<y\in Y}
  \left(\sum_{\phi \in \Phi_{x,y}} \text{vw}(\phi)\cdot \text{cw}(\phi)\right)\cdot y.\vspace{0.1cm}
$$
This proves the lemma since the other direction from
  $Z_{\frP,\frQ}(\cdot)$ to $Z_{\frP,\frQ'}(\cdot)$ is trivial.
\end{proof}

\section{Polynomial-Time Reduction from $(\frS',\frT')$ to $(\frS,\frT)$}\label{reduction}

Let $(\frS,\frT)$ be a $\calT$-pair, where $\calT$ is a non-trivial $m\times m$ block pattern
  $\calT=\{(A_1,B_1),\ldots,(A_r,B_r)\}$ with $r\ge 1$ and every matrix in $\frT$ is block-rank-$1$.
Let $\calP$ be the $r\times r$ pattern where $\calP=\gen(\calT)$
  and $(\frS',\frT')$ be the $\calP$-pair generated from $(\frS,\frT)$ using the
  \genp\ operation:
$
(\frS',\frT')=\genp(\frS,\frT).
$
We also use $(\frS^*,\frT^*)$ to denote the
  generalized $\calP$-pair defined in Appendix \ref{construction}.\vspace{0.009cm}

In this section, we prove that $(\frS',\frT')$ is polynomial-time reducible to $(\frS,\frT)$.
To this end, we first reduce $(\frS',\frT')$ to $(\frS^*,\frT^*)$, and then reduce $(\frS^*,\frT^*)$
  to $(\frS,\frT)$.
The first step is trivial, so we will only give a polynomial-time reduction from $(\frS^*,\frT^*)$ to
  $(\frS,\frT)$ below.\vspace{0.01cm}

Let $\frP^*=\{\pp^{[i]}:i\in [s]\}$ be a finite subset of vectors in $\frS^*$ with
  $\11\in \frP^*$ and $\frQ^*=\{\FF^{[i]}:i\in [t]\}$ be a finite subset of matrices in $\frT^*$.
By the definition of \genp, they can be generated by a finite subset
  $\frP=\{\ww^{[i]}:i\in [h]\}\subseteq \frS$ with $\11\in \frP$ and a finite subset
  $\frQ=\{\CC^{[i]}:i\in [g]\}\subseteq \frT$ in the following sense.
(We let $(\balpha^{[i]},\bbeta^{[i]})$ denote the representation of
  $\CC^{[i]}$ for every $i\in [g]$.)\vspace{0.008cm}

For every matrix $\FF\in \frQ^*$,
  there exists a $(2g+1)$-tuple
$$\Big(k \in [h];\kk=(k_1,\ldots,k_g);\ll=(\ell_1,\ldots,\ell_g)\Big),$$
  where $k_i,\ell_i\ge 0$, $\kk\ne \00$ and $\ll\ne \00$, such that\vspace{0.04cm}
\begin{equation}\label{lateruse}
F_{i,j}=\sum_{x\in B_i\cap A_j} \Big(\beta_x^{[1]}\Big)^{k_1}\cdots
  \Big(\beta_x^{[g]}\Big)^{k_g} \cdot \Big(\alpha_x^{[1]}\Big)^{\ell_1}\cdots
  \Big(\alpha_x^{[g]}\Big)^{\ell_g}\cdot w^{[k]}_x.
\end{equation}
This $(2g+1)$-tuple is also call the
  (not necessarily unique) representation of $\FF$ with respect to $(\frP,\frQ)$.\vspace{0.009cm}

For every $\pp\in \frP^*$, there
  exist three finite (and possibly empty) sets
  $\calS_1$, $\calS_2$ and $\calS_3$ of tuples,
  where every tuple in $\calS_1$ and $\calS_2$ is of the form
$$
\Big(k\in [h];\kk=(k_1,\ldots,k_g)\Big)
$$
with $k_i\ge 0$ and $\kk\ne \00$, and every tuple in $\calS_3$ is of the form
$$\Big(k \in [h];\kk=(k_1,\ldots,k_g);\ll=(\ell_1,\ldots,\ell_g)\Big)$$
with $k_i,\ell_i\ge 0$, $\kk\ne \00$ and $\ll\ne \00$.
Every tuple in $\calS_1$ gives us a vector whose $i$th entry, $i\in [r]$, is equal to
$$
\sum_{x\in A_i} \Big(\alpha_x^{[1]}\Big)^{k_1}\cdots
  \Big(\alpha_x^{[g]}\Big)^{k_g} \cdot w^{[k]}_x\hspace{0.04cm};
$$\newpage
\noindent every tuple in $\calS_2$ gives us a vector whose $i$th entry, $i\in [r]$, is equal to
$$
\sum_{x\in B_i} \Big(\beta_x^{[1]}\Big)^{k_1}\cdots
  \Big(\beta_x^{[g]}\Big)^{k_g} \cdot w^{[k]}_x;
$$
and every $(2g+1)$-tuple in $\calS_3$ gives us a vector whose $i$th entry, $i\in [r]$, is equal to
$$
\sum_{x\in B_i\cap A_i} \Big(\beta_x^{[1]}\Big)^{k_1}\cdots \Big(\beta_x^{[g]}\Big)^{k_g}
  \cdot \Big(\alpha_x^{[1]}\Big)^{\ell_1}\cdots \Big(\alpha_x^{[g]}\Big)^{\ell_g}\cdot w^{[k]}_x.
$$
Vector $\pp$ is then the Hadamard product of all these vectors.\vspace{0.005cm}

We remark that all the exponents $k_i,\ell_i$ in the equations above are considered
  as constants, because both $(\frP,\frQ)$ and $(\frP^*,\frQ^*)$ are fixed.
We now prove the following lemma.\vspace{0.05cm}

\begin{lemm}
$Z_{\frP^*,\frQ^*}(\cdot)$ is polynomial-time reducible to $Z_{\frP,\frQ}(\cdot)$.\vspace{0.04cm}
\end{lemm}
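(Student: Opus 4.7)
The plan is to turn an input labeled graph $\calG^*$ for $Z_{\frP^*,\frQ^*}$ into a polynomial-size family of labeled graphs $\{\calG_{\boldsymbol{\lambda}}\}$ for $Z_{\frP,\frQ}$, compute $Z_{\frP,\frQ}(\calG_{\boldsymbol{\lambda}})$ for each by the oracle, and recover $Z(\calG^*)$ by a multidimensional Vandermonde inversion. Each edge and each vertex label of $\calG^*$, both built from block-rank-$1$ data, will be expanded into a gadget whose own labels live in $\frP$ and $\frQ$. By Lemma~\ref{techlemma} we may also freely use the normalized matrices $\DD^{[i]}$ of the $\CC^{[i]}\in\frQ$, and we will repeatedly exploit the fact that $\bdelta^{[i]}$ sums to $1$ on each $B$-block and $\balpha^{[i]}$ sums to $1$ on each $A$-block.

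For an edge $e=uv$ in $\calG^*$ labeled by $\FF\in\frQ^*$ with representation $(k;\kk;\ll)$ as in~(\ref{lateruse}), introduce a fresh vertex $w_e$ with vertex weight $\ww^{[k]}\in\frP$, add $k_i$ parallel edges from $u$ to $w_e$ labeled $\CC^{[i]}$ for each $i\in[g]$, and add $\ell_i$ parallel edges from $w_e$ to $v$ labeled $\CC^{[i]}$ for each $i\in[g]$. Summing over $\xi(w_e)$ with $\xi(u),\xi(v)$ fixed reproduces $F_{\phi(u),\phi(v)}$ times spurious endpoint residues $\prod_i(\alpha^{[i]}_{\xi(u)})^{k_i}\cdot\prod_i(\beta^{[i]}_{\xi(v)})^{\ell_i}$ arising from $C^{[i]}_{a,b}=\alpha^{[i]}_a\beta^{[i]}_b$. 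Each $\calS_1$, $\calS_2$, or $\calS_3$ factor in the Hadamard product defining $\pp^{[u]}\in\frP^*$ is realized analogously by attaching pendant leaves at $u$ carrying the appropriate generating matrices; using normalized $\DD^{[i]}$ for the pendant edges collapses each pendant sum cleanly to the desired $\alpha$- or $\beta$-power at $u$.

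After the expansion, $Z_{\frP,\frQ}(\calG)$ decomposes over block assignments $\phi:V(\calG^*)\to[r]$ as
$$
Z_{\frP,\frQ}(\calG)\;=\;\sum_{\phi}\;\Big(\prod_{uv\in E(\calG^*)}F_{\phi(u),\phi(v)}\Big)\cdot R(\phi),
$$
where the residue $R(\phi)$ is a product of simple block-indexed factors coming from the endpoint terms summed over $\xi(u)$ at each vertex; the target $Z_{\frP^*,\frQ^*}(\calG^*)$ differs only in that $R(\phi)$ is replaced by $\prod_u\pp^{[u]}_{\phi(u)}$. To separate the contributions of different $\phi$, build a family $\calG_{\boldsymbol{\lambda}}$ parameterized by a constant-dimensional vector $\boldsymbol{\lambda}$ of nonnegative integers tuning the multiplicities of selected gadget components; then $Z_{\frP,\frQ}(\calG_{\boldsymbol{\lambda}})$ becomes a weighted sum of monomials in a polynomial-size collection of positive real bases determined by the block structure. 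A multidimensional Vandermonde inversion, mirroring the proof of Lemma~\ref{techlemma}, extracts each aggregated contribution, which is then reassembled into $Z_{\frP^*,\frQ^*}(\calG^*)$.

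The main obstacle I anticipate is ensuring that the interpolation system is efficiently invertible. One must verify that the distinct monomial bases produced by the gadget expansion are pairwise distinct and strictly positive, and that their number grows only polynomially in $|V(\calG^*)|$. Positivity follows from the positivity of every $\alpha^{[i]}_x,\beta^{[i]}_x$ on its prescribed support; distinctness can be enforced by choosing enough independent interpolation parameters; and the polynomial bound holds because each base is indexed by a block assignment of a fixed number of generating tuples, while the exponent patterns come from the fixed finite sets $\frP,\frQ,\frP^*,\frQ^*$ and are therefore constant. Once these bookkeeping items are settled, the construction of $\{\calG_{\boldsymbol{\lambda}}\}$ and the final reassembly of $Z(\calG^*)$ reduce to routine manipulations analogous to those of Lemma~\ref{techlemma}.
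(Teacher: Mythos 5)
Your high-level plan (expand each edge label via a middle vertex realizing (\ref{lateruse}), use Lemma \ref{techlemma} to get the normalized matrices for free, and strip the spurious residues by Vandermonde interpolation) is the same route the paper takes, but your specific gadget breaks at the crucial point: you keep each original vertex $u$ of $\calG^*$ as a \emph{single} vertex over the domain $[m]$. An out-edge gadget at $u$ is nonzero only when $\xi(u)\in A_s$ and it then produces the row index $s$, while an in-edge gadget at $u$ is nonzero only when $\xi(u)\in B_{s'}$ and produces the column index $s'$; since the $A$-partition and the $B$-partition of $[m]$ are unrelated, $s\ne s'$ in general (e.g.\ in the $8\times 8$ example of Section \ref{sec:intuition}, the value $3$ lies in $A_2$ and in $B_1$). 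Hence your expansion computes a sum over assignments in which each vertex may use \emph{different} block indices as row and as column, i.e.\ something of the form $\sum_{\xi}\prod_{uv}F_{s_A(\xi(u)),\,s_B(\xi(v))}\cdot(\cdots)$ rather than $\sum_{\phi:V\to[r]}\prod_{uv}F_{\phi(u),\phi(v)}\cdot(\cdots)$, and the mismatched terms are generically nonzero; so the quantity you would interpolate is already the wrong one. Moreover, at a single vertex $u$ the wanted vertex-weight factors (your pendant sums) and the unwanted $\alpha/\beta$ powers sit inside one common sum over $\xi(u)$, so they do not factor into ``wanted $\times$ strippable residue''. The paper avoids both problems by replacing every vertex of $\calG^*$ by a connected cluster $R_i$ with Type-L hubs $u_{i,a}$ and Type-R hubs $w_{i,a}$ joined by a cycle of $\CC^{[1]}$-edges: connectivity forces one block index $\xi(i)\in[r]$ per cluster, the edge vertex $v_e$ (attached to the hubs, not to a retained original vertex) sums exactly to $F_{\xi(i),\xi(j)}$, the $\calS_1,\calS_2,\calS_3$ factors appear as separate $H_v$'s, and all spurious powers are pushed onto the hubs.

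The second gap is your claimed polynomial bound on the number of interpolation bases. At a vertex $u$, the spurious exponent of $\alpha^{[i]}_{\xi(u)}$ equals the sum of the $k_i$'s over all incident edge gadgets, which grows with the degree of $u$ and differs from vertex to vertex; the residue of an assignment is then a product of \emph{vertex-dependent} functions of $\xi(u)\in[m]$, and the set of its values can be exponential in $|V(\calG^*)|$. ``The exponent patterns come from fixed finite sets'' does not save this, because the per-vertex exponents are sums over incident edges, not single tuples, and adding tunable multiplicities $\boldsymbol{\lambda}$ cannot remove the variation already present in the base graph. The paper's resolution is precisely the degree padding: the out-degree of every $u_{i,a}$ and in-degree of every $w_{i,a}$ is raised to one common value $M$ by pendant vertices whose local sums equal $1$ --- this is the only place the normalized matrices $\DD^{[a]}$ are needed --- so the residue collapses to $\mathrm{nvw}(\xi)=\prod_{i}\prod_a \mu^{[a]}_{\xi(i)}\nu^{[a]}_{\xi(i)}$, a function of the block counts only, with polynomially many values; its $k$-th power is obtained by replicating the hub cycle $k$ times without changing $\text{wt}(\calG,\xi)$, so a single one-dimensional Vandermonde system suffices. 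Without some analogue of the cluster/hub construction and the uniform-degree padding, your reduction does not go through.
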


\def\frR{\mathfrak{R}}

\subsection{Proof Sketch}

We first give a proof sketch.
Again, we will use interpolations and Vandermonde systems.\vspace{0.003cm}

First, by Lemma \ref{techlemma}, we only need to give a reduction from $Z_{\frP^*,\frQ^*}(\cdot)$
  to $Z_{\frP,{\frR}}(\cdot)$, where
$$
\frR=\Big\{\CC^{[i]},\DD^{[i]}:i\in [g]\hspace{0.01cm}
  \Big\}
$$
contains both $\CC^{[i]}$ and its \emph{normalized} version $\DD^{[i]}$, $i\in [g]$.\vspace{0.006cm}

Let $\calG=(G,\calV,\calE)$ be an input labeled graph of $Z_{\frP^*,\frQ^*}(\cdot)$,
  where $G=(V,E)$.
For every assignment $\xi:V\rightarrow [r]$, we will define
  $\text{nvw}(\xi)>0$.
Moreover, let $X$ be the set of all possible values of $\text{nvw}(\xi)$, and $L=|X|$,
  then $L$ is polynomially bounded.
For every $k\in [L]$, we will build a new labeled directed graph $\calG^{[k]}$ from $\calG$.
$\calG^{[k]}$ is a valid input graph of $Z_{\frP,\frR}(\cdot)$ (with domain $[m]$) and satisfies\vspace{0.05cm}
\begin{equation}\label{lastbig}
Z_{\frP,\frR}(\calG^{[k]})=
  \sum_{\xi:V \rightarrow [r]} \text{wt}(\calG,\xi) \cdot \Big(\text{nvw}
  (\xi)\Big)^k.\vspace{0.06cm}
\end{equation}
\noindent
For each $x\in X$, we use $\Xi_x$ to denote the set of all $\xi:V\rightarrow [r]$
  with $\text{nvw}(\xi)=x$.
Then by solving the Vandermonde system which consists of equations (\ref{lastbig}) for $k=1,2,\ldots,L$,
  we can compute
$$
\sum_{\xi\in \Xi_x}\hspace{0.02cm} \text{wt}(\calG,\xi),\ \ \ \ \ \text{for every $x\in X$,}
$$ \noindent
which allow us to compute in polynomial time
$$
Z_{\frP^*,\frQ^*}(\calG)=\sum_{\xi:V \rightarrow [r]} \text{wt}(\calG,\xi)
  =\sum_{x\in X} \left(\sum_{\xi\in \Xi_x}\hspace{0.02cm}\text{wt}(\calG,\xi)\right).\vspace{-0.05cm}
$$

\subsection{Construction of $\calG^{[k]}$}

We start with the construction of $\calG^{[1]}=(G^{[1]},\calV^{[1]},\calE^{[1]})$.
It will become clear that the construction can be generalized
  to get $\calG^{[k]}$ for every $k\in [L]$.\vspace{0.006cm}

Let $V=[n]$, then
  the vertex set $V^{[1]}$ of $G^{[1]}=(V^{[1]},E^{[1]})$ will be defined as
  a union:
$$
V^{[1]}=R_1\cup R_2\cup \cdots \cup R_n,
$$
where $R_k$ corresponds to vertex $k\in V$ and any edge $uv\in E^{[1]}$
  will be between two vertices $u,v\in V^{[1]}$ such that $u,v\in R_k$ for some unique $k\in [n]$.
$R_i$ and $R_j$, $i\ne j\in [n]$, are not necessarily disjoint and there
  could be vertices shared by (at most) two different sets $R_i$ and $R_j$.
We further divide the vertices of $R_i$, $i\in [n]$, into three types:
  In the subgraph of $G^{[1]}$ spanned by $R_i$,\vspace{0.06cm}
\begin{enumerate}
\item The Type-L vertices only have outgoing edges;\vspace{-0.06cm}
\item The Type-R vertices only have incoming edges; and\vspace{-0.06cm}
\item The Type-M vertices have both incoming and outgoing edges.\vspace{0.06cm}
\end{enumerate}
When adding a new vertex, we will also specify which type it is.
The construction also guarantees that the underlying undirected graph spanned by every $R_i$ is connected.

\subsubsection{Construction of \hspace{0.04cm}$G^{[1]}=(V^{[1]},E^{[1]})$}

We start with the vertex set $V^{[1]}$.\vspace{0.05cm}
\begin{enumerate}
\item First, for every $i\in [n]$ and $a\in [g]$, we add a new
  Type-L vertex $u_{i,a}$ in $R_i$ and add a new Type-R vertex
  $w_{i,a}$ in $R_i$.
All these vertices appear in $R_i$ only.
%Let $M$ be a large enough integer which we will fix later. (later it will become clear that
%  $M$ is actually linear in the maximum degree of $G$).
%The out-degree of $u_{i,a}$ and the in-degree of
%  $w_{i,a}$, for all $i$ and $a$, will be exactly $M$.
\item Second, for every $e=ij\in E$, where $i,j\in [n]$, we add a vertex $v_e\in R_i\cap R_j$,
  which is a Type-R vertex in $R_i$ and a Type-L vertex in $R_j$.
\item Finally, for every $i\in V$ let $\pp\in \frP^*$ be its vertex weight in $\calG$.
Then by the discussion earlier, it can be generated from $(\frP,\frQ)$
  using three finite sets of tuples $\calS_1,\calS_2$ and $\calS_3$.
For each tuple $\ss$ in $\calS_1$ we add a new Type-L vertex $v_{i,\ss}$ in $R_i$;
  for each tuple $\ss$ in $\calS_2$, we add a new Type-R vertex in $R_i$;
  and for each tuple $\ss$ in $\calS_3$ we add a new Type-M vertex in $R_i$.
All these vertices appear in $R_i$ only.\vspace{0.05cm}
\end{enumerate}
We will add some more vertices later. Now we start to create
  edges, and assign edge/vertex weights.

First, for every $i\in [n]$, we add $2g$ edges to connect $u_{i,a}$ and
  $w_{i,a}$, $a\in [g]$:\vspace{0.06cm}
\begin{enumerate}
\item For every $a\in [g]$, add one edge from $u_{i,a}$ to $w_{i,a}$,
and label the edge with $\CC^{[1]}$;\vspace{-0.06cm}
\item For every $a\in [g]$, add one edge from $u_{i,a}$ to $w_{i,a+1}$ (with
  $w_{i,g+1}=w_{i,1}$), and label it with $\CC^{[1]}$;\vspace{-0.06cm}
\item For every $a\in [g]$, the vertex weight vector of both $u_{i,a}$
  and $w_{i,a}$ is the all-one vector $\mathbf{1}$.\vspace{0.06cm}
\end{enumerate}
%For every vertex in the left part of $R_i$, we will add at least one edge
%  from it to one of $w_{i,a}$;
%for every vertex in the right part of $R_i$, we will add at least one edge
%  from one of $u_{i,a}$ to it;
%and for every vertex in the middle, we will add at least one edge from it to
%  one of $w_{i,a}$ and at least one edge from one of $u_{i,a}$ to it.
%Therefore, $R_i$ is connected (in the undirected sense).

Second, for each edge $e=ij\in E$, we add the incident edges of $v_e\in R_i\cap R_j$
  as follows.
Assume~the edge weight matrix of
  $ij$ in $\calG$ is generated by $(\frP,\frQ)$ using the following $(2g+1)$-tuple:
$$\big(k\in [h];\kk=(k_1,\ldots,k_g);\ll=(\ell_1,\ldots,\ell_g)\big),$$
  where $k_i,\ell_i\ge 0$, $\kk\ne \00$ and $\ll\ne \00$.
Then we add the following incident edges of $v_e$: \vspace{0.06cm}
\begin{enumerate}
\item For each $b\in [g]$, we add $k_b$ parallel edges from
  $u_{i,b}$ to $v_e$ in $R_i$, all of which are labeled with $\CC^{[b]}$;\vspace{-0.06cm}
\item For each $b\in [g]$, we add $\ell_b$ parallel edges from
  $v_e$ to $w_{j,b}$ in $R_j$, all of which are labeled with $\CC^{[b]}$;\vspace{-0.06cm}
\item Assign the vertex weight vector $\ww^{[k]}\in \frP$ to $v_e$.\vspace{0.06cm}
\end{enumerate}

Finally, for every vertex $i\in V$ we use $\pp$ to denote its vertex weight in $\calG$.
Assume $\pp$ is generated by $(\frP,\frQ)$ using three finite sets $\calS_1,\calS_2$ and $\calS_3$
  of tuples.
For each $\ss=(k\in [h];\kk=(k_1,\ldots,k_g))$ in $\calS_1$ with $k_i\ge 0$
  and $\kk\ne \00$, we already added a Type-L vertex $v_{i,\ss}$ in $R_i$
  (which appears in $R_i$ only).
We add the following incident edges of $v_{i,\ss}$:\vspace{0.06cm}
\begin{enumerate}
\item For each $b\in [g]$, add $k_b$ parallel edges from
  $v_{i,\ss}$ to $w_{i,b}$ in $R_i$, all of which are labeled with $\CC^{[b]}$;\vspace{-0.06cm}
\item Assign the vertex weight vector $\ww^{[k]}\in \frP$ to $v_{i,\ss}$.\vspace{0.06cm}
\end{enumerate}
For every $\ss=(k\in [h];\kk=(k_1,\ldots,k_g))$
  in $\calS_2$, we already added a Type-R vertex $v_{i,\ss}\in R_i$.
We add the following incident edges of $v_{i,\ss}$ in $R_i$:\vspace{0.06cm}
\begin{enumerate}
\item For each $b\in [g]$, add $k_b$ parallel edges from
  $u_{i,b}$ to $v_{i,\ss}$ in $R_i$, all of which are labeled with $\CC^{[b]}$;\vspace{-0.06cm}
\item Assign the vertex weight vector $\ww^{[k]}\in \frP$ to $v_{i,\ss}$.\vspace{0.06cm}
\end{enumerate}
For every tuple $\ss=(k\in [h];\kk=(k_1,\ldots,k_g);\ll=(\ell_1,\ldots,\ell_g))$
  in $\calS_3$, we already added a Type-M vertex $v_{i,\ss}$ in $R_i$.
We add the following incident edges of $v_{i,\ss}$ in $R_i$:\vspace{0.06cm}
\begin{enumerate}
\item For every $b\in [g]$, add $k_b$ parallel edges from
  $u_{i,b}$ to $v_{i,\ss}$, all of which are labeled with $\CC^{[b]}$;\vspace{-0.06cm}
\item For every $b\in [g]$, add $\ell_b$ parallel edges from
  $v_{i,\ss}$ to $w_{i,b}$, all of which are labeled with $\CC^{[b]}$; and\vspace{-0.06cm}
\item Assign the vertex weight vector $\ww^{[k]}\in \frP$ to $v_{i,\ss}$.\vspace{0.06cm}
\end{enumerate}
It can be checked that the (undirected) subgraph spanned by $R_i$, for all $i\in [n]$,
  is connected.\vspace{0.012cm}

This almost finishes the construction.
The only thing left is to add some more vertices and edges
  so that the out-degree of $u_{i,a}$ and the in-degree of $w_{i,a}$
  are the same for all $i\in [n]$ and $a\in [g]$.\vspace{0.014cm}

To this end, we notice that for all $i\in [n]$ and $a\in [g]$,
  both the out-degree of $u_{i,a}$ and the in-degree of $w_{i,a}$ constructed so far
  are linear in the maximum degree of $G$, because all the parameters
  $k_i,\ell_i$ and the sets $\calS_i$ are considered as constants.
As a result, we can pick a large enough positive integer $M\ge 2$ which is
  linear in the maximum degree of $G$, such that
$$
M\ge\hspace{0.04cm} \text{the out-degree of $u_{i,a}$ and the in-degree of $w_{i,a}$
  constructed so far,\ for all $i$ and $a$.}
$$
We now add vertices and edges so that the out-degree of $u_{i,a}$
  and the in-degree of $w_{i,a}$ all become $M$.\vspace{0.012cm}

Let $i\in [n]$ and $a\in [g]$. Assume the current out-degree of $u_{i,a}$ is $k\le M$.
Then we add $M-k$ new Type-R vertices in $R_i$
  and add one edge from $u_{i,a}$ to each of these vertices.
The vertex weights of all the new vertices are $\mathbf{1}$,
  and the edge weights of all the new edges are $\DD^{[a]}$ (recall that
  we are allowed to use the normalized version $\DD^{[a]}$ of $\CC^{[a]}$, and this is actually the
  only place we use it).\vspace{0.015cm}

Similarly, assume the current in-degree of $w_{i,a}$ is $k\le M$.
Then we add $M-k$ new Type-L vertices in $R_i$
  and add one edge from each of these vertices to $w_{i,a}$.
The vertex weights of all the new vertices are $\mathbf{1}$ while
  the edge weights of all the new edges are $\CC^{[a]}$.\vspace{0.015cm}

This finishes the construction of the new labeled directed
  graph $\calG^{[1]}=(G^{[1]},\calV^{[1]},\calE^{[1]})$.

\subsection{Proof of Equation (\ref{lastbig})}

We start with the definition of $\text{nvw}(\xi)$, for any assignment $\xi:V=[n]\rightarrow [r]$.\vspace{0.01cm}

First, for each $a\in [g]$, we let $\bmu^{[a]}$ denote the following
  positive $r$-dimensional vector:
$$
\mu^{[a]}_i=\sum_{x\in A_i}\hspace{0.08cm} \Big(\alpha^{[1]}_x\Big)^2 \cdot \Big(\alpha^{[a]}_x\Big)^{M-2},
\ \ \ \ \ \text{for every $i\in [r]$.}
$$
For every $a\in [g]$, we let $\bnu^{[a]}$ denote the following
  positive $r$-dimensional vector:
$$
\nu^{[a]}_i=\sum_{x\in B_i}\hspace{0.08cm} \Big(\beta^{[1]}_x\Big)^2 \cdot \Big(\beta^{[a]}_x\Big)^{M-2},
\ \ \ \ \ \text{for every $i\in [r]$.}
$$
Finally, we define $\text{nvw}(\xi)$ as follows:\vspace{0.07cm}
$$
\text{nvw}(\xi)=\prod_{i\in [n]}\hspace{0.1cm} \prod_{a\in [g]}
  \hspace{0.08cm}\mu^{[a]}_{\xi(i)} \cdot \nu^{[a]}_{\xi(i)},
\ \ \ \ \ \text{for any $\xi:V=[n]\rightarrow [r]$.}
$$
It is easy to check that $\text{nvw}(\xi)>0$ and the number
  of possible values of $\text{nvw}(\xi)$ is polynomial in $n$.

Now we prove equation (\ref{lastbig}) for $k=1$:\vspace{0.04cm}
\begin{equation}\label{quququququ}
Z_{\frP,\frR}(\calG^{[1]})=
  \sum_{\xi:V \rightarrow [r]} \text{wt}(\calG,\xi) \cdot \text{nvw}(\xi).
\end{equation}

Let $\xi$ be an assignment from $V$ to $[r]$.
We use $\Phi_\xi$ to denote the set of all assignments $\phi:V^{[1]}\rightarrow [m]$ such that
  for every edge $uv$ in the subgraph spanned by $R_i$, $i\in [n]$,
  we have
$$
\phi(u)\in A_{\xi(i)}\ \ \ \ \text{and}\ \ \ \ \phi(v)\in B_{\xi(i)}.
$$
In other words, for all $i\in [n]$ and $v\in R_i$, if $v$ a Type-L vertex
  then $\phi(v)\in A_{\xi(i)}$; if $v$ is a Type-R vertex then $\phi(v)\in B_{\xi(i)}$;
  and if $v$ is a Type-M of $R_i$, then $\phi(v)\in A_{\xi(i)}\cap B_{\xi(i)}$.
Equivalently, we can associate every vertex $v\in V^{[1]}$ with a subset $U_v\subseteq [m]$, where\vspace{0.05cm}
\begin{enumerate}
\item If $v$ appears in both $R_i$ and $R_j$ for some $i\ne j\in V=[n]$,
  and $v$ is Type-R in $R_i$ and\\ Type-L in $R_j$, then $U_v=B_{\xi(i)}\cap A_{\xi(j)}$;
\item Otherwise, assume $v$ only appears in $R_i$ for some $i\in V=[n]$. Then
\begin{enumerate}
\item If $v$ is Type-L, then $U_v=A_{\xi(i)}$;
\item If $v$ is Type-R, then $U_v=B_{\xi(i)}$; and
\item If $v$ is Type-M, then $U_v=B_{\xi(i)} \cap A_{\xi(i)}$,\vspace{0.05cm}
\end{enumerate}\end{enumerate}
   such that
  $\phi\in \Phi_\xi$ if and only if $\phi(v)\in U_v$ for all $v\in V^{[1]}$.
In particular, $\Phi_\xi=\emptyset$ iff $U_v=\emptyset$ for some $v$.\vspace{0.005cm}

By the construction, we know the subgraph spanned by $R_i$ is \emph{connected}, for any $i\in [n]$.
It implies that $\text{wt}(\calG^{[1]},\phi)\ne 0$ only if
  $\phi\in \Phi_\xi$ for a unique $\xi:V\rightarrow [r]$.
As a result, we have
$$
Z_{\frP,\frR}(\calG^{[1]})=\sum_{\phi}\hspace{0.05cm} \text{wt}(\calG^{[1]},\phi)
=\sum_{\xi} \sum_{\phi\in \Phi_\xi} \text{wt}(\calG^{[1]},\phi),
$$
and to prove (\ref{quququququ}) we only need to show that
$$
\sum_{\phi\in \Phi_\xi} \text{wt}(\calG^{[1]},\phi)=\text{wt}(\calG,\xi)\cdot
  \text{nvw}(\xi),\ \ \ \ \ \text{for any
  assignment $\xi:V=[n]\rightarrow [r]$.}
$$

We use $\ww^{[v]}$ to denote the weight of $v\in V^{[1]}$, $E_i$ to denote
  the set of edges in $E^{[1]}$ labeled with $\CC^{[i]}$, and $F_i$ to
  denote the set of edges in $E^{[1]}$ labeled with $\DD^{[i]}$, then we have\vspace{0.1cm}
$$
\sum_{\phi\in \Phi_\xi} \text{wt}(\calG^{[1]},\phi)\hspace{0.04cm}
  =\sum_{\phi\in \Phi_\xi} \left(\prod_{v\in V^{[1]}} \ww^{[v]}_{\phi(v)}
  \hspace{0.05cm}\prod_{i\in [g]}\left( \prod_{uv\in E_i} C^{[i]}_{\phi(u),\phi(v)}\right)
  \left(\prod_{uv\in F_i}D^{[i]}_{\phi(u),\phi(v)}\right)\right).
$$\newpage
\noindent By the definition of $\Phi_\xi$, if $\Phi_\xi\ne \emptyset$, then every $\phi\in \Phi_\xi$
  satisfies\vspace{0.05cm}
$$
C^{[i]}_{\phi(u),\phi(v)}=\alpha^{[i]}_{\phi(u)}\cdot \beta^{[i]}_{\phi(v)}\ \ \ \ \text{and}
\ \ \ \ D^{[i]}_{\phi(u),\phi(v)}=\alpha^{[i]}_{\phi(u)}\cdot \delta^{[i]}_{\phi(v)},\vspace{0.05cm}
$$
where $(\balpha^{[i]},\bdelta^{[i]})$ is the representation of $\DD^{[i]}$.
As a result, we have\vspace{0.06cm}
$$
\sum_{\phi\in \Phi_\xi} \text{wt}(\calG^{[1]},\phi)
  =\sum_{\phi\in \Phi_\xi} \left(\prod_{v\in V^{[1]}} \ww^{[v]}_{\phi(v)}
  \hspace{0.05cm}\prod_{i\in [g]}\left( \prod_{uv\in E_i} \alpha^{[i]}_{\phi(u)}\cdot
  \beta^{[i]}_{\phi(v)}\right)\left(\prod_{uv\in F_i} \alpha^{[i]}_{\phi(u)}
  \cdot \delta^{[i]}_{\phi(v)}\right)\right), \vspace{0.1cm}
$$
Because $\phi\in \Phi_\xi$ iff $\phi(v)\in U_v$ for all $v$,
  we can express this sum of products as a product of sums:
$$
\prod_{v\in V^{[1]}} H_v,
$$
in which every $H_v$, $v\in V^{[1]}$, is a sum over $\phi(v)\in U_v$.

Finally, we show the following equation:
\begin{equation}\label{qusieq}
\prod_{v\in V^{[1]}} H_v=\text{wt}(\calG,\xi)\cdot \text{nvw}(\xi).
\end{equation}
This follows from the construction of $\calG^{[1]}$ and the following observations:\vspace{0.07cm}
\begin{enumerate}
\item For each $v_e\in R_i\cap R_j$, which is added because of edge $ij\in E$,
  it can be checked that the sum $H_{v_e}$\\ over $U_{v_e}=B_{\xi(i)}\cap A_{\xi(j)}$
  is exactly $F_{\xi(i),\xi(j)}$,
  where $\FF$ is the weight of $ij$ in $\calG$ (as defined in (\ref{lateruse})).

\item Let $\pp$ denote the vertex weight of $i\in V$, which is generated using
  $\calS_1,\calS_2$ and $\calS_3$.
Then we have
$$
p_{\xi(i)}=\prod_{\ss\in \calS_1} H_{v_{i,\ss}} \prod_{\ss\in \calS_2}
  H_{v_{i,\ss}} \prod_{\ss\in \calS_3} H_{v_{i,\ss}}.
$$

\item For all $i\in [n]$ and $a\in [g]$, we have\vspace{-0.05cm}
$$
\mu^{[a]}_{\xi(i)}=H_{u_{i,a}}\ \ \ \ \text{and}\ \ \ \
\nu^{[a]}_{\xi(i)}=H_{w_{i,a}}.\vspace{-0.06cm}
$$

\item Finally, it can be checked that $H_v=1$ for all other vertices in $V^{[1]}$,
  which is the reason we need to\\ use the normalized matrices $\DD^{[a]}$ in the construction.\vspace{0.05cm}
\end{enumerate}

\subsubsection{Construction of $\calG^{[k]}$}

We can similarly construct $\calG^{[k]}$ for every $k\in [L]$.

The only difference is that, instead of $u_{i,a}$ and $w_{i,a}$,
  we add the following $2kg$ vertices in $R_i$:
$$
u_{i,j,a}\ \ \text{and}\ \ w_{i,j,a},\ \ \ \ \ \text{for all $j\in [k]$ and $a\in [g]$.}
$$
We also connect these vertices by adding $4\hspace{0.01cm}k\hspace{0.01cm}g$ edges,
  whose underlying undirected graph is a cycle.
All these edges are labeled with $\CC^{[1]}$.
We also add extra vertices and edges so that the out-degree
  of $u_{i,j,a}$ and the in-degree of $v_{i,j,a}$ are $M$ for all $i\in [n]$, $j\in [k]$
  and $a\in [g]$.
It then can be proved similarly that\vspace{0.03cm}
$$
Z_{\frP,\frR}(\calG^{[k]})= \sum_{\xi:V\rightarrow [r]}
  \text{wt}(\calG,\xi)\cdot \Big(\text{nvw}(\xi)\Big)^k.
$$
This completes the proof of Lemma \ref{reductionlemm}.

\section{Decidability}\label{decidability}

In this section, we show that the rank condition is decidable
  in a finite number of steps.

\subsection{A Technical Lemma}

We prove a very useful technical lemma.

\begin{lemm}\label{huhulemma}
Let $L,n,m\ge 1$ be positive integers.
For every $i\in [L]$, let $\{a^{[i]}_1,\ldots,a^{[i]}_n\}$ be a
  sequence of $n$ positive numbers; and
let $\{b^{[i]}_1,\ldots,b^{[i]}_m\}$ be a sequence of $m$
  positive numbers. If\vspace{0.05cm}
$$
\sum_{i\in [n]}\hspace{0.07cm} \prod_{j\in [L]}\Big(a^{[j]}_i\Big)^{k_j}=
\sum_{i\in [m]}\hspace{0.07cm} \prod_{j\in [L]}\Big(b^{[j]}_i\Big)^{k_j},
\ \ \ \ \ \text{for all $k_1,k_2,\ldots,k_L\ge 1$,}\vspace{0.04cm}
$$
then $m=n$ and there exists a one-to-one correspondence $\pi$ from $[n]$ to itself
  such that
$$
a^{[j]}_i=b^{[j]}_{\pi(i)},\ \ \ \ \ \text{for all $i\in [n]$ and $j\in [L]$.}
$$
\end{lemm}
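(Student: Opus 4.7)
The plan is to prove Lemma~\ref{huhulemma} by induction on $L$. The base case $L=1$ is the classical statement that positive reals are determined by their power sums. I will actually need the slightly more general \emph{weighted} form: if $\sum_{i} \alpha_i x_i^k = \sum_{j} \beta_j y_j^k$ for every $k \ge 1$, where the $x_i$ (resp.\ $y_j$) are distinct positive reals and all $\alpha_i,\beta_j > 0$, then the pairs $(x_i,\alpha_i)$ and $(y_j,\beta_j)$ coincide as multisets. The standard argument is asymptotic: let $M$ be the maximum of all $x_i,y_j$; dividing both sides by $M^k$ and letting $k \to \infty$, only the terms whose base equals $M$ survive, which forces $M$ to appear on both sides with equal coefficient. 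Cancelling that term and iterating settles the base case.

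For the inductive step, assume the lemma for $L-1$. Associate to each index $i \in [n]$ the point $\vec{a}_i = (a^{[1]}_i,\dots,a^{[L]}_i) \in \mathbb{R}_{>0}^L$, and similarly $\vec{b}_i \in \mathbb{R}_{>0}^L$; the goal is to show the two multisets of points coincide. Group indices by first coordinate: let $X = \{a^{[1]}_i : i \in [n]\}$, $I_x = \{i : a^{[1]}_i = x\}$, and define $Y,J_y$ symmetrically. For any fixed $k_2,\dots,k_L \ge 1$ the hypothesis rewrites as
$$\sum_{x \in X} f_x \cdot x^{k_1} \;=\; \sum_{y \in Y} g_y \cdot y^{k_1} \qquad \text{for every } k_1 \ge 1,$$
where $f_x = \sum_{i \in I_x} \prod_{j \ge 2}(a^{[j]}_i)^{k_j}$ and $g_y = \sum_{i \in J_y}\prod_{j \ge 2}(b^{[j]}_i)^{k_j}$ are both strictly positive, since $I_x,J_y$ are nonempty by construction and all entries are positive. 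The weighted base case then forces $X = Y$ and $f_x = g_x$ for every $x \in X$.

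Since the identity $f_x = g_x$ holds for \emph{every} tuple $(k_2,\dots,k_L) \in \mathbb{Z}_{\ge 1}^{L-1}$, for each fixed $x \in X$ the tail sequences $\{(a^{[2]}_i,\dots,a^{[L]}_i)\}_{i \in I_x}$ and $\{(b^{[2]}_i,\dots,b^{[L]}_i)\}_{i \in J_x}$ satisfy the hypothesis of the lemma with parameter $L-1$. The inductive hypothesis yields $|I_x|=|J_x|$ together with a bijection matching these tail vectors pointwise; since all vectors in $I_x$ (resp.\ $J_x$) share first coordinate $x$, this bijection lifts to the full vectors $\vec{a}_i$ and $\vec{b}_i$ in that group. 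Taking the union of these bijections over $x \in X = Y$ gives $n = m$ and the desired permutation $\pi$ with $a^{[j]}_i = b^{[j]}_{\pi(i)}$ for all $i,j$.

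The whole argument is essentially bookkeeping around one analytic fact, the weighted one-dimensional base case, so there is no substantial obstacle. The only point requiring a moment of care is checking that the auxiliary weights $f_x,g_y$ are strictly positive so that the base case applies; this is immediate because they are finite sums of products of positive reals indexed by nonempty sets.
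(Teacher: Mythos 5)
Your proof is correct, and its skeleton (induction on $L$, with the crucial analytic input being the ``divide by the dominant power and let the exponent tend to infinity'' trick) is the same as the paper's; the difference is in how that trick is packaged. The paper applies the limiting argument \emph{inside} the inductive step: it sorts by the $L$-th coordinate, first shows the two maxima agree, then takes $k_L\to\infty$ for arbitrary fixed $k_1,\dots,k_{L-1}$ to extract an $(L-1)$-dimensional identity for the maximal block only, applies the inductive hypothesis to that block, and then removes it and repeats the whole peeling process. You instead isolate a \emph{weighted} one-variable power-sum rigidity lemma (distinct positive bases with positive coefficients are determined by their power sums), prove it once by the same asymptotic cancellation, and then in the inductive step group by \emph{all} values of the first coordinate simultaneously: for each fixed $(k_2,\dots,k_L)$ the hypothesis becomes a weighted one-variable identity in $k_1$, so the sets of first coordinates coincide and the grouped weights agree for every exponent tuple, which is exactly the $(L-1)$-hypothesis within each group. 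This buys a cleaner argument: you avoid the separate step showing the maxima are equal and the outer ``remove and repeat'' iteration, at the cost of stating and proving the weighted base case (whose positivity-of-weights check you correctly flag as the only delicate point, and which also covers the $L=1$ case with repeated values by grouping into multiplicities). Both arguments are sound; yours is a mild reorganization rather than a new idea.
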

\begin{proof}
We prove it by induction on $L$. The base case when $L=1$ is trivial.

Assume the lemma is true for $L-1\ge 1$. Without loss of generality, we assume that
  $\{a^{[L]}_1,\ldots,a^{[L]}_n\}$ and $\{b^{[L]}_1,\ldots,b^{[L]}_m\}$
  are already sorted:
$$
a^{[L]}_1\ge \ldots\ge a^{[L]}_n>0\ \ \ \ \ \text{and}\ \ \ \ \
b^{[L]}_1\ge \ldots\ge b^{[L]}_m>0.
$$
\noindent We let $s\ge 1$ and $t\ge 1$ be the two maximum integers such that
$$
a^{[L]}_1=a^{[L]}_2=\cdots =a^{[L]}_s=a>0\ \ \ \ \ \text{and}\ \ \ \ \
b^{[L]}_1=b^{[L]}_2=\cdots =b^{[L]}_t=b>0.
$$

First it is easy to show that $a=b$. Otherwise assume $a>b$, then we set
  $k_1=\ldots=k_{L-1}=1$, divide $(a)^{k_L}$ from both sides, and let $k_L$ go to infinity.
It is easy to check that the left side converges to
$$
\sum_{i\in [s]}\hspace{0.07cm} \prod_{j\in [L-1]} a^{[j]}_i>0,
$$
while the right side converges to $0$, which contradicts the assumption.\vspace{0.006cm}

Second, we fix $k_1,\ldots,k_{L-1}$ to be any positive integers, divide
  $(a)^{k_L}=(b)^{k_L}$ from both sides and let $k_L$ go to infinity.
It is easy to check that the left side converges to\vspace{-0.1cm}
$$
\sum_{i\in [s]}\hspace{0.07cm} \prod_{j\in [L-1]} \Big(a^{[j]}_i\Big)^{k_j},
$$
while the right hand side converges to\vspace{-0.2cm}
$$
\sum_{i\in [t]}\hspace{0.07cm} \prod_{j\in [L-1]} \Big(b^{[j]}_i\Big)^{k_j}.
$$
So these two sums are equal for all $k_1,\ldots,k_{L-1}\ge 1$.
Then we apply the inductive hypothesis to claim that $s=t$ and there exists
  a permutation $\pi$ from $[s]$ to itself such that
\begin{equation}\label{eq55}
a^{[j]}_i=b^{[j]}_{\pi(i)},\ \ \ \ \ \text{for all $j\in [L-1]$ and $i\in [s]$.}
\end{equation}
It is also easy to see that for any $i\in [s]$, (\ref{eq55}) also holds for $j=L$.

We then repeat the whole process after removing the first $s$ elements from
  the $2L$ sequences.\vspace{0.05cm}
\end{proof}

Additionally, we also need the following simple lemma in the proof.\vspace{0.05cm}

\begin{lemm}\label{forklore}
%Let $m\ge 1$ be an integer, and
%  $\calP=\{P_i\subset [m]:i\ge 1\}$ be a sequence of subsets of $[m]$.
%If for every finite subset
Let $m\ge 1$ be an integer and $(P_1, P_2, \ldots, )$ be
  a sequence of subsets of $[m]$. If for any finite subset
  $\{i_1,\ldots,i_k\}\subset \mathbb{N}$,
$
P_{i_1}\cap P_{i_2}\cap \cdots \cap P_{i_k}\ne \emptyset,
$
then there exists a $j\in [m]$ such that $j\in P_i$ for all $i$.\vspace{0.025cm}
\end{lemm}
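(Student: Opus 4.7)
The plan is to argue by contradiction, exploiting the finiteness of $[m]$ to reduce the infinite hypothesis to a single well-chosen finite intersection of size at most $m$.

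Suppose, toward a contradiction, that there is no $j \in [m]$ that lies in every $P_i$. Then for each $j \in [m]$, I can choose an index $i_j \in \mathbb{N}$ such that $j \notin P_{i_j}$. This gives me a finite collection of indices $\{i_1, i_2, \ldots, i_m\}$ (not necessarily distinct, but finite).

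Now I would consider the finite intersection $P_{i_1} \cap P_{i_2} \cap \cdots \cap P_{i_m}$. For any candidate element $j \in [m]$, the set $P_{i_j}$ appears in this intersection and by construction does not contain $j$. Hence no element of $[m]$ can be in the intersection, so the intersection is empty. This contradicts the hypothesis that every finite intersection of sets from the sequence is nonempty, completing the proof.

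There is no real obstacle here: the whole argument rests on the fact that $[m]$ has only $m$ elements, so listing one ``excluding index'' per element produces a finite witness to emptiness. This is essentially the finite-case version of the finite intersection property (compactness of a finite discrete space), and the quantitative bound $m$ on the size of the bad finite subfamily falls out of the proof for free.
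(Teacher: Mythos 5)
Your argument is correct and is essentially identical to the paper's proof: assuming no common element, one picks for each $j\in[m]$ an index $i_j$ with $j\notin P_{i_j}$, and the resulting finite intersection $\bigcap_{j=1}^{m}P_{i_j}$ is empty, contradicting the hypothesis. Nothing further is needed.
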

\begin{proof}
If for every $j \in [m]$, there exists some $i_j\ge 1$ such that
  $j \not \in P_{i_j}$, then the finite intersection
$$\bigcap_{j=1}^{m} P_{i_j} = \emptyset,$$
which contradicts the assumption.
\end{proof}

\subsection{Matrix and Vector Polynomials}

Let $(\frS,\frT)$ be a generalized $\calP$-pair, for some $m\times m$ pattern $\calP$.
So every vector $\ww\in \frS$ is either \emph{positive} or \emph{$\calP$-weakly positive}
  and every $\DD\in \frT$ is either a \emph{$\calP$-matrix} or a \emph{$\calP$-diagonal matrix}.
Note that if $\frT$ only has $\calP$-matrices, then $(\frS,\frT)$ is a $\calP$-pair.
The definitions below also apply to $\calP$-pairs.\vspace{0.008cm}

We say $f$ is a \emph{$\calP$-matrix polynomial} if $f$ is a polynomial over variables
$$
\Big\{\hspace{0.05cm}x_{i,j}:(i,j)\in \calP \Big\}
$$
with integer coefficients and zero constant term.
We say $\frT$ satisfies $f$ if for every $\calP$-matrix $\DD\in \frT$, we have $f(\DD)=0$, in which
  we substitute $x_{i,j}$ by $D_{i,j}>0$ for all $(i,j)\in \calP$.
We also say $(\frS,\frT)$ satisfies $f$ if $\frT$ satisfies $f$.\vspace{0.008cm}

We say $f$ is a \emph{$\calP$-diagonal {matrix polynomial}} if $f$ is a polynomial over variables
$$
\Big\{\hspace{0.05cm}x_{i}:(i,i)\in \calP\Big\}
$$
with integer coefficients and zero constant term.
We say $\frT$ satisfies $f$ if every $\calP$-diagonal matrix $\DD\in \frT$ satisfies $f(\DD)=0$.
We also say $(\frS,\frT)$ satisfies $f$ if $\frT$ satisfies $f$.\vspace{0.01cm}

We say $g$ is an \emph{$m$-vector polynomial} if $g$ is a polynomial over variables
$$
\Big\{\hspace{0.05cm}y_i:i\in [m]\Big\}
$$
with integer coefficients and zero constant term.
Similarly, we say $\frS$ satisfies $g$ if every positive vector $\ww\in \frS$ satisfies $g(\ww)=0$.
We also say $(\frS,\frT)$ satisfies $g$ if $\frS$ satisfies $g$.\vspace{0.007cm}

Finally, we say $g$ is a \emph{$\calP$-weakly positive vector polynomial} if $g$ is a polynomial over variables
$$
\Big\{\hspace{0.05cm}y_i:(i,i)\in \calP\Big\}
$$
with integer coefficients and zero constant term.
We say $\frS$ satisfies $g$ if every
  $\calP$-weakly positive vector $\ww\in \frS$ satisfies $g(\ww)=0$.
We also say $(\frS,\frT)$ satisfies $g$ if $\frS$ satisfies $g$.\vspace{0.008cm}

Let $F$ be a finite set of $\calP$-matrix, $\calP$-diagonal matrix,
  $m$-vector, and $\calP$-weakly positive vector polynomials.
Then we say $(\frS,\frT)$ satisfies $F$ if $(\frS,\frT)$ satisfies
  every polynomial $f\in F$.\vspace{0.008cm}

Similarly, given any block pattern $\calT$,
  we can define $\calT$-matrix polynomials, $\calT$-diagonal matrix polynomials,
  and $\calT$-weakly positive vector polynomials for $\calT$-pairs
  and \emph{generalized} $\calT$-pairs.

We remark that, for the case when $(\frS,\frT)$ is a $\calT$-pair, to check whether $\frT$
  satisfies the rank condition (i.e., every matrix $\DD\in \frT$ is block-rank-$1$),
  one only needs to check whether $\frT$ satisfies all the $\calT$-matrix polynomials
  $f_{i,i',j,j'}$ of the following form
$$
f_{i,i',j,j'}(\xx)=x_{i,j}\cdot x_{i',j'}-x_{i,j'}\cdot x_{i',j},\ \ \ \ \ \text{where
  $i,i'\in A_k$ and $j,j'\in B_k$ for some $k\in [r]$}.\vspace{0.1cm}
$$

\subsection{Checking Matrix and Vector Polynomials}

Now let $(\frS,\frT)$ be a $\calT$-pair for some non-trivial $m\times m$ block
  pattern $\calT=\{(A_1,B_1),\ldots,(A_r,B_r)\}$ with $r\ge 1$.
We also assume that every matrix in $\frT$ is block-rank-$1$,
  and $\frS$ is \emph{closed}.\vspace{0.007cm}

We can apply the \genp\ operation to get a new $\calP$-pair
$$
(\frS',\frT')=\genp(\frS,\frT),\ \ \ \ \ \text{where $\calP=\gen(\calT)$.}
$$
We also let $(\frS^*,\frT^*)$ denote the \emph{generalized} $\calP$-pair defined
  in Appendix \ref{construction}.
By definition, $\frS^*$ is also closed.\vspace{0.006cm}

In this section, we first show that to check whether $(\frS^*,\frT^*)$ satisfies
  a matrix or vector polynomial, one only needs to check
  finitely many polynomials for $(\frS,\frT)$.
One can prove a similar relation between $(\frS',\frT')$ and
  $(\frS^*,\frT^*)$.
As a result, to check whether $(\frS',\frT')$ satisfies a polynomial or not,
  we only need to check finitely many polynomials for $(\frS,\frT)$.\vspace{0.008cm}

We start with the following lemma.\vspace{0.05cm}

\begin{lemm}\label{matrixp1}
Let $f$ be a $\calP$-matrix or $\calP$-diagonal matrix polynomial.
Then one can construct a finite
  set $\{F_1,\ldots,F_L\}$ in a finite number of steps, in which every $F_i$, $i\in [L]$, is
  a finite set of $\calT$-matrix, $m$-vector, and $\calT$-weakly positive vector
  polynomials, such that
%$(\frS^*,\frT^*)$ satisfies $f$ if and only if $(\frS,\frT)$ satisfies $F_i$ for some
%  $i\in [L]$.
$$
\text{$(\frS^*,\frT^*)$ satisfies $f$}\ \ \Longleftrightarrow\ \
  \exists\hspace{0.06cm}i\in [L]\ \forall\hspace{0.02cm}{g \in F_{i}},
  \ \big[\text{$(\frS,\frT)$ satisfies $g$}\big].
$$
\end{lemm}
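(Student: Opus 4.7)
The plan is to turn the universal statement ``$f(\DD)=0$ for every $\DD\in\frT^*$ of the relevant type'' into a finite disjunction of finite conjunctions of polynomial conditions on $(\frS,\frT)$. Every $\DD\in\frT^*$ is built from a finite tuple of matrices $\CC^{[1]},\ldots,\CC^{[g]},\DD^{[1]},\ldots,\DD^{[h]}\in\frT$ with representations $(\balpha^{[k]},\bbeta^{[k]})$ and $(\bgamma^{[\ell]},\bdelta^{[\ell]})$, positive integer exponents $s_1,\ldots,s_g,t_1,\ldots,t_h$, and a vector $\ww\in\frS$ (positive in the $\calP$-matrix case, or $\calT$-weakly positive in the $\calP$-diagonal case). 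The condition $f(\DD)=0$ therefore unwinds into an algebraic identity in the entries of the generators that must hold uniformly in both the generators and the exponents.

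First I would expand. Writing $f(\xx)=\sum_\alpha c_\alpha\prod_{(i,j)}x_{i,j}^{e_{\alpha,i,j}}$, I substitute the defining formula for each $D_{i,j}$ and distribute to obtain
$$f(\DD)\;=\;\sum_{\alpha,\vec x}\, c_\alpha\cdot w(\vec x)\cdot\prod_{k=1}^{g} b_k(\vec x)^{s_k}\cdot\prod_{\ell=1}^{h} d_\ell(\vec x)^{t_\ell},$$
where $\vec x$ ranges over position-assignments with $x_{i,j,q}\in B_i\cap A_j$, and $w(\vec x),b_k(\vec x),d_\ell(\vec x)$ are monomials in the entries of $\ww$, $\bbeta^{[k]}$, $\bgamma^{[\ell]}$ whose exponent vectors depend only on the index-occurrence function $y\mapsto N_y(\vec x)$. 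Since $f(\DD)=0$ must hold for every choice of $s_k,t_\ell\ge 1$, I would split the sum by the sign of $c_\alpha$ and apply Lemma \ref{huhulemma} to the two resulting positive-term identities: the lemma forces a matching of terms by the tuple $(b_1(\vec x),\ldots,b_g(\vec x),d_1(\vec x),\ldots,d_h(\vec x))$, so the quantifier over exponents collapses to the finite conjunction that, within each equivalence class of pairs $(\alpha,\vec x)$ sharing that tuple, the signed sum $\sum c_\alpha w(\vec x)$ vanishes. These are pure polynomial identities in the entries of the chosen generators.

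The main obstacle is that $g$ and $h$ are a priori unbounded, so I must show that the resulting family of identities reduces to a finite list $\{F_1,\ldots,F_L\}$ of polynomial constraints on $(\frS,\frT)$. Because $f$ has finitely many monomials of bounded total degree, the profiles $N_y(\vec x)$ and the coincidence patterns of $\vec x$ lie in a finite, $f$-determined list; moreover, generator tuples with repeated entries can be collapsed by identifying duplicates (using that distinct matrices give algebraically independent symbols), so only finitely many ``archetypes'' of generator configurations need to be considered. In each archetype I would perform a case analysis on how the cancellation in the grouped identity is forced: some cases require a specific product of entries (in the $\bbeta^{[k]}$'s, $\bgamma^{[\ell]}$'s, or $\ww$-coordinates) to vanish identically over $\frT$ and $\frS$, others require an algebraic relation among them; each case yields one index $i\in[L]$ whose set $F_i$ is the finite collection of $\calT$-matrix, $m$-vector, or $\calT$-weakly positive vector polynomials thereby produced (the positivity class of $\ww$ dictating the vector polynomial type). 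Verifying that each case is both necessary and sufficient, and that the whole enumeration is effective, completes the construction of $\{F_1,\ldots,F_L\}$ in a finite number of steps and proves the claimed equivalence.
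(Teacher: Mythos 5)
Your expansion step and the idea of collapsing the exponent quantifier via Lemma \ref{huhulemma} is the right starting point, but two essential ingredients of the argument are missing, and where you substitute for them the reasoning breaks. First, Lemma \ref{huhulemma} has no coefficients: it matches pure products of powers whose exponents all range over \emph{all} positive integers. In your identity the factor $w(\vec x)$ sits at fixed exponent $1$, so the lemma does not apply as you invoke it, and what you would get from a coefficient-version is only a cancellation condition on \emph{grouped} sums, where the grouping is by numerical coincidences among the values $b_k(\vec x),d_\ell(\vec x)$ of the particular generators chosen. The paper avoids both problems at once: integer coefficients of $f$ are absorbed by listing monomials with multiplicity, and --- crucially --- the closedness of $\frS$ (which is a standing hypothesis here and which you never use) is exploited to raise the vectors $\ww^{[j]}$ to arbitrary Hadamard powers as well, so that the $w$-entries become bases with freely varying exponents. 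Lemma \ref{huhulemma} then yields an exact term-by-term matching $f_i(\cdot)=g_{\pi(i)}(\cdot)$, evaluated separately at each generator $\bbeta^{[j]},\bgamma^{[j]},\ww^{[j]}$, with the monomials $f_i,g_i$ depending only on $f$ and $\calT$.

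Second, and more seriously, you do not establish the required $\exists\, i\in[L]\ \forall\, g\in F_i$ form. Your grouped-cancellation conditions, and hence which ``case'' of your case analysis holds, depend on the specific generator tuple (its value coincidences), so a priori different choices of generators from the infinite sets $\frS,\frT$ could force different cases, and no single $F_i$ would be satisfied by all of $(\frS,\frT)$. Your proposed fix --- collapsing repeated generators because ``distinct matrices give algebraically independent symbols'' --- is false: the elements of $\frT$ are specific non-negative algebraic matrices and may satisfy arbitrary algebraic relations. The paper resolves exactly this uniformity issue: because its matching conditions decompose per generator, the sets of admissible permutations $\pi$ of $[n]$ have the finite-intersection property over all finite generator subsets, and Lemma \ref{forklore} then produces one universal permutation $\pi$ valid for every matrix in $\frT$ and every vector in $\frS$; the sets $F_\pi$ (after rewriting $\balpha,\bbeta$ polynomially in the matrix entries, a conversion you also omit) give the finite list $\{F_1,\ldots,F_L\}$. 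Without the varying vector exponents (closedness of $\frS$) and without a compactness-type argument such as Lemma \ref{forklore}, your construction does not yield the claimed equivalence.
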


\begin{proof}
We first prove the case when $f$ is a $\calP$-matrix polynomial.\vspace{0.004cm}

If $f$ is the zero polynomial, then the lemma follows by setting $L=1$ and
  $F_1$ to be the set consists~of the zero polynomial only.
From now on we assume that $f$ is not the zero polynomial.\vspace{0.008cm}

Let $\{\CC^{[1]},\ldots,\CC^{[s]}\}$ and $\{\DD^{[1]},\ldots,\DD^{[t]}\}$ be two
  finite subsets of $\calT$-matrices in $\frT$
  and $\{\ww^{[1]},\ldots,\ww^{[h]}\}$ be a finite subset of \emph{positive}
  vectors in $\frS$, where $s,t,h\ge 1$.
We also let $(\balpha^{[i]},\bbeta^{[i]})$ and $(\bgamma^{[i]},\bdelta^{[i]})$
  denote the representations of $\CC^{[i]}$ and $\DD^{[i]}$, respectively.
By the definition of $\frT^*$ and the assumption that $\frT$ is \emph{closed},
  we can construct from every $(s+t+h)$-tuple
$$
\pp=\big(k_1,\ldots,k_s,\ell_1,\ldots,\ell_t,
  e_1,\ldots, e_h\big),\ \ \ \ \ \text{where $k_i,\ell_i,e_i\ge 1$,}
$$
the following $\calP$-matrix $\CC^{[\pp]}$ in $\frT^*$:
  the $(i,j)$th entry of $\CC^{[\pp]}$ is\vspace{0.06cm}
\begin{equation}\label{haha}
\sum_{x\in B_i\cap A_j} \Big(\beta^{[1]}_x\Big)^{k_1}
  \cdots \Big(\beta^{[s]}_x\Big)^{k_s}\cdot \Big(\gamma^{[1]}_x\Big)^{\ell_1}
  \cdots \Big(\gamma^{[t]}_x\Big)^{\ell_t}\cdot \Big(w^{[1]}_x\Big)^{e_1}
  \cdots \Big(w^{[h]}_x\Big)^{e_h},\ \ \ \ \ \text{for all $i,j\in [r]$.}
\end{equation}
This follows from the fact that the Hadamard product of $(\ww^{[1]})^{e_1},\ldots,
  (\ww^{[h]})^{e_h}$ is actually a vector in $\frS$,
  because $\frS$ is known to be \emph{closed}.\vspace{0.008cm}

Now we assume $(\frS^*,\frT^*)$ satisfies $f$, then by definition we must have\vspace{-0.04cm}
\begin{equation}\label{hahaa}
f(\CC^{[\pp]})=0,\ \ \ \ \ \text{for all $\pp$,}\vspace{-0.04cm}
\end{equation}
since $\CC^{[\pp]}$ is a $\calP$-matrix in $\frT^*$.
By combining (\ref{hahaa}) and (\ref{haha}) and \emph{rearranging} terms, we have\vspace{0.1cm}
\begin{eqnarray*}
&&\sum_{i\in [n_1]} \left( \prod_{j\in [s]} \Big(f_{i}\big(\beta^{[j]}_1,
  \ldots,\beta^{[j]}_m\big)\Big)^{k_j}\right)
  \left(\prod_{j\in [t]} \Big(f_{i}\big(\gamma^{[j]}_1,\ldots,\gamma^{[j]}_m\big)\Big)^{\ell_j}\right)
  \left(\prod_{j\in [h]} \Big(f_{i}\big(w^{[j]}_1,\ldots,w^{[j]}_m\big)\Big)^{e_j}\right)
  \\[2ex]&&=
\sum_{i\in [n_2]} \left( \prod_{j\in [s]} \Big(g_{i}\big(\beta^{[j]}_1,\ldots,\beta^{[j]}_m
\big)\Big)^{k_j}\right)
  \left(\prod_{j\in [t]} \Big(g_{i}\big(\gamma^{[j]}_1,\ldots,\gamma^{[j]}_m\big)\Big)^{\ell_j}\right)
  \left(\prod_{j\in [h]} \Big(g_{i}\big(w^{[j]}_1,\ldots,w^{[j]}_m\big)\Big)^{e_j}\right) \\[-1ex]
\end{eqnarray*}
for all $\pp$.
In the equation above, $n_1$ and $n_2$ are two non-negative integers.
For all $i\in [n_1]$ and $j\in [n_2]$, both $f_i(x_1,\ldots,x_m)$
  and $g_j(x_1,\ldots,x_m)$ are monomials in $x_1,\ldots,x_m$.
Also note that all the monomials $f_i,g_j$ only depend on the $\calP$-matrix polynomial
  $f$ but do not depend on the choices of $\pp$ and the
  subsets $\{\CC^{[1]},\ldots,\CC^{[s]}\}$, $\{\DD^{[1]},\ldots,\DD^{[t]}\}$,
  and $\{\ww^{[1]},\ldots,\ww^{[h]}\}$.
Moreover, because we assumed that $f$ is not the zero polynomial,
  at least one of $n_1$ and $n_2$ is nonzero.
  \vspace{0.007cm}

%If $n_1\ne n_2$, then by Lemma \ref{huhulemma}, $f(\CC^{[\pp]})=0$ cannot hold
%  for all tuples $\pp$.
%As a result, $(\frS^*,\frT^*)$ cannot satisfy $f$ and the lemma follows
%  by setting $L=1$ and $F_1$ to be the set consists of the following $m$-vector polynomial:
%  $g(\xx)=x_1$ (so that $(\frS,\frT)$ does not satisfy $F_1$).\vspace{0.003cm}

It follows directly from Lemma \ref{huhulemma} that if $(\frS^*,\frT^*)$ satisfies
  $f$, then we must have $n_1=n_2$ which~we denote by $n$.
(If $n_1\ne n_2$, then we already know that $f(\CC^{[\pp]})=0$ cannot hold
  for all $\pp$.
The lemma then follows by setting $L=1$ and $F_1$ to be the set consisting of
  the following $m$-vector polynomial: \hspace{-0.05cm}$g(\xx)=x_1$ so that $(\frS,\frT)$ does not satisfy $F_1$.)\vspace{0.003cm}
Moreover, by Lemma \ref{huhulemma}, if $(\frS^*,\frT^*)$ satisfies $f$ then there also exists a
  permutation $\pi$ from $[n]$ to itself such that
\begin{eqnarray*}
f_i\big(\beta^{[j]}_1,\ldots,\beta^{[j]}_m\big)=g_{\pi(i)}\big(\beta^{[j]}_1,\ldots,\beta^{[j]}_m\big),
&&\ \text{for all $j\in [s]$ and $i\in [n]$;}\\[1.7ex]
f_i\big(\gamma^{[j]}_1,\ldots,\gamma^{[j]}_m\big)=g_{\pi(i)}\big(\gamma^{[j]}_1,\ldots,\gamma^{[j]}_m\big),
&&\ \text{for all $j\in [t]$ and $i\in [n]$; and}\\[1.7ex]
f_i\big(w^{[j]}_1,\ldots,w^{[j]}_m\big)=g_{\pi(i)}\big(w^{[j]}_1,\ldots,w^{[j]}_m\big),
&&\ \text{for all $j\in [h]$ and $i\in [n]$.}\vspace{0.06cm}
\end{eqnarray*}

Since all the discussion above and all the monomials $f_i,g_i$
  do not depend on the choice of the three subsets,
  we can apply Lemma \ref{forklore} to claim that if $(\frS^*,\frT^*)$
  satisfies $f$, then there must exist a (universal) permutation
  $\pi$ from $[n]$ to itself such that for all $\DD\in \frS$ (since
  $(\frS,\frT)$ is a $\calT$-pair, $\DD$ is a $\calT$-matrix),
\begin{eqnarray*}
f_i(\alpha_1,\ldots,\alpha_m)-g_{\pi(i)}(\alpha_1,\ldots,\alpha_m)=0,&&\ \ \text{for all $i\in [n]$ and}\\[1.2ex]
f_i(\beta_1,\ldots,\beta_m)-g_{\pi(i)}(\beta_1,\ldots,\beta_m)=0,&&\ \ \text{for all $i\in [n]$,}
\end{eqnarray*}
where $(\balpha,\bbeta)$ is the representation of $\DD$; and
for every positive vector $\ww\in \frT$,
$$
f_i(w_1,\ldots,w_m)-g_{\pi(i)}(w_1,\ldots,w_m)=0,\ \ \ \ \ \ \text{for all $i\in [n]$.}
$$
It is also easy to check that these conditions are sufficient.

Furthermore, $\balpha$ and $\bbeta$ can be expressed by the positive entries of $\DD$ as follows.
For every $i\in A_k$, where $k\in [r]$, let $d$ be the smallest index in $B_k$, then we have
$$
\alpha_i=\frac{D_{i,d}}{\sum_{j\in A_k} D_{j,d}}.
$$
For every $i\in B_k$, where $k\in [r]$, let $d$ be the smallest index in $A_k$, then $\beta_i= D_{d,i}/\alpha_d$.
Now it is easy to see that for every permutation $\pi$ from $[n]$ to itself, we can construct
  a finite set $F_\pi$ of~$\calT$-matrix and $m$-vector polynomials, such that,
if $(\frS^*,\frT^*)$ satisfies $f$ then $(\frS,\frT)$ satisfies $F_\pi$ for some $\pi$.\vspace{0.003cm}
%The other direction is trivial.

The case when $f$ is a $\calP$-diagonal matrix polynomial can be proved similarly.
The only difference is that every $F_\pi$ is now a
  finite set of $\calT$-matrix and $\calT$-weakly positive vector polynomials.\vspace{0.06cm}
\end{proof}

It also follows directly by definition that $\frT'$ satisfies a $\calP$-matrix polynomial if and only if
  $\frT^*$ satisfies the same polynomial, because $\frT'$ contains precisely all the $\calP$-matrices
  in $\frT^*$.
Next, we deal with vector polynomials.\vspace{0.06cm}

\begin{lemm}\label{matrixp3}
Let $g$ be an $r$-vector or a $\calP$-weakly positive vector polynomial.
One can construct a finite set $\{G_1,\ldots,G_L\}$ in a finite number
  of steps, in which every $G_i$ is
  a finite set of $\calT$-matrix, $m$-vector, and $\calT$-weakly positive vector polynomials, such that
$$\text{$(\frS^*,\frT^*)$ satisfies $g$}\ \ \Longleftrightarrow
  \ \ \exists\hspace{0.06cm}i\in [L]\ \forall\hspace{0.02cm}{f \in G_{i}},
  \ \big[\text{$(\frS,\frT)$ satisfies $f$}\big].
$$
\end{lemm}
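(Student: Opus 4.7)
My plan is to imitate the proof of Lemma \ref{matrixp1} step-by-step, since the construction of $\frS^*$ from $(\frS,\frT)$ parallels that of $\frT^*$: every $\ww \in \frS^*$ is a Hadamard product of finitely many vectors from $\frS^\#$, and every vector in $\frS^\#$ is built, via one of the four cases in Appendix \ref{construction}, from a finite list of block-rank-$1$ matrices in $\frT$, a vector in $\frS$, and a tuple of positive-integer exponents. I will therefore parametrize a generic element $\ww^{[\pp]} \in \frS^*$ by a tuple $\pp$ recording (i) the finite collection of $\CC^{[i]}, \DD^{[j]}$ and $\ww^{[k]}$ chosen, (ii) the exponents $s_i, t_j, e_k$, and (iii) which of the four case templates is used for each Hadamard factor.

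Assuming $(\frS^*, \frT^*)$ satisfies $g$, I would expand $g(\ww^{[\pp]}) = 0$ and rearrange terms by factoring out the contribution of each $x \in [m]$. Exactly as in the proof of Lemma \ref{matrixp1}, this produces, for each fixed case-pattern, a polynomial identity of the shape
\[
\sum_{i\in [n_1]} \prod_j \bigl(f_i(\cdot)\bigr)^{k_j} \;=\; \sum_{i \in [n_2]} \prod_j \bigl(g_i(\cdot)\bigr)^{k_j}
\]
holding for all values of the exponents, where $f_i, g_i$ are monomials in the entries of $\balpha^{[\cdot]}, \bbeta^{[\cdot]}, \bgamma^{[\cdot]}, \bdelta^{[\cdot]}, \ww^{[\cdot]}$ that depend only on $g$ and on the case-pattern. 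Lemma \ref{huhulemma} then forces $n_1 = n_2 =: n$ and the existence of a permutation $\pi$ of $[n]$ such that $f_i$ and $g_{\pi(i)}$ agree when evaluated on each chosen matrix/vector. Applying Lemma \ref{forklore} across all finite subsets of $\frT$ and $\frS$ promotes $\pi$ to a single universal permutation, so the identities $f_i - g_{\pi(i)} \equiv 0$ must hold on the representation of every matrix in $\frT$ and on every positive (respectively $\calT$-weakly positive) vector in $\frS$. Since the entries of the representation of a block-rank-$1$ matrix $\CC$ are rational functions of the positive entries of $\CC$ with strictly positive denominators, each such identity can be cleared into a $\calT$-matrix polynomial; identities involving a vector from $\frS$ become either an $m$-vector polynomial or a $\calT$-weakly positive vector polynomial. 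Enumerating over the finitely many case-patterns and pairs $(n,\pi)$ then yields the desired list $G_1, \ldots, G_L$, and the converse direction (each $G_i$ being sufficient) is immediate by construction.

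The main obstacle is the case analysis in the definition of $\frS^\#$: unlike in Lemma \ref{matrixp1}, where every element of $\frT^*$ has essentially a single template, a vector in $\frS^*$ can mix factors of all four $\frS^\#$-types under the Hadamard product, so the enumeration of case-patterns is more delicate. I also need to treat the $r$-vector case of $g$ separately from the $\calP$-weakly positive case, since they are tested against different subclasses of vectors in $\frS^*$ -- positive vectors, built only from positive inputs in $\frS$, versus $\calP$-weakly positive vectors, which must involve at least one $\calT$-weakly positive ingredient or a case-(4) factor. Once these patterns are cleanly enumerated, however, the logical skeleton -- parametrize, invoke Lemma \ref{huhulemma}, universalize via Lemma \ref{forklore}, and rewrite back as polynomials -- is identical to that of Lemma \ref{matrixp1}, and finiteness of the enumeration is what ultimately gives the finite list $\{G_1, \ldots, G_L\}$.
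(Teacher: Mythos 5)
Your one-shot flattening has a genuine gap at the step where you claim that expanding $g(\ww^{[\pp]})=0$ ``exactly as in the proof of Lemma \ref{matrixp1}'' yields an identity $\sum_{i\in[n_1]}\prod_j\bigl(f_i(\cdot)\bigr)^{k_j}=\sum_{i\in[n_2]}\prod_j\bigl(g_i(\cdot)\bigr)^{k_j}$ with monomials $f_i,g_i$ depending only on $g$ and the case-pattern. An element of $\frS^*$ carries \emph{two nested levels} of exponents: the inner exponents inside each $\frS^\#$-factor (powers of the entries of $\balpha^{[j]},\bbeta^{[j]},\bgamma^{[j]}$) and the outer exponents of the Hadamard product $\ww=\bigl(\ww_1\bigr)^{s_1}\circ\cdots\circ\bigl(\ww_g\bigr)^{s_g}$. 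A monomial of $g$ evaluated at such a $\ww$ therefore has the shape of an outer power of an inner sum, e.g.\ $\prod_a\bigl(\sum_{x}\prod_j(\alpha^{[j]}_x)^{k^{(a)}_j}w_x\bigr)^{s_a}$. Expanding the outer power multinomially produces terms whose number and whose bases grow with $s_a$ (already $\bigl(\sum_x a_x^{k}\bigr)^{s}$ cannot be written as $\sum_{i\in[n]}b_i^{k}c_i^{s}$ with a fixed finite list of positive bases), so the hypotheses of Lemma \ref{huhulemma} --- fixed bases, identity for all exponent tuples --- are not met in the form you describe, and the universalization via Lemma \ref{forklore} cannot be launched. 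If instead you freeze the outer multiplicities at $1$ so that only inner exponents vary, the form is right for each fixed number of Hadamard factors, but then your ``case-patterns'' must range over all such numbers, an infinite enumeration, and the finiteness of $\{G_1,\ldots,G_L\}$ is lost.

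The paper resolves exactly this by a two-stage reduction that your sketch is missing: it factors the argument through $\frS^\#$. In the first stage the vectors of $\frS^\#$ are treated as atomic positive quantities; since a monomial is multiplicative under Hadamard products of powers, the condition $g(\ww^{[\pp]})=0$ for all outer exponents becomes an identity with fixed bases $f_i(\ww^{[j]})$, to which Lemma \ref{huhulemma} and Lemma \ref{forklore} apply directly, yielding finitely many candidate permutations $\pi$ and, for each, a finite set of $r$-vector and $\calP$-weakly positive vector polynomial conditions that every \emph{single} element of $\frS^\#$ must satisfy --- independently of how many factors appear in any product. The second stage then pushes each such condition on $\frS^\#$ down to $(\frS,\frT)$ by the same expansion as in Lemma \ref{matrixp1}; this is where your four case templates and the inner exponents are legitimately handled, and it produces the $\calT$-matrix, $m$-vector and $\calT$-weakly positive vector polynomials. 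Any correct repair of your argument that applies Lemma \ref{huhulemma} hierarchically --- first to the outer exponents with the $\frS^\#$ data frozen, then to the inner ones --- reproduces this factorization; as written, your proposal skips it, and the key expansion step does not go through.
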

\begin{proof}
We only prove the case when $g$ is $\calP$-weakly positive.
The other case can be proved similarly.\vspace{0.004cm}

Again, we assume that $g$ is not the zero polynomial.\vspace{0.01cm}

Recall that when defining $\frS^*$ in Appendix \ref{construction}, we first define $\frS^\#$ and
  $\frS^*$ is then the closure of $\frS^\#$:
$\ww$ is a $\calP$-weakly positive vector in $\frS^*$ if and only if there exist a finite and possibly empty
  subset of positive vectors $\{\ww^{[1]},\ldots,\ww^{[s]}\}\subseteq \frS^\#$ for some $s\ge 0$,
  a finite and nonempty subset of $\calP$-weakly positive vectors $\{\uu^{[1]},\ldots,\uu^{[t]}\}\subseteq \frS^\#$ for some $t\ge 1$, and positive integers $k_1,\ldots,k_s,\ell_1,\ldots,\ell_t$, such that
$$
\ww=\big(\ww^{[1]}\big)^{k_1}\circ\cdots\circ \big(\ww^{[s]}\big)^{k_s}\circ \big(\uu^{[1]}\big)^{\ell_1}
  \circ\ldots\circ \big(\uu^{[t]}\big)^{\ell_t}.
$$

To prove Lemma \ref{matrixp3}, we first construct a finite set $\{F_1,\ldots,F_M\}$,
  in which every $F_i$ is a finite set of
  $r$-vector and $\calP$-weakly positive vector polynomials, such that
\begin{equation}\label{ggh}
\text{$\frS^*$ satisfies $g$}\ \ \Longleftrightarrow\ \
  \exists\hspace{0.06cm}i\in [M]\ \forall\hspace{0.02cm}{f \in F_{i}},
  \ \big[\text{$\frS^\#$ satisfies $f$}\big].
\end{equation}
To this end, we let $\{\ww^{[1]},\ldots,\ww^{[s]}\}$ be a finite subset of positive vectors in
  $\frS^\#$; and $\{\uu^{[1]},\ldots,\uu^{[t]}\}$ be a finite subset of
  $\calP$-weakly positive vectors in $\frS^{\#}$, with $s\ge 0$ and $t\ge 1$. Then from any tuple
$$
\pp=\big(k_1,\ldots,k_s,\ell_1,\ldots,\ell_t\big),\ \ \ \ \ \ \text{where $k_i,\ell_i\ge 1$,}
$$
we get a $\calP$-weakly positive vector $\ww^{[\pp]}\in \frS^*$, where\vspace{0.02cm}
$$
\ww^{[\pp]}=\big(\ww^{[1]}\big)^{k_1}\circ\cdots\circ \big(\ww^{[s]}\big)^{k_s}
  \circ \big(\uu^{[1]}\big)^{\ell_1}\circ\cdots\circ \big(\uu^{[t]}\big)^{\ell_t}.
$$
Assume $\frS^*$ satisfies $g$, then we have $g(\ww^{[\pp]})=0$ for all $\pp$.
Combining these two equations, we have\vspace{0.14cm}
\begin{eqnarray*}
\sum_{i\in [n_1]} \left( \prod_{j\in [s]} \Big(f_{i}\big(\ww^{[j]}\big)\Big)^{k_j}\right)
  \left(\prod_{j\in [t]} \Big(f_{i}\big(\uu^{[j]}\big)\Big)^{\ell_j}\right) =
  \sum_{i\in [n_2]} \left( \prod_{j\in [s]} \Big(g_{i}\big(\ww^{[j]}\big)\Big)^{k_j}\right)
  \left(\prod_{j\in [t]} \Big(g_{i}\big(\uu^{[j]}\big)\Big)^{\ell_j}\right)\\[-1ex]
\end{eqnarray*}
for all $\pp$. In the equation, $f_i(\xx)$ and
  $g_i(\xx)$ are both monomials over $x_i$, $(i,i)\in \calP$.
Again, $f_i$ and $g_i$ only depend on the polynomial $g$ but do not depend on
  the choices of $\pp$ and the two subsets $\{\ww^{[1]},\ldots,\ww^{[s]}\}$ and
  $\{\uu^{[1]},\ldots,\uu^{[t]}\}$.

Because $g$ is not the zero polynomial, one of $n_1$ and $n_2$ must be positive,
  and we have the following two cases.
%If $n_1=n_2=0$, then $\frS^*$ satisfies $g$ and (\ref{ggh}) follows
%  by setting $L=1$ and $F_1$ to be the set consists of the zero polynomial only.
If $n_1\ne n_2$, then by Lemma \ref{huhulemma}, $\frS^*$ cannot satisfy
  $g$ and (\ref{ggh}) follows by setting $L=1$ and $F_1$ to be the set
  consists of the following $r$-vector polynomial: $f(\xx)=x_1$.\vspace{0.005cm}

Otherwise, we have $n_1=n_2>0$, which we denote by $n$.
It follows from Lemma \ref{huhulemma} and Lemma \ref{forklore} that
  if $\frS^*$ satisfies $g$, then there exists a universal permutation $\pi$ from $[n]$
  to itself such that for every positive and $\calP$-weakly positive vector $\ww\in \frS^\#$,
$$
f_i(\ww)=g_{\pi(i)}(\ww),\ \ \ \ \ \ \text{for all $i\in [n]$}.
$$
As a result, we can construct $F_\pi$ for each $\pi$, and $\frS^*$ satisfies $g$ if and only if
  $\frS^\#$ satisfies $F_\pi$ for some $\pi$.\vspace{0.01cm}

In the second step, we show that for any $r$-vector or $\calP$-weakly positive vector
  polynomial $f$, one can construct $\{F_1,\ldots,F_L\}$ in a finite number of steps, in which each
  $F_i$ is a finite set of $\calT$-matrix, $m$-vector and $\calT$-weakly positive vector polynomials,
such that, $\frS^\#$ satisfies $f$ if and only if $(\frS,\frT)$ satisfies
  $F_i$ for some $i\in [L]$.
The idea of the proof is very similar to the proof of Lemma \ref{matrixp1}
  so we omit it here.\vspace{0.015cm}

Lemma \ref{matrixp3}, for the case when $g$ is $\calP$-weakly positive,
  then follows by combing these two steps.\vspace{0.04cm}
\end{proof}\newpage

We can also prove the following lemma similarly.\vspace{0.05cm}

\begin{lemm}\label{matrixp5}
Let $g$ be an $r$-vector or a $\calP$-weakly positive vector polynomial.
Then one can construct a finite set $\{G_1,$ $\ldots,G_L\}$ in a finite number of steps, in which every $G_i$,
  $i\in [L]$, is a finite set of $\calP$-matrix, $\calP$-diagonal matrix,
  $r$-vector, and $\calP$-weakly positive vector polynomials, such that
$$\text{$(\frS',\frT')$ satisfies $g$}\ \ \Longleftrightarrow\ \
\exists\hspace{0.06cm}i\in [L]\ \forall\hspace{0.02cm}{f \in G_{i}},
  \ \big[\text{$(\frS^*,\frT^*)$ satisfies $f$}\big].$$
\end{lemm}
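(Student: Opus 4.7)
The plan is to mirror the proof of Lemma \ref{matrixp3}, adapted to how $\frS'$ is built from $(\frS^*,\frT^*)$ in the definition of $\genp$. First I dispatch the easy cases. If $g$ is the zero polynomial, take $L=1$ and $G_1=\{g\}$. If $g$ is an $r$-vector polynomial, observe that every ``new'' vector $\ww'\in\frS'$ (one produced by the product formula) is forced to be $\calP$-weakly positive because at least one $\calP$-diagonal matrix factor is required and its diagonal vanishes at every $i$ with $(i,i)\notin\calP$. Hence the positive vectors of $\frS'$ equal the positive vectors of $\frS^*$, so $\frS'$ satisfies $g$ iff $\frS^*$ does, and again $L=1$, $G_1=\{g\}$ suffices. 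So assume $g$ is a nonzero $\calP$-weakly positive vector polynomial.

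Decompose: $\frS'$ satisfies $g$ iff both (i) $\frS^*$ satisfies $g$ and (ii) every new vector satisfies $g$. Condition (i) is a single $\calP$-weakly positive vector polynomial constraint to be included in every $G_i$. For (ii), fix finite families $\{\CC^{[j]}\}_{j=1}^a\subseteq\frT^*$ of $\calP$-matrices, $\{\DD^{[j]}\}_{j=1}^b\subseteq\frT^*$ of $\calP$-diagonal matrices with $b\ge 1$, and $\{\ww^{[j]}\}_{j=1}^c\subseteq\frS^*$. Since $\frS^*$ is closed under Hadamard products, for every tuple $\pp=(k_1,\ldots,k_a,\ell_1,\ldots,\ell_b,e_1,\ldots,e_c)$ of positive integers the vector $\ww^{[\pp]}$ with
\[
w^{[\pp]}_i \;=\; \prod_{j=1}^c (w^{[j]}_i)^{e_j}\cdot \prod_{j=1}^a (C^{[j]}_{i,i})^{k_j}\cdot \prod_{j=1}^b (D^{[j]}_{i,i})^{\ell_j}
\]
lies in $\frS'$. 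Expanding $g(\ww^{[\pp]})=0$, grouping by monomial of $g$, handling integer coefficients by term multiplicities, and moving negative terms to the opposite side, the identity becomes
\[
\sum_{i\in[n_1]} \prod_{j=1}^c (f_i(\ww^{[j]}))^{e_j}\prod_{j=1}^a (f_i(\CC^{[j]}))^{k_j}\prod_{j=1}^b (f_i(\DD^{[j]}))^{\ell_j} = \sum_{i\in[n_2]} \prod_{j=1}^c (g_i(\ww^{[j]}))^{e_j}\prod_{j=1}^a (g_i(\CC^{[j]}))^{k_j}\prod_{j=1}^b (g_i(\DD^{[j]}))^{\ell_j},
\]
where the $f_i,g_i$ are fixed monomials in $\{y_k:(k,k)\in\calP\}$ depending only on $g$, and when evaluated on a matrix are interpreted on its diagonal.

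Applying Lemma \ref{huhulemma} with the $(a+b+c)$ free exponents $k_j,\ell_j,e_j$, the identity for all $\pp$ forces $n_1=n_2=:n$ and a permutation $\pi$ of $[n]$ (depending on the configuration) matching factors term by term. Lemma \ref{forklore} upgrades $\pi$ to a single universal permutation independent of the configuration. Each candidate universal $\pi$ yields finitely many polynomial identities: $f_i(\xx)-g_{\pi(i)}(\xx)=0$ on every $\calP$-matrix in $\frT^*$ (a $\calP$-matrix polynomial in the diagonal variables), the same on every $\calP$-diagonal matrix in $\frT^*$ (a $\calP$-diagonal matrix polynomial), and $f_i(\yy)-g_{\pi(i)}(\yy)=0$ on every vector in $\frS^*$ (a $\calP$-weakly positive vector polynomial). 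Enumerating $\pi$ over the finite symmetric group on $[n]$, together with one ``impossible'' $G_i$ (e.g.\ containing the $r$-vector polynomial $y_1$) for the case $n_1\ne n_2$, yields the finite list $\{G_1,\ldots,G_L\}$. The main obstacle is the bookkeeping in the rearrangement: one must verify that the monomials cleanly decouple into the $\ww^{[j]},\CC^{[j]},\DD^{[j]}$ factors and that each extracted equality lands in exactly one of the four allowed polynomial types. Crucially, the closedness of $\frS^*$ under Hadamard products is what lets us introduce the free exponents $e_j$ needed for Lemma \ref{huhulemma} to apply, since otherwise the factor $\prod_j (w^{[j]}_i)^{e_j}$ would be a fixed scalar rather than a free-exponent power.
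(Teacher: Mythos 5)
Your overall route is the one the paper intends: the paper omits the proof of this lemma (it only says it ``can be proved similarly''), and your argument is exactly the natural adaptation of Lemmas \ref{matrixp1} and \ref{matrixp3} --- exploit the product form of the new vectors in $\frS'$, use closedness of $\frS^*$ to make the exponents $e_j$ free, apply Lemma \ref{huhulemma} and then Lemma \ref{forklore} to extract one universal permutation, and enumerate permutations to obtain the finite list. The decoupling you worry about is in fact cleaner here than in Lemma \ref{matrixp1}, since the entries of a new vector are pure products rather than sums over $x\in B_i\cap A_j$, so each monomial of $g$ contributes exactly one term.

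Two corner cases in your write-up are wrong as stated, though both are repairable inside your framework. First, your dispatch of the $r$-vector case claims that the positive vectors of $\frS'$ coincide with those of $\frS^*$ because every new vector is $\calP$-weakly positive; this fails when $(i,i)\in\calP$ for \emph{every} $i\in[r]$, since then ``$\calP$-weakly positive'' coincides with ``positive,'' the new vectors are positive, and an $r$-vector polynomial must be tested on them as well --- so $G_1=\{g\}$ is not sufficient there. (In that subcase an $r$-vector polynomial is the same object as a $\calP$-weakly positive one, so the fix is simply to run your main argument instead of the shortcut.) Second, both your $n_1\ne n_2$ shortcut (output only the ``impossible'' set) and the extraction of a forced permutation when $n_1=n_2$ presuppose that at least one configuration exists, i.e.\ that $\frT^*$ contains at least one $\calP$-diagonal matrix. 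In Lemmas \ref{matrixp1} and \ref{matrixp3} the analogous nonemptiness is automatic ($\frT\ne\emptyset$, $\11\in\frS$, and $\frS^\#$ always contains a $\calP$-weakly positive vector), but here $\frT^*$ has no $\calP$-diagonal matrices exactly when $\frS$ has no $\calT$-weakly positive vector, which can happen at the very first level ($\frS_0=\{\11\}$ with $\diag(\calT_0)\ne[m]$). In that situation $\frS'=\frS^*$, your condition (ii) is vacuous, and the correct list must contain the set $\{g\}$ by itself; your list, in which every $G_i$ conjoins the $\pi$-identities with (i), would make the forward implication fail (and, with $n_1\ne n_2$, the ``impossible'' set alone is also wrong). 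The construction should therefore branch on the (checkable) existence of a $\calT$-weakly positive vector in $\frS$. Finally, a small bookkeeping point: the identities $f_i-g_{\pi(i)}=0$ must be imposed on \emph{all} vectors of $\frS^*$, so each should be placed in $G_i$ both as an $r$-vector polynomial and as a $\calP$-weakly positive vector polynomial, not only as the latter.
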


\subsection{Decidability of the Rank Condition}

Finally, we use these lemmas to prove Lemma \ref{decidabilitylemma},
  the decidability of the rank condition.\vspace{0.005cm}

We start with the following simple observation.
Let $F=\{f_1,\ldots,f_s\}$ be a finite set of matrix and vector polynomials.
For each $i\in [s]$, there is a finite set $\{F_{i,1}, \ldots,F_{i,L_i}\}$
  in which every $F_{i,j}$ is some finite set of polynomials,
  and we have the following statement:
$$
\text{$(\frS',\frT')$ satisfies $f_i$}\ \ \Longleftrightarrow\ \
  \exists\hspace{0.06cm}j\in [L_i]\ \forall\hspace{0.02cm}{f \in F_{i,j}},
  \ \big[\text{$(\frS,\frT)$ satisfies $f$}\big].
$$
Then the conjunction of these statements over $f_i\in F$, $i\in [s]$, can be expressed in the same form:
One can construct from $\{F_{i,j}:i\in [s],j\in [L_i]\}$ a new finite set $\{G_1,\ldots, G_L\}$ in which
  every $G_j$ is some finite set of polynomials, such that
$$
\forall\hspace{0.02cm}{f \in F},\ \big[\text{$(\frS',\frT')$ satisfies $f$}\big]
\ \ \Longleftrightarrow\ \
\exists\hspace{0.06cm}j\in [L]\ \forall\hspace{0.02cm}{g \in G_j},\ \big[\text{$(\frS,\frT)$ satisfies $g$}\big].$$

Now we prove Lemma \ref{decidabilitylemma}.
After $\ell\ge 0$ steps, we get a sequence of $\ell+1$ pairs
$$
(\frS_0,\frT_0),(\frS_1,\frT_1),\ldots,(\frS_\ell,\frT_\ell),
$$
which satisfies condition \textbf{(R$_\ell$)}.
Since we assumed that $\frS_0=\{\11\}$, every $\frS_i$ in the
  sequence is \emph{closed}.\vspace{0.005cm}

We show how to check whether every matrix $\DD\in \frT_{\ell+1}$,
  where
$$
(\frS_{\ell+1},\frT_{\ell+1})=\genp(\frS_{\ell},\frT_{\ell}),
$$
is block-rank-$1$ or not.
To this end we first check whether $\calP=\gen(\calT_\ell)$
  is consistent with a block pattern or not.
If not, then we conclude that $\frT_{\ell+1}$ does not
  satisfy the rank condition.\vspace{0.005cm}

Otherwise, we use $\calT_{\ell+1}$ to denote the block pattern
  consistent with $\calP$.
To check the rank condition, it is equivalent to
  check whether $\frT_{\ell+1}$ satisfies the following
  $\calP$-matrix polynomials:
$$
f_{i,i',j,j'}(\xx)=x_{i,j}\cdot x_{i',j'}-x_{i,j'}\cdot x_{i',j},\ \ \ \ \ \text{where
  $i,i'\in A_k$ and $j,j'\in B_k$ for some $k\in [r]$}
$$
and $(A_1,B_1),\ldots,(A_r,B_r)$ are the pairs in $\calT_{\ell+1}$.\vspace{0.006cm}

By Lemma \ref{matrixp1}-\ref{matrixp5}, we can construct a finite set $\{F_1,\ldots,F_L\}$
  in which every $F_i$ is a finite set of
\begin{center}
$\calT_{\ell}$-matrix, $m_\ell$-vector, and $\calT_{\ell}$-weakly positive vector
  polynomials
\end{center}such that
\begin{center}
$\frT_{\ell+1}$ satisfies the rank condition if and only if
  $(\frS_{\ell},\frT_{\ell})$ satisfies $F_i$ for some $i\in [L]$.
\end{center}
\newpage \noindent If $\ell=0$, then we are done, since $(\frS_0,\frT_0)$ is finite and
  we can check all the polynomials in $F_i$ for all $i\in [L]$ in a finite number of steps.
Otherwise, $\ell\ge 1$ and we can use Lemma \ref{matrixp1}--\ref{matrixp5} and the observation above
  to construct, for each $F_i$, a finite set $\{F_{i,1},\ldots,F_{i,L_i}\}$ in which
  every $F_{i,j}$ is a finite set of
\begin{center}
$\calT_{\ell-1}$-matrix,
  $m_{\ell-1}$-vector, and $\calT_{\ell-1}$-weakly positive vector polynomials
\end{center}such that
\begin{center}
$(\frS_{\ell},\frT_{\ell})$ satisfies
  $F_i$ if and only if $(\frS_{\ell-1},\frT_{\ell-1})$ satisfies $F_{i,j}$
  for some $j\in [L_i]$.
\end{center}

We repeat this process until we reach the finite pair $(\frS_0,\frT_0)$.
So the checking procedure looks like a~huge tree of depth $\ell+1$.
Every leaf $v$ of the tree is associated with a finite set $F_v$ of\vspace{0.06cm}
\begin{center}
$\calT_0$-matrix, $m_0$-vector, and $\calT_0$-weakly positive vector polynomials.\vspace{0.06cm}
\end{center}
Set $\frT_{\ell+1}$ satisfies the rank condition if and only if
  $(\frS_0,\frT_0)$ satisfies $F_v$ for some leaf $v$ of the tree.

\section{The Dichotomy for the $\{0,1\}$ Case}\label{Appen-cri}

We briefly describe the dichotomy criterion of Bulatov \cite{BulatovECCC}.

A finite {\it relational structure} ${\cal H}$
 over a finite set of
 relational symbols $R_1, R_2, \ldots, R_k$, each of which has
a fixed arity, is a  non-empty set $H$ together with an interpretation
of these relational symbols $R_1^{\cal H},$ $R_2^{\cal H}, \ldots,
 R_k^{\cal H}$ which are relations on $H$ of the corresponding
arities. For graph homomorphism (i.e., ${\cal H}$-coloring),
we start with a single binary
relation, namely the edge relation $E$ on $H$.
A relation $R$ is said to be {\it pp-definable} in ${\cal H}$,
if it can be expressed by the relations $R_i^{\cal H}$, $1 \le i \le k$,
together with  the binary {\sc Equality} predicate on ${\cal H}$,
 conjunction, and existential quantifiers.

A mapping $f$ from $H^m$ to $H$, for some $m \ge 1$,
is called a {\it polymorphism} of ${\cal H}$ if it satisfies the following
condition: For any relation $R \in \{R_1^{\cal H}, R_2^{\cal H}, \ldots,
R_k^{\cal H}\}$ of arity $n$, for any  $m$ tuples in $H^n$:
$$(a_{1,1}, \ldots, a_{1,n}), \ldots, (a_{m,1}, \ldots, a_{m,n}) \in
H^n,$$ if each
$(a_{i,1}, \ldots, a_{i,n}) \in R$ for all $i:1 \le i \le m$, then
$$\big(f(a_{1,1}, \ldots, a_{m,1}), \ldots, f(a_{1,n}, \ldots, a_{m,n})\big) \in R.$$
A relational structure ${\cal H}$ defines a {\it universal algebra}
${\bf A}$, where the universe is $H$ and the set of all
polymorphisms are its operations.
A theorem of Geiger then
%%% cite D. Geiger. Closed systems of function and predicates.
%%%Paciﬁc Journal of Mathematics, pages 95–100,
%%%1968.
states that
%(for any universal algebra of a finite universe)
a relation on $H$ is invariant under all
polymorphisms iff it is pp-definable.

A pp-definable binary equivalence relation is called a {\it congruence}.
A subalgebra is a unary pp-definable relation (subset)
together with the restrictions of the given relations.
One can easily define direct product algebras and
homomorphic images (quotient algebra modulo a congruence).
%One can also consider an arbitrary $n$-ary pp-definable relation $R$,
%which can be viewed as a subalgebra of the $n$-th  direct
%product of ${\cal H}$.
%
%%% there is actually a subtlty: not every congruence on A a subalgebra of H^n
%%% can be obtained by the restriction of a congruence on H^n restricted to A.
%
%Then a congruence on $R$ is a pp-definable binary equivalence relation on $R$
%(a relation of arity $2n$ on $H$).
A class of universal algebras closed under quotient, subalgebra
and direct product is called a {\it variety}.
The class of algebras that~are homomorphic images
of subalgebras of direct powers of some universal algebra
is called the variety gene\-rated by it (HSP theorem).

A Mal'tsev polymorphism $m$ is a ternary polymorphism satisfying
$m(x, x, y) = y$ and $m(x, y, y) = x$, for all $x, y \in H$.
Having a Mal'tsev polymorphism is a necessary condition for tractability.

Now start with the relational structure ${\cal H}$
with a single edge relation $E$, then add to it all the
unary relations $\{C_h \mid h \in H\}$, where $C_h = \{(h)\}$, we obtain
 a relational structure denoted by  ${\cal H}_{\rm id}$.
Then the polymorphisms of ${\cal H}_{\rm id}$ define
the universal algebra called the \emph{full idempotent reduct}.
These are the idempotent polymorphisms of ${\cal H}$:
$f(x, \ldots, x) = x$.\newpage

Congruences form a lattice. Given any two
%incomparable
congruences $\alpha$ and $\beta$, we
%i.e., neither $\alpha \subseteq \beta$ nor $\beta \subseteq \alpha$,
let $A_1, \ldots, A_s$ and $B_1, \ldots, B_t$ be
the equivalence classes of $\alpha$ and $\beta$ respectively,
then  the $s \times t$ matrix $M(\alpha, \beta)$
has $(i,j)$ entry $|A_i \cap B_j|$.

The tractability criterion of Bulatov can now be stated:
Start with ${\cal H}_{\rm id}$ and take the full idempotent reduct.
The \#CSP problem defined by ${\cal H}$
is tractable iff every finite algebra ${\bf A}$
 in the {\it variety} generated by this
full idempotent reduct satisfies the following condition:
For any two
%incomparable
congruences $\alpha$ and $\beta$
in  ${\bf A}$, the ${\rm rank} (M(\alpha, \beta))$
is equal to the number of equivalence classes of $\alpha
\vee \beta$, the join congruence of $\alpha$ and $\beta$.

%%% i don't need to talk about incomparable congruences (as in his paper)
%%% that was only needed to talk about #P-hardness.
%%% if alpha \subseteq \beta. then join is  \beta
%%% and clearly the rank = number of classes in \beta.

The reason it is difficult to show that this dichotomy criterion is
  {decidable} is because it talks about \emph{all} finite algebras  ${\bf A}$
in the {\it variety} generated by the full idempotent reduct
of ${\cal H}_{\rm id}$.  This variety is infinite, containing
arbitrarily large arities over $H$.
Thus, even though in graph homomorphism we are given only a binary
relation, the process of forming the variety produces
arbitrarily large arities,
and this criterion is a condition involving infinitely many relations.

\end{document}